\theoremstyle{plain}
\newtheorem{theorem}{Theorem}[section]
\newtheorem{assumption}{Assumption}
\theoremstyle{definition}
\theoremstyle{remark}
\newlist{Properties}{enumerate}{2}
\setlist[Properties]{label=Property \arabic*.,itemindent=*}
\begin{document}

\articletype{Preprint accepted to International Journal of Control}

\title{Modelling and Control of a Knuckle Boom Crane \footnote[1]{This research has been funded by The Brussels Institute for Research and Innovation (INNOVIRIS) of the Brussels Region through the Applied PHD grant: Brickiebots - Robotic Bricklayer: a multi-robot system for sand-lime blocks masonry (réf :  19-PHD-12)}}

\author{
\name{M. Ambrosino\textsuperscript{a}\thanks{Corresponding author: Michele.Ambrosino@ulb.ac.be} and E. Garone\textsuperscript{a}}
\affil{\textsuperscript{a}Service d’Automatique et d’Analyse des Systèmes, Université Libre de Bruxelles, Brussels, Belgium. }
}

\maketitle

\begin{abstract}
Cranes come in various sizes and designs to perform different tasks. Depending on their dynamic properties, they can be classified as gantry cranes and rotary cranes. In this paper we will focus on the so called 'knuckle boom' cranes which are among the most common types of rotary cranes. Compared with the other kinds of cranes (e.g. boom cranes, tower cranes, overhead cranes, etc), the study of knuckle cranes is still at an early stage and very few control strategies for this kind of crane have been proposed in the literature. Although fairly simple mechanically, from the control viewpoint the knuckle cranes present several challenges. A first result of this paper is to present for the first time a complete mathematical model for this kind of crane where it is possible to control the three rotations of the crane (known as luff, slew, and jib movement), and the cable length. The only simplifying assumption of the model is that the cable is considered rigid. On the basis of this model, we propose a nonlinear control law based on energy considerations which is able to perform position control of the crane while actively damping the oscillations of the load. The corresponding stability and convergence analysis is carefully proved using the \textit{LaSalle’s invariance principle}. The effectiveness of the proposed control approach has been tested in simulation with realistic physical parameters and in the presence of model mismatch. 
\end{abstract}

\begin{keywords}
Knuckle Cranes; Robotics; Oscillation Reduction; Underactuated Systems; Nonlinear Control
\end{keywords}

\section{Introduction}\label{sec:intro}

To handle heavy loads and materials, different types of cranes are widely used in different industrial fields. Although the tasks performed by the cranes are fairly simple (\textit{i.e.} moving or lifting heavy materials), it is quite challenging for a human operator to achieve accurate positioning and swing elimination simultaneously. Moreover, there are many practical problems associated with manual operation, such as low efficiency, poor safety, etc. Therefore, the control problem of cranes has been studied for a long time. In this paper we focused on the modelling and control of knuckle cranes, that are among the most common types of rotary crane. 

\medskip

Knuckle cranes are special kinds of boom cranes. Compared with other cranes, this type of crane has higher flexibility and lower energy consumption \citep{ref1} and for this reason it is probably the most common kind of crane in heavy industry. Knuckle cranes are special boom cranes that have an auxiliary jib connected to the boom to enhance the maneuverability and increase the workspace of the crane. In this paper we present for the first time a complete mathematical model for this kind of crane which takes into account not only the three main rotations (\textit{i.e.} luff, slew, and jib movement) but also the cable dynamic and the payload oscillations.

\medskip

As all cranes,  knuckle cranes are nonlinear systems with underactuated dynamics. The problem of controlling underactuated systems has been a topic of great interest in different industrial fields \citep{ref2,u1,u2,u3,u4,u5}. The condition of underactuation refers to a system 
having fewer actuators (input variables) than degrees of freedom (number of independent variables that define the system configuration). This implies that some of the states of the system cannot be directly commanded, which highly complicates the design of control algorithms. In particular for a knuckle crane (as for any crane) the non-actuated variable are the swing angles of the payload, whereas the four actuated variables are the three main rotation (\textit{i.e.} luff, slew, and jib movements) and the length of the cable. While in the past few years several solutions have been proposed for the control of boom cranes control, the control problem of a knuckle crane is still an opening and challenging problem 

\medskip


Generally speaking, control schemes for boom cranes present in literature can be roughly categorised into open loop and closed loop techniques \citep{ref5}. 

\medskip

Input shaping is one of the most used open loop techniques, that can be applied in real time, mainly for the control of the oscillations of the payload. \citep{in1} proposes a input-shaping control for a in-scale boom crane to reduce the residual oscillations. \citep{in2} discusses an input-shaping strategy that minimizes the oscillations of the payload caused by an external disturbance during the luff command. Open-loop trajectory planning methods such as S-curve trajectories approaches are proposed in \citep{46,Uchiyama} to achieve anti-sway control for the payload. Open loop control schemes have been widely used because they are easy to implement. However, open loop control schemes are sensitive to external disturbances and to model mismatch. Therefore, several closed loop control methods have been proposed to increase robustness and achieve better performance in presence of perturbations. As closed loop technique, the Linear Quadratic Regulator (LQR) is one the most common techniques applied to cranes \citep{ref5}. In \citep{sun2017} the authors provided a LQR control approach for boom crane considering a fixed cable length. Another closed-loop control scheme widely used for boom crane is the Proportional Integral Derivative (PID) control. In \citep{yang2017} the authors designed a Proportional-Derivative (PD) controller with gravity compensation based on the nonlinear model of the boom crane. In \citep{PFLboom} is shown a \textit{partial feedback linearization} (PFL) based on detailed mathematical model of boom cranes. In \citep{pole} is shown a pole placement approach for a linearized model of boom crane. In \citep{ref8} a model predictive control (MPC) approach is used for a boom crane in order to reduce the swing angles as much as possible. A constraint control based on the theory of the \textit{Explicit Reference Governor} (ERG), is discussed in \citep{ERGboom}, where the authors focused on controlling a simplified model of boom crane in the presence of constraints. A second-order sliding mode control law is proposed in \citep{slid} for trajectory tracking and anti-sway control.  Control schemes based on a combination of open and closed loop techniques have also been proposed. In \citep{51} a combination of input shaping and feedback control is proposed to reduce the effect of gusts of wind. 

\medskip

Compared with boom cranes, the state of the art of knuckle cranes control is much less developed. In \citep{ref12} the authors focus on controlling mobile electro-hydraulic proportional valves to move the crane to a desired position. In \citep{ref13} the authors solve the problem of controlling knuckle crane through the inverse kinematics without take into account the dynamic of the cable and the payload. In \citep{ref14} an anti-sway control is shown which is performed by simplifying the dynamic model and assuming that the tip of the crane can be controlled directly. To the best of our knowledge, no research has been carried out to derive a detailed mathematical model of a knuckle cranes and to develop a control strategy taking into account the nonlinear nature of this type of crane.

\medskip

The main contributions of this paper are:

\begin{enumerate}
    \item The development of a complete mathematical model for a knuckle crane which takes into account all of the degrees of freedom (DoFs) of this type of cranes (e.g. the three rotations, the length of the rope and the payload swing angles). The only simplifying assumptions of the proposed model is that the cable is considered rigid. 
    \item The design of a novel control strategy designed directly on the nonlinear model and for which a detailed proof of asymptotic stability is provided making use of the \textit{LaSalle’s invariance principle}.
    \item Different simulation scenarios are presented to demonstrate that the proposed control method is able to perform position control of the crane while actively damping the oscillations of the load, even in presence of model mismatch.  
\end{enumerate}

\medskip

The rest of this paper is organized as follows. In Section 2,
the dynamic model of a knuckle crane and the control objectives are provided. In Section 3, the proposed
controller is designed, and in Section 4 the corresponding stability analysis is provided in detail. Section 5 shows the results of the simulations for the proposed control strategy.

\section{Dynamic Model}\label{sec:prob}

A schematic view of a knuckle crane is depiced in Fig.\ref{fig:crane}. The knuckle crane consists of a first boom of length $l_b$ and mass $m_b$ connected to the tower with one revolute joint. The auxiliary jib of length $l_j$ and mass $m_j$ is linked to the first boom by a revolute joint. For the sake of simplicity, in this paper all the links and joints are considered to be rigid. The cable is supposed to be massless and rigid, thus the lifting mechanism can be described as a prismatic joint. The payload of mass $m$ is described as a lumped mass. The two swing angles of the payload are represented in the Fig.\ref{fig:pay}.

\begin{figure}[ht!]
\centering

\subfloat[Model of a knuckle crane]{%
\resizebox*{5cm}{!}{\includegraphics{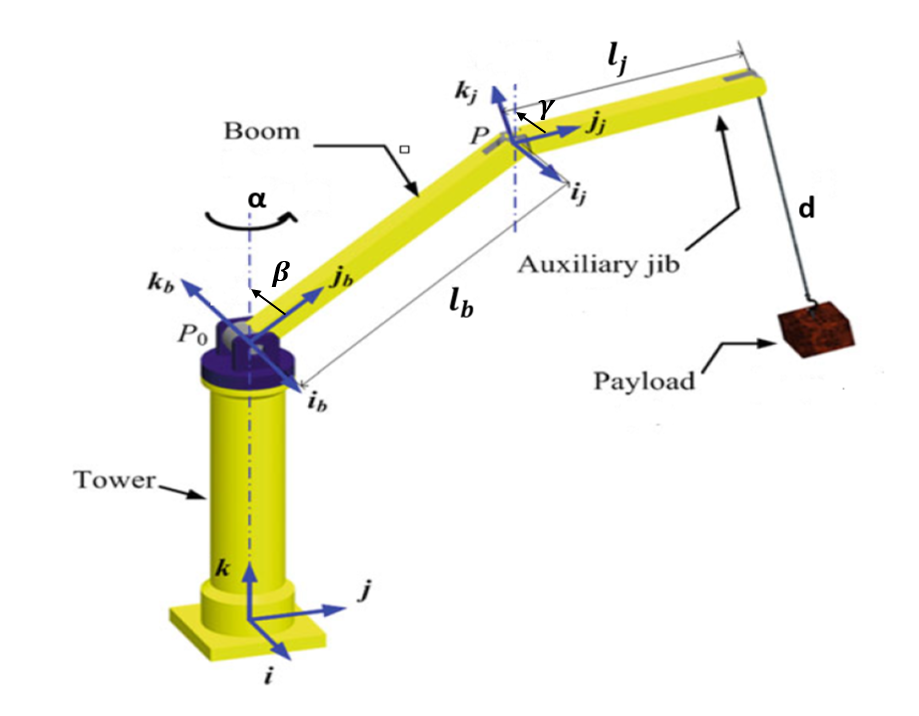}}\label{fig:crane}}\hspace{5pt}
\subfloat[Payload swing angles.]{%
\resizebox*{5cm}{!}{\includegraphics{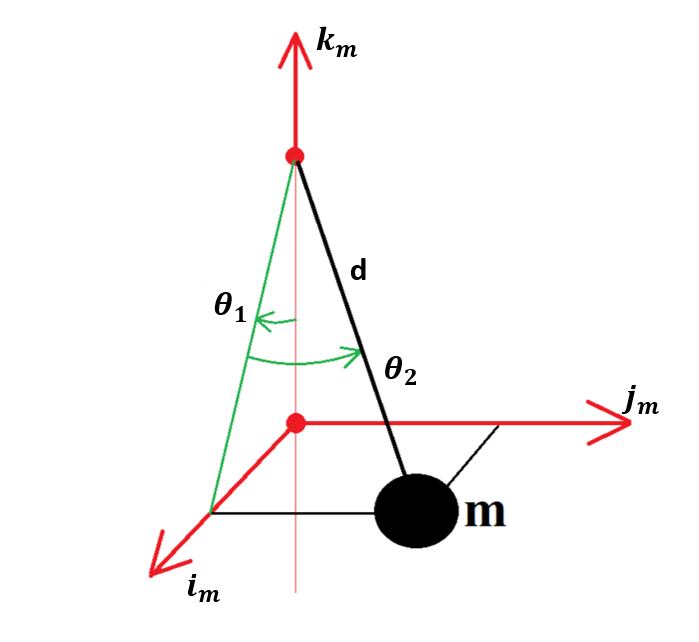}} \label{fig:pay}}
\caption{Schematic view of a knuckle crane}

\end{figure}

\medskip

The configuration of the crane is conveniently described by six generalized coordinates. In particular: $\alpha$ is the slew angle of
the tower, $\beta$ is the luff angle of the boom, $\gamma$ is the luff angle of the jib, \textit{d} is the length of
the rope, $\theta_1$ is the tangential pendulation mainly due to the motion of the tower, and $\theta_2$ is the radial sway mainly due to the motion of the boom. 

\medskip

The dynamic model of the knuckle crane can be obtained using the \textit{Euler-Lagrange method}. Firstly, we need to express the system kinematic energy $T(t)$, which consists of three parts: the boom kinematic energy $T_b(t)$, the jib kinematic energy $T_j(t)$, and the payload kinematic energy $T_p(t)$. Then, we define the system potential energy $U(t)$ consisting of the boom potential energy $U_b(t)$, jib potential energy $U_j(t)$, and the payload potential energy $U_p(t)$. The \textit{Lagrangian} of the knuckle crane is
\begin{equation}\label{eq:lag}
\begin{split}
 \mathcal{L}(t) = T(t) - U(t)  = T_b(t) + T_j(t) + T_p(t) - U_b(t) - U_j(t) - U_p(t),
\end{split}
\end{equation}

where:
\begin{equation}\label{eq:T}
\begin{split}
T(t) = {1\over{8}}(m_B((l_BC_{\beta}S_{\alpha}\dot{\alpha} + l_BC_{\alpha}S_{\beta}\dot{\beta})^2
+ (l_BC_{\alpha}C_{\beta}\dot{\alpha} - l_BS_{\alpha}S_{\beta}\dot{\beta})^2 
+ l_B^2C_{\beta}^2\dot{\beta}^2)) \\
+ {1\over{2}}(m_J((l_BC_{\beta}S_{\alpha}\dot{\alpha}
+ l_BC_{\alpha}S_{\beta}\dot{\beta}+  {1\over{2}}(l_JC_{\gamma}S_{\alpha}\dot{\alpha}) 
+ {1\over{2}}(l_JC_{\alpha}S_{\gamma}\dot{\gamma}))^2 + (l_BC_{\alpha}C_{\beta}\dot{\alpha} \\
+ {1\over{2}}(l_JC_{\alpha}C_{\gamma}\dot{\alpha}) - l_BS_{\alpha}S_{\beta}\dot{\beta} 
- {1\over{2}}(l_JS_{\alpha}S_{\gamma}\dot{\gamma}))^2 + (l_BC_{\beta}\dot{\beta} 
+ {1\over{2}}(l_JC_{\gamma}\dot{\gamma}))^2)) \\ + {1\over{2}}(m((C_{\theta_2}S_{\alpha}S_{\theta_1}\dot{d} 
- C_{\alpha}S_{\theta_2}\dot{d} + l_BC_{\beta}S_{\alpha}\dot{\alpha} + l_BC_{\alpha}S_{\beta}\dot{\beta} +  l_JC_{\gamma}S_{\alpha}\dot{\alpha} + l_JC_{\alpha}S_{\gamma}\dot{\gamma} \\ - C_{\alpha}C_{\theta_2}\dot{\theta}_2d + S_{\alpha}S_{\theta_2}\dot{\alpha}d + C_{\alpha}C_{\theta_2}S_{\theta_1}\dot{\alpha}d + C_{\theta_1}C_{\theta_2}S_{\alpha}\dot{\theta}_1d - S_{\alpha}S_{\theta_1}S_{\theta_2}\dot{\theta}_2d)^2 + (l_BC_{\beta}\dot{\beta}\\ - C_{\theta_1}C_{\theta_2}\dot{d} + l_JC_{\gamma}\dot{\gamma} + C_{\theta_2}S_{\theta_1}\dot{\theta}_1d + C_{\theta_1}S_{\theta_2}\dot{\theta}_2d)^2 + (S_{\alpha}S_{\theta_2}\dot{d}  + l_BC_{\alpha}C_{\beta}\dot{\alpha} + l_JC_{\alpha}C_{\gamma}\dot{\alpha} \\ - l_BS_{\alpha}S_{\beta}\dot{\beta} - l_JS_{\alpha}S_{\gamma}\dot{\gamma} + C_{\alpha}S_{\theta_2}\dot{\alpha}d + C_{\theta_2}S_{\alpha}\dot{\theta}_2d + C_{\alpha}C_{\theta_2}S_{\theta_1}\dot{d} + C_{\alpha}C_{\theta_1}C_{\theta_2}\dot{\theta}_1d \\ - C_{\theta_2}S_{\alpha}S_{\theta_1}\dot{\alpha}d - C_{\alpha}S_{\theta_1}S_{\theta_2}\dot{\theta}_2d)^2)) + {1\over{2}}I_{tot}\dot{\alpha}^2 + {1\over{2}}I_B\dot{\beta}^2 + {1\over{2}}I_J\dot{\gamma}^2,
\end{split}
\end{equation}
\begin{equation}\label{eq:U}
\begin{split}
U(t) = gm(l_BS_{\beta} + l_JS_{\gamma} - C_{\theta_1}C_{\theta_2}d) 
+ gm_J(l_BS_{\beta} + {1\over{2}}l_JS_{\gamma}) + {1\over{2}}gl_Bm_BS_{\beta},
\end{split}
\end{equation}

and where 

\begin{align*}
S_{\alpha} \triangleq \sin(\alpha),\quad S_{\beta} \triangleq sin(\beta),\quad S_{\gamma} \triangleq \sin(\gamma),\quad S_{\theta_1} \triangleq \sin(\theta_1),\quad S_{\theta_2} \triangleq \sin(\theta_2),\\ C_{\alpha} \triangleq \cos(\alpha),\quad C_{\beta} \triangleq \cos(\beta),\quad C_{\gamma} \triangleq \cos(\gamma),\quad C_{\theta_1} \triangleq \cos(\theta_1),\quad C_{\theta_2} \triangleq \cos(\theta_2).
\end{align*}

The meaning of the system physical parameters in (\ref{eq:T})-(\ref{eq:U}) are reported in
Tab.~\ref{tb:par}. 

\begin{table}[hb]
\begin{center}
\caption{Parameters of the knuckle crane system}\label{tb:par}
\begin{tabular}{cccc}
Parameters & Physical & Units \\\hline
$m_b$ & Boom mass & $\mathrm{kg}$ \\ 
$m_j$ & Jib mass & $\mathrm{kg}$ \\
m & Payload mass & $\mathrm{kg}$ \\ 
$l_b$ & Boom length & $\mathrm{m}$ \\ 
$l_j$ & jib length & $\mathrm{m}$ \\  
$I_{tot}$ & Tower inertia moment & $\mathrm{kg\cdot m^2}$ \\
$I_b$ & Boom inertia moment & $\mathrm{kg\cdot m^2}$ \\
$I_j$ & Jib inertia moment & $\mathrm{kg\cdot m^2}$ \\\hline
\end{tabular}
\end{center}
\end{table}

Moreover, to simplify the next analysis we introduce the auxiliary variables
\begin{itemize}
    \item $A_1 = l_B^2m + (l_B^2m_B)/4+l_B^2 m_J$;
    \item $A_2 = l_J^2m + (l_J^2 m_J)/4$;
    \item $A_3 = 2l_Bl_J m+ l_Bl_J m_J$;
    \item $A_4 = 2l_B m$;
    \item $A_5 = 2l_J m$.
\end{itemize}

The equations of the motion of the crane are derived using the \textit{Euler-Lagrange equation}
\begin{equation}
\frac{d}{dt}\left(\frac{\partial  \mathcal{L}(q, \dot{q})}{\partial{\dot{q}}}\right) -  \frac{\partial  \mathcal{L}(q, \dot{q})}{\partial{q}} = \zeta, 
\end{equation}

where q = $[\alpha, \beta, \gamma, d, \theta_1, \theta_2]^T \in{\mathbb{R}^6}$ is the system state vector, and $ \zeta  = [u_1, u_2, u_3, u_4, 0, 0]^T \in{\mathbb{R}^6}$ is the control input vector. The dynamic model of the knuckle crane can be described by the following equations

\begin{equation}\label{eq:alp}
    \begin{split}
    I_{tot}\ddot{\alpha} + A_1\ddot{\alpha}C_{\beta}^2 + A_2\ddot{\alpha}C_{\gamma}^2 + d^2\ddot{\alpha}m  + 2d^2\dot{\alpha}\dot{\theta}_1mC_{\theta_1}C_{\theta_2}^2S_{\theta_1} +  2d^2\dot{\alpha}\dot{\theta}_2mC_{\theta_1}^2C_{\theta_2}S_{\theta_2} - \\ -  A_1\dot{\alpha}\dot{\beta}S_{2\beta} - A_2\dot{\alpha}\dot{\gamma}S_{2\gamma} + A_3\ddot{\alpha}C_{\beta}C_{\gamma} - d^2\ddot{\theta_2}mS_{\theta_1} + \\ + 2\dot{d}d\dot{\alpha}m  +  2A_4d\ddot{\alpha}C_{\beta}S_{\theta_2} + 2A_5d\ddot{\alpha}C_{\gamma}S_{\theta_2} + 2A_4\dot{d}\dot{\alpha}C_{\beta}S_{\theta_2} + \\ +  2A_5\dot{d}\dot{\alpha}C_{\gamma}S_{\theta_2} -  A_3\dot{\alpha}\dot{\beta}C_{\gamma}S_{\beta}  - A_3\dot{\alpha}\dot{\gamma}C_{\beta}S_{\gamma} -  2d^2\dot{\theta}_1\dot{\theta}_2mC_{\theta_1} - \\ - d^2\ddot{\alpha}mC_{\theta_1}^2C_{\theta_2}^2 + A_4\ddot{d}C_{\beta}C_{\theta_2}S_{\theta_1} +  A_5\ddot{d}C_{\gamma}C_{\theta_2}S_{\theta_1} - 2\dot{d}d\dot{\theta}_2mS_{\theta_1} + \\ + A_4d\ddot{\beta}C_{\theta_2}S_{\beta}S_{\theta_1} - A_4d\ddot{\theta_2}C_{\beta}S_{\theta_1}S_{\theta_2} + A_5d\ddot{\gamma}C_{\theta_2}S_{\gamma}S_{\theta_1} - A_5d\ddot{\theta_2}C_{\gamma}S_{\theta_1}S_{\theta_2} - \\ -  2A_4\dot{d}\dot{\theta}_2C_{\beta}S_{\theta_1}S_{\theta_2} -  2A_5\dot{d}\dot{\theta}_2C_{\gamma}S_{\theta_1}S_{\theta_2} + A_4d\dot{\beta}^2C_{\beta}C_{\theta_2}S_{\theta_1}  - A_4d\dot{\theta}_1^2C_{\beta}C_{\theta_2}S_{\theta_1} - \\ - A_4d\dot{\theta}_2^2C_{\beta}C_{\theta_2}S_{\theta_1} +  A_5d\dot{\gamma}^2C_{\gamma}C_{\theta_2}S_{\theta_1} - A_5d\dot{\theta}_1^2C_{\gamma}C_{\theta_2}S_{\theta_1} - A_5d\dot{\theta}_2^2C_{\gamma}C_{\theta_2}S_{\theta_1} - \\ - 2\dot{d}d\dot{\alpha}mC_{\theta_1}^2C_{\theta_2}^2 + 2d^2\dot{\theta}_1\dot{\theta}_2mC_{\theta_1}C_{\theta_2}^2 + 2A_4d\dot{\alpha}\dot{\theta}_2C_{\beta}C_{\theta_2} + 2A_5d\dot{\alpha}\dot{\theta}_2C_{\gamma}C_{\theta_2} + \\ + d^2\ddot{\theta_1}mC_{\theta_1}C_{\theta_2}S_{\theta_2} -  2A_4d\dot{\alpha}\dot{\beta}S_{\beta}S_{\theta_2} - 2A_5d\dot{\alpha}\dot{\gamma}S_{\gamma}S_{\theta_2} + A_4d\ddot{\theta_1}C_{\beta}C_{\theta_1}C_{\theta_2} + \\ + A_5d\ddot{\theta_1}C_{\gamma}C_{\theta_1}C_{\theta_2} + 2A_4\dot{d}\dot{\theta}_1C_{\beta}C_{\theta_1}C_{\theta_2}  + 2A_5\dot{d}\dot{\theta}_1C_{\gamma}C_{\theta_1}C_{\theta_2} - d^2\dot{\theta}_1^2mC_{\theta_2}S_{\theta_1}S_{\theta_2} - \\ - 2A_4d\dot{\theta}_1\dot{\theta}_2C_{\beta}C_{\theta_1}S_{\theta_2} - 2A_5d\dot{\theta}_1\dot{\theta}_2C_{\gamma}C_{\theta_1}S_{\theta_2} + 2\dot{d}d\dot{\theta}_1mC_{\theta_1}C_{\theta_2}S_{\theta_2} = u_1,
    \end{split}
\end{equation}

\begin{equation}\label{eq:bet}
    \begin{split}
        A_1\ddot{\beta} + I_B\ddot{\beta} + (A_1\dot{\alpha}^2S_{2\beta})/2 + (A_3\dot{\alpha}^2C_{\gamma}S_{\beta})/2 - (A_3\dot{\gamma}^2C_{\beta}S_{\gamma})/2 +  (A_3\dot{\gamma}^2C_{\gamma}S_{\beta})/2 \\ + gl_BmC_{\beta} + (gl_Bm_BC_{\beta})/2 + gl_Bm_JC_{\beta} + \\ (A_3\ddot{\gamma}C_{\beta}C_{\gamma})/2 - A_4\ddot{d}S_{\beta}S_{\theta_2} + (A_3\ddot{\gamma}S_{\beta}S_{\gamma})/2 - A_4d\ddot{\theta_2}C_{\theta_2}S_{\beta} - \\ 2A_4\dot{d}\dot{\theta}_2C_{\theta_2}S_{\beta} -  A_4\ddot{d}C_{\beta}C_{\theta_1}C_{\theta_2} + A_4d\dot{\alpha}^2S_{\beta}S_{\theta_2} + A_4d\dot{\theta}_2^2S_{\beta}S_{\theta_2} \\+ A_4d\ddot{\theta_1}C_{\beta}C_{\theta_2}S_{\theta_1} + A_4d\ddot{\theta_2}C_{\beta}C_{\theta_1}S_{\theta_2} + \\ 2A_4\dot{d}\dot{\theta}_1C_{\beta}C_{\theta_2}S_{\theta_1} + 2A_4\dot{d}\dot{\theta}_2C_{\beta}C_{\theta_1}S_{\theta_2} + A_4d\ddot{\alpha}C_{\theta_2}S_{\beta}S_{\theta_1} \\+ 2A_4\dot{d}\dot{\alpha}C_{\theta_2}S_{\beta}S_{\theta_1} + A_4d\dot{\theta}_1^2C_{\beta}C_{\theta_1}C_{\theta_2} + A_4d\dot{\theta}_2^2C_{\beta}C_{\theta_1}C_{\theta_2} + \\ 2A_4d\dot{\alpha}\dot{\theta}_1C_{\theta_1}C_{\theta_2}S_{\beta} - 2A_4d\dot{\theta}_1\dot{\theta}_2C_{\beta}S_{\theta_1}S_{\theta_2} - 2A_4d\dot{\alpha}\dot{\theta}_2S_{\beta}S_{\theta_1}S_{\theta_2} = u_2,
    \end{split}
\end{equation}

\begin{equation}\label{eq:gam}
    \begin{split}
        A_2\ddot{\gamma}  + I_J\ddot{\gamma} + (A_2\dot{\alpha}^2S_{2\gamma})/2 + (A_3\dot{\alpha}^2C_{\beta}S_{\gamma})/2 + (A_3\dot{\beta}^2C_{\beta}S_{\gamma})/2 -\\- (A_3\dot{\beta}^2C_{\gamma}S_{\beta})/2 + gl_JmC_{\gamma} + (gl_Jm_JC_{\gamma})/2 +  (A_3\ddot{\beta}C_{\beta}C_{\gamma})/2 -\\- A_5\ddot{d}S_{\gamma}S_{\theta_2} + (A_3\ddot{\beta}S_{\beta}S_{\gamma})/2 - A_5d\ddot{\theta_2}C_{\theta_2}S_{\gamma} - 2A_5\dot{d}\dot{\theta}_2C_{\theta_2}S_{\gamma} - A_5\ddot{d}C_{\gamma}C_{\theta_1}C_{\theta_2} + \\ + A_5d\dot{\alpha}^2S_{\gamma}S_{\theta_2} +  A_5d\dot{\theta}_2^2S_{\gamma}S_{\theta_2} + A_5d\ddot{\theta_1}C_{\gamma}C_{\theta_2}S_{\theta_1} + A_5d\ddot{\theta_2}C_{\gamma}C_{\theta_1}S_{\theta_2} + \\ + 2A_5\dot{d}\dot{\theta}_1C_{\gamma}C_{\theta_2}S_{\theta_1} +  2A_5\dot{d}\dot{\theta}_2C_{\gamma}C_{\theta_1}S_{\theta_2} + A_5d\ddot{\alpha}C_{\theta_2}S_{\gamma}S_{\theta_1} + 2A_5\dot{d}\dot{\alpha}C_{\theta_2}S_{\gamma}S_{\theta_1} + \\ + A_5d\dot{\theta}_1^2C_{\gamma}C_{\theta_1}C_{\theta_2} + A_5d\dot{\theta}_2^2C_{\gamma}C_{\theta_1}C_{\theta_2} + 2A_5d\dot{\alpha}\dot{\theta}_1C_{\theta_1}C_{\theta_2}S_{\gamma} - \\ - 2A_5d\dot{\theta}_1\dot{\theta}_2C_{\gamma}S_{\theta_1}S_{\theta_2} - 2A_5d\dot{\alpha}\dot{\theta}_2S_{\gamma}S_{\theta_1}S_{\theta_2} = u_3,
    \end{split}
\end{equation}

\begin{equation}\label{eq:len}
    \begin{split}
        \ddot{d}m - d\dot{\alpha}^2m - d\dot{\theta}_2^2m - A_4\dot{\alpha}^2C_{\beta}S_{\theta_2} - A_4\dot{\beta}^2C_{\beta}S_{\theta_2} - A_5\dot{\alpha}^2C_{\gamma}S_{\theta_2} -
        \\ - A_5\dot{\gamma}^2C_{\gamma}S_{\theta_2} - d\dot{\theta}_1^2mC_{\theta_2}^2 - A_4\ddot{\beta}S_{\beta}S_{\theta_2} - A_5\ddot{\gamma}S_{\gamma}S_{\theta_2} - \\ - gmC_{\theta_1}C_{\theta_2} + d\dot{\alpha}^2mC_{\theta_1}^2C_{\theta_2}^2  - A_4\ddot{\beta}C_{\beta}C_{\theta_1}C_{\theta_2} - A_5\ddot{\gamma}C_{\gamma}C_{\theta_1}C_{\theta_2} + \\ + A_4\ddot{\alpha}C_{\beta}C_{\theta_2}S_{\theta_1} + A_5\ddot{\alpha}C_{\gamma}C_{\theta_2}S_{\theta_1} + 2d\dot{\alpha}\dot{\theta}_2mS_{\theta_1} + A_4\dot{\beta}^2C_{\theta_1}C_{\theta_2}S_{\beta} + \\ + A_5\dot{\gamma}^2C_{\theta_1}C_{\theta_2}S_{\gamma} - 2A_4\dot{\alpha}\dot{\beta}C_{\theta_2}S_{\beta}S_{\theta_1} - 2A_5\dot{\alpha}\dot{\gamma}C_{\theta_2}S_{\gamma}S_{\theta_1} - 2d\dot{\alpha}\dot{\theta}_1mC_{\theta_1}C_{\theta_2}S_{\theta_2} = u_4,
    \end{split}
\end{equation}

\begin{equation}\label{eq:th1}
\begin{split}
    dC_{\theta_2}(gmS_{\theta_1} - A_4\dot{\beta}^2S_{\beta}S_{\theta_1} - A_5\dot{\gamma}^2S_{\gamma}S_{\theta_1} + d\ddot{\theta_1}mC_{\theta_2} + 2\dot{d}\dot{\theta}_1mC_{\theta_2}  + \\ + A_4\ddot{\alpha}C_{\beta}C_{\theta_1} + A_5\ddot{\alpha}C_{\gamma}C_{\theta_1} + A_4\ddot{\beta}C_{\beta}S_{\theta_1} + A_5\ddot{\gamma}C_{\gamma}S_{\theta_1} - 2A_4\dot{\alpha}\dot{\beta}C_{\theta_1}S_{\beta} - \\ - 2A_5\dot{\alpha}\dot{\gamma}C_{\theta_1}S_{\gamma} + d\ddot{\alpha}mC_{\theta_1}S_{\theta_2} + 2\dot{d}\dot{\alpha}mC_{\theta_1}S_{\theta_2} - 2d\dot{\theta}_1\dot{\theta}_2mS_{\theta_2}  -\\- d\dot{\alpha}^2mC_{\theta_1}C_{\theta_2}S_{\theta_1} + 2d\dot{\alpha}\dot{\theta}_2mC_{\theta_1}C_{\theta_2}) = 0,
\end{split}
\end{equation}

\begin{equation}\label{eq:th2}
    \begin{split}
        -d(A_4\dot{\alpha}^2C_{\beta}C_{\theta_2} - 2\dot{d}\dot{\theta}_2m - d\ddot{\theta_2}m + A_4\dot{\beta}^2C_{\beta}C_{\theta_2} + A_5\dot{\alpha}^2C_{\gamma}C_{\theta_2} + \\+ A_5\dot{\gamma}^2C_{\gamma}C_{\theta_2} - (d\dot{\theta}_1^2mS_{2\theta_2})/2 + d\ddot{\alpha}mS_{\theta_1} + 2\dot{d}\dot{\alpha}mS_{\theta_1} + A_4\ddot{\beta}C_{\theta_2}S_{\beta} +\\+ A_5\ddot{\gamma}C_{\theta_2}S_{\gamma} - gmC_{\theta_1}S_{\theta_2} + A_4\dot{\beta}^2C_{\theta_1}S_{\beta}S_{\theta_2} + A_5\dot{\gamma}^2C_{\theta_1}S_{\gamma}S_{\theta_2} - \\ - A_4\ddot{\beta}C_{\beta}C_{\theta_1}S_{\theta_2} - A_5\ddot{\gamma}C_{\gamma}C_{\theta_1}S_{\theta_2} + A_4\ddot{\alpha}C_{\beta}S_{\theta_1}S_{\theta_2} + A_5\ddot{\alpha}C_{\gamma}S_{\theta_1}S_{\theta_2} + \\ + 2d\dot{\alpha}\dot{\theta}_1mC_{\theta_1}C_{\theta_2}^2 - 2A_4\dot{\alpha}\dot{\beta}S_{\beta}S_{\theta_1}S_{\theta_2} - 2A_5\dot{\alpha}\dot{\gamma}S_{\gamma}S_{\theta_1}S_{\theta_2} + d\dot{\alpha}^2mC_{\theta_1}^2C_{\theta_2}S_{\theta_2}) = 0.
    \end{split}
\end{equation}

Finally, the system (\ref{eq:alp})-(\ref{eq:th2}) can be compactly rewritten as

\begin{equation}\label{eq:modelmatrix}
{{M(q)\ddot{q} + C(q,\dot{q})\dot{q} + g(q)} = {\begin{bmatrix}
I_{4\times4} \\ 0_{2\times2} \end{bmatrix}}u. }
\end{equation}

The matrices $M(q) \in{\mathbb{R}^{6\times6}}$,$ C(q,\dot{q})\in{\mathbb{R}^{6\times6}}$, and $g(q) \in{\mathbb{R}^{6}}$ represent the inertia, centripetal-Coriolis, and gravity term, respectively. $I_{4 \times 4}$ and $0_{2 \times 2}$ are the Identity matrix and the Null matrix, respectively.

The system matrices (see Appendix A for the detailed description) are defined as

\begin{equation}
M(q) = \begin{bmatrix}
m_{11}&m_{12}&m_{13}&m_{14}&m_{15}&m_{16}\\
m_{21}&m_{22}&m_{23}&m_{24}&m_{25}&m_{26}\\
m_{31}&m_{32}&m_{33}&m_{34}&m_{35}&m_{36}\\
m_{41}&m_{42}&m_{43}&m_{44}&0&0\\
m_{51}&m_{52}&m_{53}&0&m_{55}&0\\
m_{61}&m_{62}&m_{63}&0&0&m_{66}\\
\end{bmatrix}, 
\end{equation}

\begin{equation}
C(q,\dot q) = \begin{bmatrix}
c_{11}&c_{12}&c_{13}&c_{14}&c_{15}&c_{16}\\
c_{21}&c_{22}&c_{23}&c_{24}&c_{25}&c_{26}\\
c_{31}&c_{32}&c_{33}&c_{34}&c_{35}&c_{36}\\
c_{41}&c_{42}&c_{43}&0&c_{45}&c_{46}\\
c_{51}&c_{52}&c_{53}&c_{54}&c_{55}&c_{56}\\
c_{61}&c_{62}&c_{63}&c_{64}&c_{65}&c_{66}\\
\end{bmatrix}, 
\end{equation}

\begin{equation}\label{eq:g}
{
g(q) = \begin{bmatrix} 0,g_2,g_3,g_4,g_5,g_6
\end{bmatrix}}^T. 
\end{equation}

Although the equation of motion (\ref{eq:modelmatrix}) is quite complicated, as all mechanical systems, it has several fundamental properties that can be exploited to facilitate the design of the control law. The two main properties that will be exploited in the next sections are:
\begin{Properties}\label{Pr}
  \item The matrix ${1\over{2}}\dot{M}(q) - C(q,\dot{q})$ is skew symmetric which means that:
  \begin{equation*}
      \eta^T\left[{1\over{2}}\dot{M}(q) - C(q,\dot{q})\right]\eta = 0, \quad \eta \in \mathbb{R}^{6} 
  \end{equation*}
  \item  The gravity vector (\ref{eq:g}) can be  obtained as the gradient of the crane potential energy (\ref{eq:U}), \textit{i.e.}, $g(q) = \frac{\partial U(q)}{\partial{\dot{q}}}$
\end{Properties}

\section{Control objective}\label{sec:cntr}

The control objective consists of two main tasks: \textit{i)} move the crane to the desired configuration, \textit{ii)} dampen the load swings at the same time. This control objective can be described compactly as

\begin{equation}\label{eq:constraints}
\begin{split}
\lim_{t\to \infty}[\alpha(t),\beta(t),\gamma(t),d(t),\theta_1(t),\theta_2(t)] = [\alpha_{d},\beta_{d},\gamma_{d},d_{d},0,0], \\
\lim_{t\to \infty}[\dot{\alpha}(t),\dot{\beta}(t),\dot{\gamma}(t),\dot{d}(t),\dot{\theta_1}(t),\dot{\theta_2}(t)] = [0,0,0,0,0,0],
\end{split}
\end{equation}

where $\alpha_d, \beta_d, \gamma_d, d_d$ are the desired references for the actuated states. 

\medskip

To design the control law and to perform the corresponding stability and convergence analysis (see Section 4) we will consider the following reasonable assumptions.

\begin{assumption}\label{ass:1}
The payload swings are such that: $\lvert \theta_{i} \rvert\ <{\frac{\pi}{2}}, i = 1,2$.

\end{assumption}
\begin{assumption}\label{ass:2}
The cable length is always greater than zero to avoid singularity in the model (\ref{eq:modelmatrix}): $d(t) > ,\forall t\geq0$.
\end{assumption}

\begin{assumption}\label{ass:3}
The boom and the jib angles, according to Fig.\ref{fig:crane}, are mechanically constrained in the range:
\begin{align*}
-{\frac{\pi}{2}} < \beta < {\frac{\pi}{2}},\\
-{\frac{\pi}{2}} < \gamma < {\frac{\pi}{2}}.\\ 
\end{align*}
\end{assumption}

It is worth noting that in most real-world crane the Assumption \ref{ass:3} can be more restrictive according to specific mechanical constraints. 

\section{Control design and stability analysis}\label{sec:cnt}

The control strategy proposed in this paper consists of a nonlinear control law based on energy consideration. To design the control law, firstly we considered a Lyapunov function candidate partially based on the energy of system (\ref{eq:modelmatrix}). The resulting control law will be a proportional-derivative (PD) controller with a gravity compensation. A detailed stability analysis of the resulting closed loop system will be carried out based on the \textit{LaSalle's invariance principle}. 

\medskip

In order to develop our control law, we start from a energy function

\begin{equation}\label{eq:E}
    E(t) = {1\over{2}}\dot{q}^TM(q)\dot{q} + mgd(1-C_{\theta_1}C_{\theta_2}),
\end{equation}

where the first term is the kinetic energy of the crane, whereas the second term represents the potential energy of the payload. If one takes the time derivative of (\ref{eq:E}), it follows that
\begin{equation}\label{eq:Edot}
{\dot E(t) = {1\over{2}}\dot{q}^T\dot{M}(q)\dot{q}+\dot{q}^TM(q)\ddot{q}+mg\dot{d}(1-C_{\theta_1}C_{\theta_2})+\dot{\theta}_1mgdS_{\theta_1}C_{\theta_2} +\dot{\theta}_2mgdS_{\theta_2}C_{\theta_1}}.
\end{equation}

Using (\ref{eq:modelmatrix}) and Property 1, it follows that 
\begin{equation}
    {\dot E(t) = \dot{\alpha}u_1+\dot{\beta}(u_2 -gl_BC_{\beta}(m+{1\over{2}}m_B+m_J))+\dot{\gamma}(u_3-gl_JC_{\gamma}(m + {1\over{2}}m_J))+ \dot{d}(u_4+mg)}.
\end{equation}

\medskip

Based on (\ref{eq:E}), we can define the following Lyapunov
function candidate:

\begin{equation}\label{eq:V}
{V(t) = {1\over{2}}\dot{q}^TM(q)\dot{q} + mgd(1-C_{\theta_1} C_{\theta_2} )+{1\over{2}}k_{p\alpha} e_{\alpha}^2 +{1\over{2}}k_{p\beta} e_{\beta}^2 +{1\over{2}}k_{p\gamma} e_{\gamma}^2 +{1\over{2}}k_{pd} e_{d}^2,}
\end{equation}

where $e_{\alpha},e_{\beta},e_{\gamma},e_{d}$ are the error signals defined as:  
\begin{equation}\label{eq:erro}
\begin{split}
e_{\alpha} = \alpha_d - \alpha, \quad\quad e_{\beta} = \beta_d - \beta, \quad\quad
e_{\gamma} = \gamma_d - \gamma, \quad\quad e_{d} = d_d - d.
\end{split}
\end{equation}

Differentiating  (\ref{eq:V}) with respect to the time and using (\ref{eq:modelmatrix}) and Property 1, we obtain
\begin{equation}\label{eq:Vdot}
\begin{split}
\dot V(t) = \dot{\alpha}(u_1-k_{p\alpha} e_{\alpha})+ \dot{\beta}(u_2-k_{p\beta} e_{\beta} -gl_BC_{\beta}(m+{1\over{2}}m_B+m_J)) \\ + \dot{\gamma}(u_3-k_{p\gamma} e_{\gamma} -gl_JC_{\gamma}(m + {1\over{2}}m_J))+\dot{d}(u_4-k_{pd} e_{d}+mg). 
\end{split}
\end{equation}

In order to cancel the gravitational terms and keep $\dot{V}(t)$ non-positive, the following control law is designed
\begin{equation}\label{eq:u1}
u_1 = k_{p\alpha} e_{\alpha} - k_{d\alpha} \dot{\alpha}, 
\end{equation}
\begin{equation}\label{eq:u2}
u_2 = k_{p\beta} e_{\beta} - k_{d\beta} \dot{\beta} + gl_BC_{\beta}(m+{1\over{2}}m_B+m_J), 
\end{equation}
\begin{equation}\label{eq:u3}
u_3 = k_{p\gamma} e_{\gamma} - k_{d\gamma} \dot{\gamma} +gl_JC_{\gamma}(m + {1\over{2}}m_J), 
\end{equation}
\begin{equation}\label{eq:u4}
u_4 = k_{pd} e_{d} - k_{dd} \dot{d} - mg, 
\end{equation}

where $k_{p\alpha}$, $k_{p\beta}$, $k_{p\gamma}$, $k_{pd}$, $k_{d\alpha}$, $k_{d\beta}$, $k_{d\gamma}$, $k_{dd}$ $\in\mathbb{R}$ are positive control gains.
\newline
Substituting (\ref{eq:u1})-(\ref{eq:u4}) into (\ref{eq:Vdot}) one obtains
\begin{equation}\label{eq:Vdot2}
{\dot V(t) = -k_{d\alpha} \dot{\alpha}^2 - k_{d\beta} \dot{\beta}^2 -k_{d\gamma} \dot{\gamma}^2 - k_{dd} \dot{d}^2 \leq 0. } 
\end{equation}

\medskip

The following theorem describes the stability property of the crane using the control law  (\ref{eq:u1})-(\ref{eq:u4}).

\begin{theorem}
Consider the system (\ref{eq:alp})-(\ref{eq:th2}). Under Assumptions \ref{ass:1}-\ref{ass:3}, the control law (\ref{eq:u1})-(\ref{eq:u4}) makes every equilibrium point (\ref{eq:constraints}) asymptotically stable. 
\end{theorem}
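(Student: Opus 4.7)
My plan is to invoke LaSalle's invariance principle with the Lyapunov candidate $V(t)$ defined in (\ref{eq:V}). First I would verify that $V$ is positive definite with respect to the error variables $(e_\alpha,e_\beta,e_\gamma,e_d,\theta_1,\theta_2,\dot q)$. The kinetic term $\tfrac{1}{2}\dot q^T M(q)\dot q$ is positive definite in $\dot q$ because $M(q)$ inherits positive definiteness from the Lagrangian structure; the proportional quadratic terms are positive definite with positive gains; and the payload potential $mgd(1-C_{\theta_1}C_{\theta_2})$ is non-negative under Assumption~\ref{ass:2} (since $d>0$) and vanishes if and only if $\theta_1=\theta_2=0$ under Assumption~\ref{ass:1}. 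Equation (\ref{eq:Vdot2}) already gives $\dot V\leq 0$, so the initial sublevel set of $V$ is positively invariant and bounded, yielding boundedness of $q$ and $\dot q$ and legitimizing LaSalle.

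Next I would characterize the largest invariant set $\Omega\subseteq\{\dot V=0\}$. From (\ref{eq:Vdot2}), membership in $\{\dot V=0\}$ forces $\dot\alpha=\dot\beta=\dot\gamma=\dot d=0$ at every instant; on any trajectory that stays in $\Omega$, invariance further forces $\ddot\alpha=\ddot\beta=\ddot\gamma=\ddot d=0$ along the whole trajectory. Hence on $\Omega$ the four actuated coordinates are constants $(\alpha_c,\beta_c,\gamma_c,d_c)$ with $d_c>0$, the controls (\ref{eq:u1})--(\ref{eq:u4}) take constant values $u_i^{\mathrm{eq}}$, and the swing equations (\ref{eq:th1})--(\ref{eq:th2}) collapse to the equations of a spherical pendulum of constant length $d_c$ hanging from a fixed pivot.

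The main obstacle is showing that $\Omega$ reduces to the single point corresponding to the desired equilibrium. To do this I would substitute the constancy of the actuated variables into (\ref{eq:alp})--(\ref{eq:len}) and obtain four identities of the form $F_i(\theta_1,\theta_2,\dot\theta_1,\dot\theta_2,\ddot\theta_1,\ddot\theta_2)\equiv u_i^{\mathrm{eq}}$ that must hold for every $t$ along the invariant trajectory. Using the reduced pendulum equations to eliminate $\ddot\theta_1$ and $\ddot\theta_2$, the right-hand sides become constants in $t$ while the left-hand sides are trigonometric polynomials in $(\theta_1,\theta_2)$ with coefficients that are polynomials in $(\dot\theta_1,\dot\theta_2)$; matching the constant right-hand side against the generically time-varying left-hand side, and exploiting the linear independence of the resulting trigonometric monomials under Assumption~\ref{ass:1}, I expect each oscillatory contribution to vanish identically, forcing $\theta_1(t)\equiv\theta_2(t)\equiv 0$ and therefore $\dot\theta_1\equiv\dot\theta_2\equiv 0$. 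Substituting these back into the four identities, the gravity compensation built into (\ref{eq:u2})--(\ref{eq:u4}) cancels all remaining terms except $k_{p\alpha}e_\alpha=k_{p\beta}e_\beta=k_{p\gamma}e_\gamma=k_{pd}e_d=0$; the positivity of the proportional gains then yields $(\alpha_c,\beta_c,\gamma_c,d_c)=(\alpha_d,\beta_d,\gamma_d,d_d)$. Thus $\Omega$ consists only of the desired equilibrium, and LaSalle's invariance principle delivers the asymptotic stability claim.
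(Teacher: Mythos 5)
Your overall skeleton --- the Lyapunov candidate (\ref{eq:V}), the non-positivity in (\ref{eq:Vdot2}), boundedness of the trajectories, and LaSalle applied to the largest invariant set on which $\dot\alpha=\dot\beta=\dot\gamma=\dot d=0$, the actuated coordinates freeze, and the swing dynamics collapse to a fixed-pivot spherical pendulum --- is the same as the paper's. The genuine gap is in the decisive step, where you claim that matching the constant right-hand sides $u_i^{\mathrm{eq}}$ against the ``generically time-varying'' left-hand sides and invoking ``linear independence of trigonometric monomials'' forces $\theta_1\equiv\theta_2\equiv 0$. That is not yet an argument: the monomials are evaluated along a particular solution of the nonlinear pendulum ODE, so their linear independence \emph{as functions of time} is precisely what is in question. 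Concretely, relative-equilibrium (conical) and planar swinging motions are candidate invariant motions for which such combinations can be constant; for instance, combining the reduced cable equation (\ref{eq:eqd}) with pendulum energy conservation only yields $C_{\theta_1}C_{\theta_2}=\mathrm{const}$, which conical motions satisfy, and a planar oscillation with $\theta_1\equiv 0$ and $\theta_2(t)$ oscillating is not excluded by any genericity consideration. You need a concrete mechanism to rule these out, and your sketch does not supply one.

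The paper's mechanism, which your proposal misses, is that on the invariant set each residual actuated equation is an \emph{exact time derivative of a bounded function of the state} set equal to a constant: (\ref{eq:derth1}) reads $\frac{d}{dt}[\,\cdot\,]=k_{p\alpha}e_{\alpha}$, so integration gives $k_{p\alpha}e_{\alpha}t+c_1$, which contradicts the boundedness recorded in (\ref{cond}) unless $e_{\alpha}=0$; the same trick applied to the $\beta$ and $\gamma$ residuals gives $e_{\beta}=e_{\gamma}=0$. Note the order is the reverse of yours: the actuated errors are eliminated \emph{before} anything is known about the swing angles, and the integration simultaneously delivers first integrals of the residual dynamics. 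The swing motions are then excluded by a further chain: substituting the reduced pendulum equations (\ref{eq:eq1})--(\ref{eq:eq2}) into (\ref{eq:th1eq}) forces $\theta_1=0$ (Assumptions \ref{ass:1} and \ref{ass:3} discard the spurious branches), a second exact-derivative/integration argument on (\ref{eq:th2eq}) shows $\theta_2$ must be constant, the pendulum equation then gives $\theta_2=0$, and only at that point does (\ref{eq:eqd}) yield $e_d=0$. Your closing step (gravity compensation cancelling everything once $\theta_1\equiv\theta_2\equiv 0$) is fine, but without identifying this exact-derivative/boundedness structure, or an equivalent first-integral argument, the passage from ``constant right-hand sides'' to ``$\theta_1\equiv\theta_2\equiv 0$'' does not go through.
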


\begin{proof}
 Choosing (\ref{eq:V}) as a Lyapunov function with the system (\ref{eq:alp})-(\ref{eq:th2}), the control law (\ref{eq:u1})-(\ref{eq:u4}) leads to (\ref{eq:Vdot2}). Noticing that $V(0)$ is bounded, it is easy to infer that:
 \begin{equation}\label{cond}
    {V(t) \in L_\infty \Rightarrow \dot{q},e_{\alpha},e_{\beta},e_{\gamma},e_{d},\theta_1,\theta_2 \in L_\infty. }
\end{equation}

 At this point, let $\Phi$ be defined as the set where $\dot{V}(t)=0$, \textit{i.e.},
 \begin{equation}
     \Phi = {\{q,\dot{q}|\dot{V}(t) = 0\}}.
 \end{equation}

Furthermore, let $\Gamma$ represent the largest invariant set in $\Phi$ where Assumptions 1-2 are verified. Based on (\ref{eq:Vdot2}), $\Gamma$ is the set such that:
 \begin{equation}\label{eq:inv}
 \begin{split}
    {\dot{\alpha}=0,\dot{\beta}=0,\dot{\gamma}=0,\dot{d}=0 \Rightarrow \ddot{\alpha}=0,\ddot{\beta}=0,\ddot{\gamma}=0,\ddot{d}=0}, \\
    {\dot{e}_{\alpha} = 0, \dot{e}_{\beta} = 0, \dot{e}_{\gamma} = 0, \dot{e}_{d} = 0 \Rightarrow e_{\alpha} = \phi_1,e_{\beta} = \phi_2, e_{\gamma} = \phi_3 ,e_{d} = \phi_4, }
 \end{split}
 \end{equation}
where $\phi_{1,2,3,4}$ are constants to be determined.\newline
Plugging (\ref{eq:inv}) and (\ref{eq:u1}) in (\ref{eq:alp}) one obtains
\begin{equation}\label{eq:th1eq}
\begin{split}
d^2\ddot{\theta_1}C_{\theta_1}C_{\theta_2}S_{\theta_2}-d^2\ddot{\theta_2}S_{\theta_1}-2d^2\dot{\theta}_1\dot{\theta}_2C_{\theta_1}-A_4\ddot{\theta_1}^2C_{\bar{\beta}}C_{\theta_2}S_{\theta_1}
\\ -A_5d\dot{\theta}_1^2 C_{\bar{\gamma}}  C_{\theta_2}S_{\theta_1}-A_4d\dot{\theta}_2^2  C_{\bar{\beta}}  C_{\theta_2}  S_{\theta_1} - A_5 d \dot{\theta}_2^2  C_{\bar{\gamma}}  C_{\theta_2}  S_{\theta_1} \\
- d^2 \dot{\theta}_1^2  C_{\theta_2}  S_{\theta_1}  S_{\theta_2} + A_4 d \ddot{\theta_1}  C_{\bar{\beta}}  C_{\theta_1}  C_{\theta_2} + A_5 d \ddot{\theta_1}  C_{\bar{\gamma}}  C_{\theta_1}  C_{\theta_2} \\ - A_4 d \ddot{\theta_2}  C_{\bar{\beta}}  S_{\theta_1}  S_{\theta_2} - A_5 d \ddot{\theta_2}  C_{\bar{\gamma}}  S_{\theta_1}  S_{\theta_2} 
+ 2 d^2 \dot{\theta}_1 \dot{\theta}_2  C_{\theta_1}  C_{\theta_2}^2 \\ - 2A_4 d \dot{\theta}_1 \dot{\theta}_2  C_{\bar{\beta}}  C_{\theta_1}  S_{\theta_2} - 2A_5 d \dot{\theta}_1 \dot{\theta}_2  C_{\bar{\gamma}}  C_{\theta_1}  S_{\theta_2} = k_{p\alpha}e_{\alpha}, 
\end{split}
\end{equation}
where $C_{\bar{\beta}} = cos(\beta_d - \phi_2)$ and $C_{\bar{\gamma}} = cos(\gamma_d - \phi_3)$.
\newline
To continue the analysis, equation (\ref{eq:th1eq}) can be rewritten as
\begin{equation}\label{eq:derth1}
    \frac{d}{dt}[\dot{\theta}_1C_{\theta_1}C_{\theta_2}(d^2S_{\theta_2}+ A_4dC_{\bar{\beta}}+A_5dC_{\bar{\gamma}})
    -\dot{\theta}_2S_{\theta_1}(d^2+ A_4dC_{\bar{\beta}}S_{\theta_2}+A_5dC_{\bar{\gamma}}S_{\theta_2})] = k_{p\alpha}e_{\alpha}.
\end{equation}
Integrating equation (\ref{eq:derth1}) with respect to the time we obtain
\begin{equation}\label{eq:intth}
    [\dot{\theta}_1C_{\theta_1}C_{\theta_2}(d^2S_{\theta_2}+ A_4dC_{\bar{\beta}}+A_5dC_{\bar{\gamma}})
    -\dot{\theta}_2S_{\theta_1}(d^2+ A_4dC_{\bar{\beta}}S_{\theta_2}+A_5dC_{\bar{\gamma}}S_{\theta_2})] = k_{p\alpha}e_{\alpha}t + c_1,
\end{equation}
where $c_1$ denotes a constant to be determined.\newline

If $k_{p\alpha}e_{\alpha} \ne 0$, then if $t \Rightarrow \infty$, one would have that the left side of equation (\ref{eq:intth}) tends to infinity, \textit{i.e.}
\begin{equation*}
    [\dot{\theta}_1C_{\theta_1}C_{\theta_2}(d^2S_{\theta_2}+ A_4dC_{\bar{\beta}}+A_5dC_{\bar{\gamma}})
    -\dot{\theta}_2S_{\theta_1}(d^2+ A_4dC_{\bar{\beta}}S_{\theta_2}+A_5dC_{\bar{\gamma}}S_{\theta_2})] \Rightarrow \infty,
\end{equation*}

which conflicts with (\ref{cond}).\newline
As a result:
\begin{align}\label{eq:intth2}
    [\dot{\theta}_1C_{\theta_1}C_{\theta_2}(d^2S_{\theta_2}+ A_4dC_{\bar{\beta}}+A_5dC_{\bar{\gamma}})
    -\dot{\theta}_2S_{\theta_1}(d^2+ A_4dC_{\bar{\beta}}S_{\theta_2}+A_5dC_{\bar{\gamma}}S_{\theta_2})] = c_1 \\
    k_{p\alpha}e_{\alpha} = 0.
\end{align}
Since $k_{p\alpha} > 0$, it is clear that
\begin{equation}\label{eq:erral}
    e_{\alpha} = 0 \Rightarrow \phi_1 = 0 \Rightarrow \alpha = \alpha_d.
\end{equation}
Similar to (\ref{eq:th1eq})-(\ref{eq:erral}), by plugging (\ref{eq:inv}) and (\ref{eq:u2})-(\ref{eq:u3}) into (\ref{eq:bet})-(\ref{eq:gam}) one achieves
\begin{equation}\label{eq:th2eq}
\begin{split}
A_4d\dot{\theta}_2^2S_{\bar{\beta}}S_{\theta_2} 
- A_4d\ddot{\theta_2}C_{\theta_2}S_{\bar{\beta}} + A_4d\dot{\theta}_1^2C_{\bar{\beta}}C_{\theta_1}C_{\theta_2} + A_4d\dot{\theta}_2^2C_{\bar{\beta}}C_{\theta_1}C_{\theta_2} \\ + A_4d\ddot{\theta_1}C_{\bar{\beta}}C_{\theta_2}S_{\theta_1} 
+ A_4d\ddot{\theta_2}C_{\bar{\beta}}C_{\theta_1}S_{\theta_2} - 2A_4d\dot{\theta}_2\dot{\theta}_1C_{\bar{\beta}}S_{\theta_1}S_{\theta_2} =  k_{p\beta}e_{\beta}, 
\end{split}
\end{equation}

\begin{equation}\label{eq:th3eq}
\begin{split}
A_5d\dot{\theta}_2^2S_{\bar{\gamma}}S_{\theta_2} 
- A_5d\ddot{\theta_2}C_{\theta_2}S_{\bar{\gamma}} +  A_5d\dot{\theta}_1^2C_{\bar{\gamma}}C_{\theta_1}C_{\theta_2} 
+ A_5d\dot{\theta}_2^2C_{\bar{\gamma}}C_{\theta_1}C_{\theta_2} + A_5d\ddot{\theta_1}C_{\bar{\gamma}}C_{\theta_2}S_{\theta_1} \\ 
+ A_5d\ddot{\theta_2}C_{\bar{\gamma}}C_{\theta_1}S_{\theta_2} - 2A_5d\dot{\theta}_1\dot{\theta}_2C_{\bar{\gamma}}S_{\theta_1}S_{\theta_2} = kp_{\gamma}e_{\gamma}. 
\end{split}
\end{equation}
Equations (\ref{eq:th2eq})-(\ref{eq:th3eq}) can be rewritten as
\begin{equation}\label{eq:derth}
    \frac{d}{dt}[-A_4d\dot{\theta}_2(C_{\theta_2}S_{\bar{\beta}}-C_{\bar{\beta}}C_{\theta_1}S_{\theta_2}) + A_4d\dot{\theta}_1C_{\bar{\beta}}C_{\theta_2}S_{\theta_1}] = k_{p\beta}e_{\beta},
\end{equation}
\begin{equation}\label{eq:derth}
    \frac{d}{dt}[- A_5d\dot{\theta}_2(C_{\theta_2}S_{\bar{\gamma}}+C_{\bar{\gamma}}C_{\theta_1}S_{\theta_2})+ A_5d\ddot{\theta_1}C_{\bar{\gamma}}C_{\theta_2}S_{\theta_1}] = kp_{\gamma}e_{\gamma}.
\end{equation}
In the same way of (\ref{eq:intth})-(\ref{eq:erral}) it is clear that:
\begin{equation}\label{eq:errb}
    e_{\beta} = 0 \Rightarrow \phi_2 = 0 \Rightarrow \beta = \beta_d,
\end{equation}
\begin{equation}\label{eq:errg}
    e_{\gamma} = 0 \Rightarrow \phi_3 = 0 \Rightarrow \gamma = \gamma_d.
\end{equation}
By using the (\ref{eq:inv}) and (\ref{eq:u4}) in (\ref{eq:len})-(\ref{eq:th1})-(\ref{eq:th2}), we get
\begin{equation}\label{eq:eqd}
    - d\dot{\theta}_2^2 - d(\dot{\theta}_1^2C_{\theta_2}^2)+ g(1-C_{\theta_1}C_{\theta_2}) = kp_{d}e_{d},
\end{equation}
\begin{equation}\label{eq:eq1}
    d\ddot{\theta_1}C_{\theta_2}^2 = -gS_{\theta_1}C_{\theta_2}+ 2d\dot{\theta}_1\dot{\theta}_2S_{\theta_2}C_{\theta_2}, 
\end{equation}
\begin{equation}\label{eq:eq2}
    d\ddot{\theta_2} = - {1\over{2}}(d\dot{\theta}_1^2C_{\theta_2}S_{\theta_2})- gS_{\theta_2}C_{\theta_1}.
\end{equation}

Substituting (\ref{eq:erral}) and (\ref{eq:eqd})-(\ref{eq:eq2}) into (\ref{eq:th1eq}), we obtain
\begin{equation}\label{eq:solution}
    \theta_1 = 0 \lor \theta_2 = \pm \pi/2 \lor (\beta = \gamma = \pm \pi/2).   
\end{equation}

According to Assumptions \ref{ass:1} and \ref{ass:3}, the solutions $\beta = \gamma = \pm \pi/2$ and $\theta_2 = \pm \pi/2$ are not considered. Thus, it can be concluded that
\begin{equation}\label{eq:solution2}
    \theta_1 = 0  \Rightarrow  \dot{\theta}_1 = 0 \Rightarrow  \ddot{\theta_1} = 0.
\end{equation}


Using (\ref{eq:errb}) and (\ref{eq:solution2}) into (\ref{eq:th2eq}) one obtains
\begin{equation}\label{eq:th51}
    A_4d\dot{\theta}_2^2C_{\beta_d}C_{\theta_2} + A_4d\dot{\theta}_2^2S_{\beta_d}S_{\theta_2} + A_4d\ddot{\theta_2}C_{\beta_d}S_{\theta_2} - A_4d\ddot{\theta_2}C_{\theta_2}S_{\beta_d}  = 0.
\end{equation}


Taking into account Assumption 2, equation (\ref{eq:th51}) can be rewritten as:
\begin{equation}\label{eq:derth51}
    \frac{d}{dt}[\dot{\theta}_2(C_{\beta_d}S_{\theta_2}-C_{\theta_2}S_{\beta_d})] = 0.
\end{equation}
After that, one integrates (\ref{eq:derth51}) in respect to the time
\begin{equation}\label{eq:derth52}
    [\dot{\theta}_2(C_{\beta_d}S_{\theta_2}-C_{\theta_2}S_{\beta_d})] = c_2,
\end{equation}

where $c_2$ denotes constant to be determined..

The equation (\ref{eq:derth52}) can be rewritten as:
\begin{equation}\label{eq:derth53}
    \frac{d}{dt}[-(C_{\beta_d}C_{\theta_2}+S_{\theta_2}S_{\beta_d})] = c_2.
\end{equation}
Integrating  (\ref{eq:derth53}) one obtains
\begin{equation}\label{eq:derth54}
    [-(C_{\beta_d}C_{\theta_2}+S_{\theta_2}S_{\beta_d})] = c_2t + c_3,
\end{equation}

where $c_3$ is a constant. Assume that $c_2 \neq 0$, then if $t \Rightarrow \infty$, one will have that the left side of equation (\ref{eq:derth54})

\begin{equation*}
   [-(C_{\beta_d}C_{\theta_2}+S_{\theta_2}S_{\beta_d})] \Rightarrow \infty,    
\end{equation*}

which conflicts with \ref{cond}. \newline
Than, $c_2 = 0$ and  in (\ref{eq:derth54}) is easy to verify that $\theta_2$ must be a constant. Consequently, it is clear that:
\begin{equation}\label{eq:solution4}
      \dot{\theta}_2 = 0 \Rightarrow  \ddot{\theta_2} = 0. 
\end{equation}
Including (\ref{eq:solution2}) and (\ref{eq:solution4}) into (\ref{eq:eq2}) and considering the Assumption 1, it can be obtained that:
\begin{equation}\label{eq:solution5}
    gS_{\theta_2} = 0 \Rightarrow {\theta_2} = 0.
\end{equation}
Including (\ref{eq:solution2}) and (\ref{eq:solution5}) into (\ref{eq:eqd}) we obtain
\begin{equation}\label{eq:solution6}
    k_{pd}e_d = 0 \Rightarrow e_d = 0 \Rightarrow d = d_d.
\end{equation}

\end{proof}

\section{Simulation results}\label{sec:sim}

To demonstrate the effectiveness of the control law (\ref{eq:u1})-(\ref{eq:u4}), five different simulation scenarios will be shown. In each of them the goal is to move the crane to a desired position while dampening the oscillations of the payload as much as possible. Finally, at the end of the section, the performances of the proposed control approach are compared with a linear quadratic regulator (LQR) obtained by linearization which represents the most commonly crane control technique used in the literature \citep{ref5}.

\medskip

To get realistic values for the simulation tests, we consider a small knuckle crane: the NK375b from NEMAASKO (Fig.\ref{fig:nem}). See \citep{NK375_user_manual} for more details.

\begin{figure}[ht!]
\centering
\includegraphics[width=0.3\linewidth]{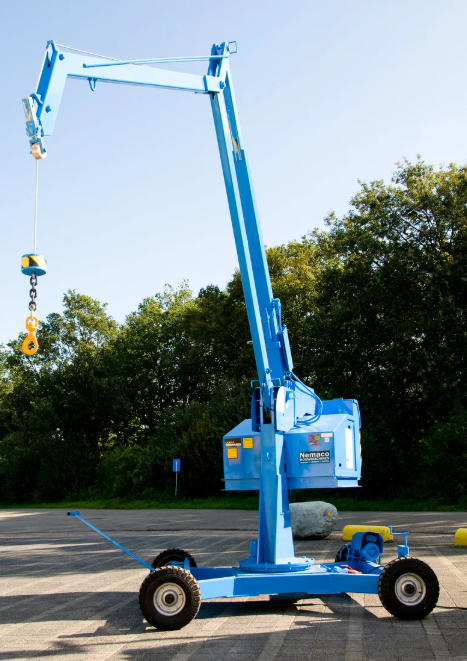}
\caption{\label{fig:nem}NK375b knuckle crane}
\end{figure}

\medskip

In all the simulation scenarios, without loss of generality, the desired angles are selected as $\alpha_d=60^{\circ}$, $\beta_d=30^{\circ}$, $\gamma_d=22^{\circ}$ and the desired cable length is selected as $d = 2\mathrm{m}$ according to the workspace of the knuckle crane. Matlab\textsuperscript{\textregistered} and Simulink\textsuperscript{\textregistered} code is released as open-source on GitHub: \url{https://github.com/MikAmb95/Knuckle_crane_simulator-}.

\begin{table}[hb]
\begin{center}
\caption{Parameters of the knuckle crane system}\label{tb:valpar}
\begin{tabular}{cccc}
Parameters & Physical & Units \\\hline
$m_b$ & 300 & kg \\ 
$m_j$ & 250 & kg \\
m & 100 & kg \\ 
$l_b$ & 2 & m \\ 
$l_j$ & 2.3 & m \\\hline  
\end{tabular}
\end{center}
\end{table}

The parameters for the control law (\ref{eq:u1})-(\ref{eq:u4}) are 

\begin{align*}
     k_{p\alpha} = 10^{3},\quad\quad k_{p\beta} = 10^{4},\quad\quad k_{p\gamma} = 10^{4},\quad\quad k_{pd} = 10^{3},\quad \\k_{d\alpha} = 10^{2},\quad\quad k_{d\beta} = 10^{3},\quad\quad k_{d\gamma} = 10^{3},\quad\quad k_{dd} = 10^{2}.
\end{align*}

\textit{Scenario 1.} In this simulation we show the performance of the control law (\ref{eq:u1})-(\ref{eq:u4}) considering for the parameters of the kuckle crane model (\ref{eq:modelmatrix}) the nominal values collected in Tab.\ref{tb:valpar} and setting zero initial conditions for the two swing angles of the payload (\textit{i.e.} $\theta_1$ and $\theta_2$). The simulation results are reported in Figg.\ref{fig:al}-\ref{fig:d}. We can see that the three actuated angles (\textit{i.e.} $\alpha, \beta,$ and $\gamma$) reach  the desired angular values in around 100 seconds. Additionally, the cable achieves the desired length. As one can see in Figg.\ref{fig:th1}-\ref{fig:th2}, the payload swing angles (\textit{i.e.} $\theta_1$ and $\theta_2$) exhibit a first oscillation due to the movement of the crane. However, when the crane reaches the desired position, the swing angles have negligible residual oscillations. 
As one can see from Figg.\ref{fig:u1}-\ref{fig:u2}, the inputs profile and values are reasonable and well within the typical limits of the crane actuators.

\medskip

\begin{figure}[ht!]
\centering

\subfloat[Scenario 1. Tower angle $\alpha$. Blue line: Nonlinear controller. Red line: Desired reference.]{%
\resizebox*{5cm}{!}{\includegraphics{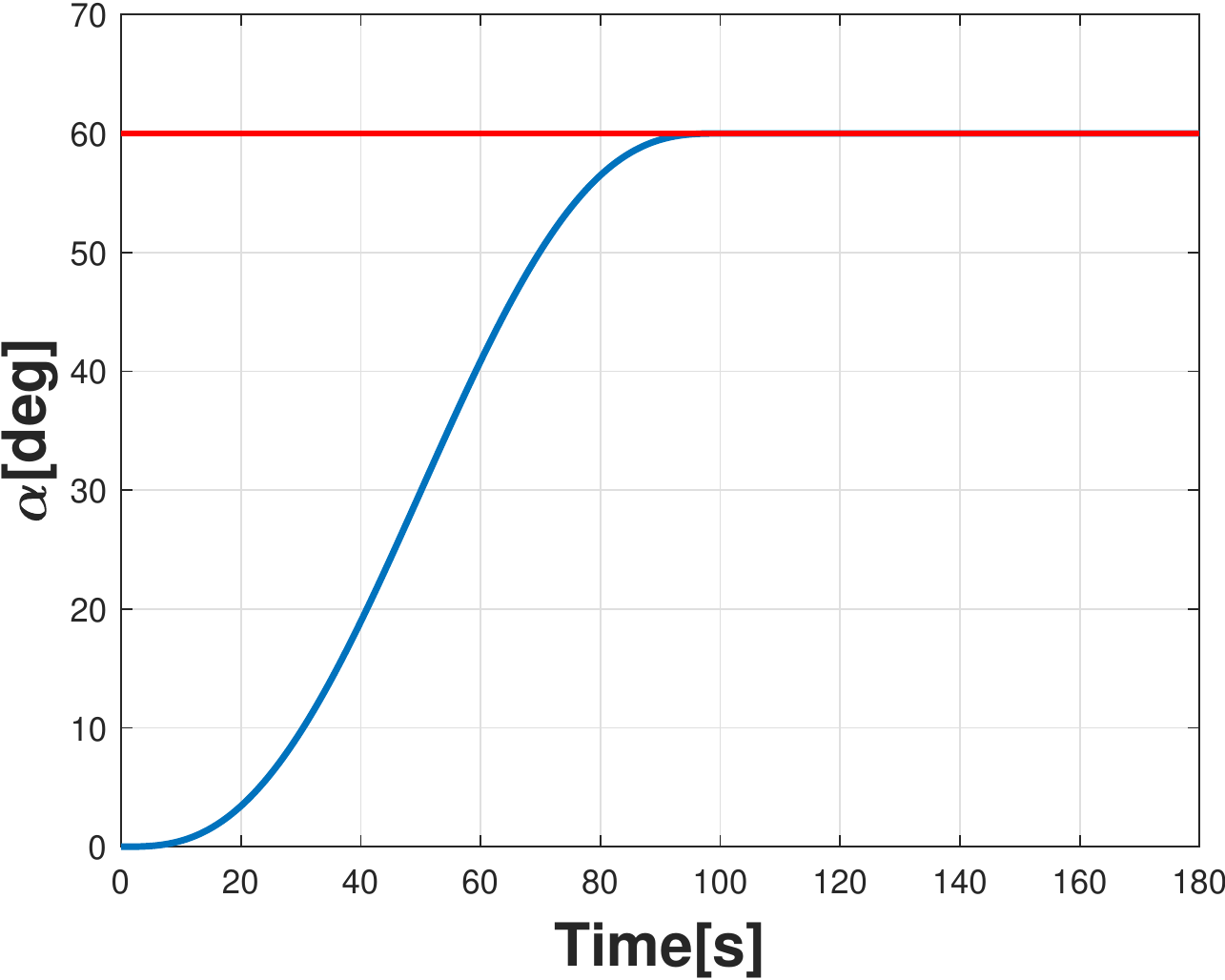}}\label{fig:al}}\hspace{100pt}
\subfloat[Scenario 1. Boom angle $\beta$. Blue line: Nonlinear controller. Red line: Desired reference.]{%
\resizebox*{5cm}{!}{\includegraphics{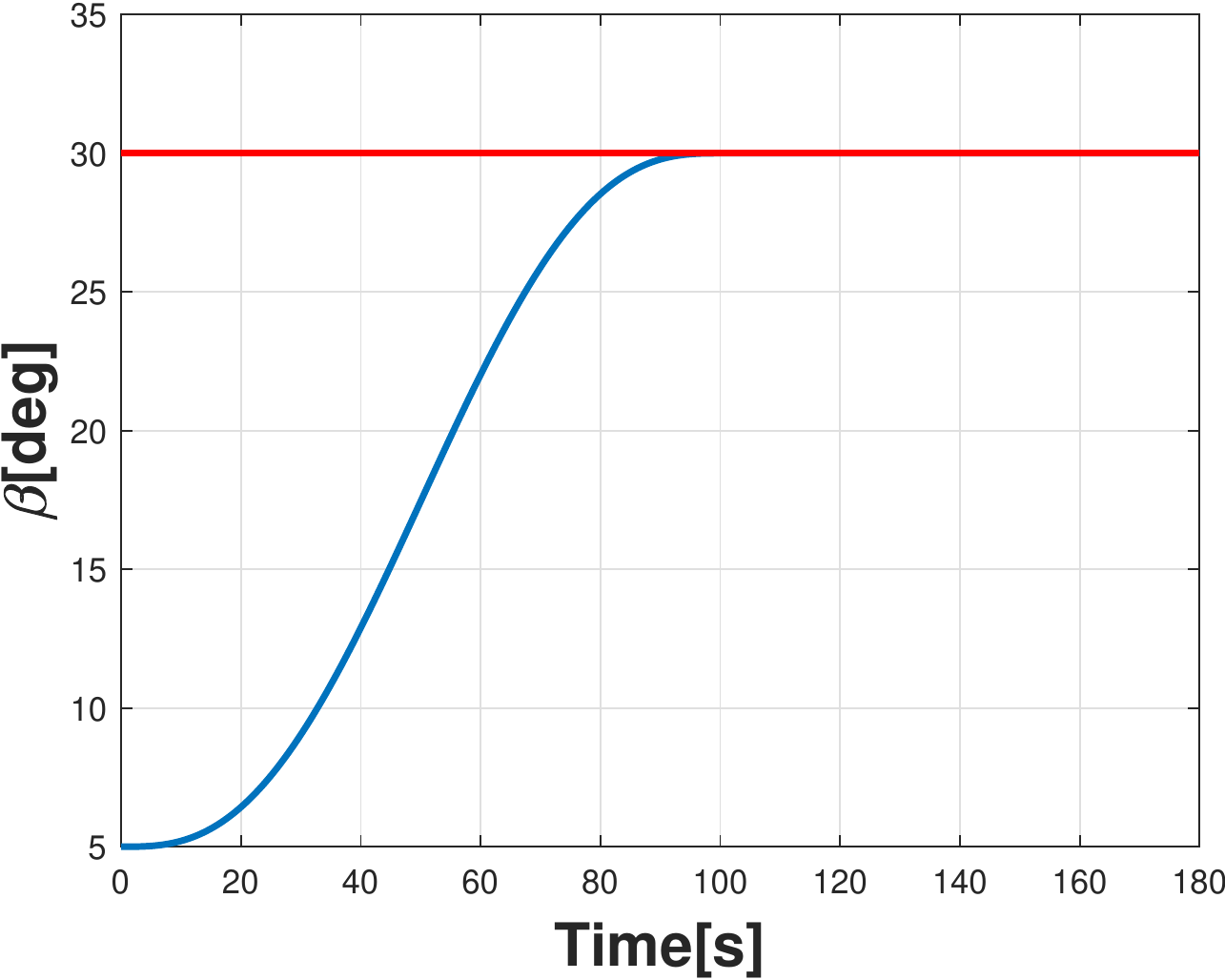}}\label{fig:by}}
\caption{}
\end{figure}

\begin{figure}[ht!]
\centering

\subfloat[Scenario 1. Jib angle $\gamma$. Blue line: Nonlinear controller. Red line: Desired reference.]{%
\resizebox*{5cm}{!}{\includegraphics{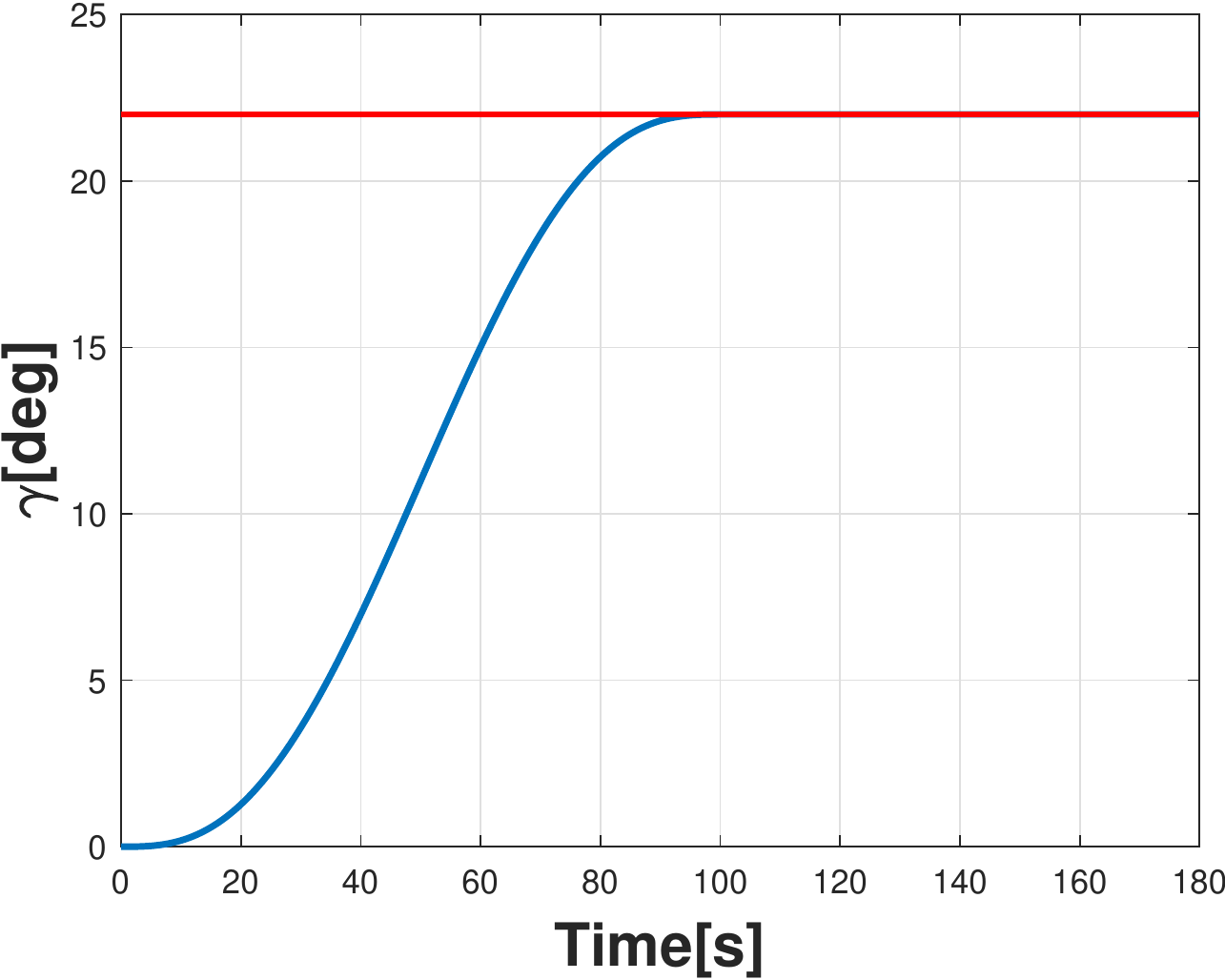}}\label{fig:gm}}\hspace{100pt}
\subfloat[Scenario 1. Rope length $d$. Blue line: Nonlinear controller. Red line: Desired reference.]{%
\resizebox*{5cm}{!}{\includegraphics{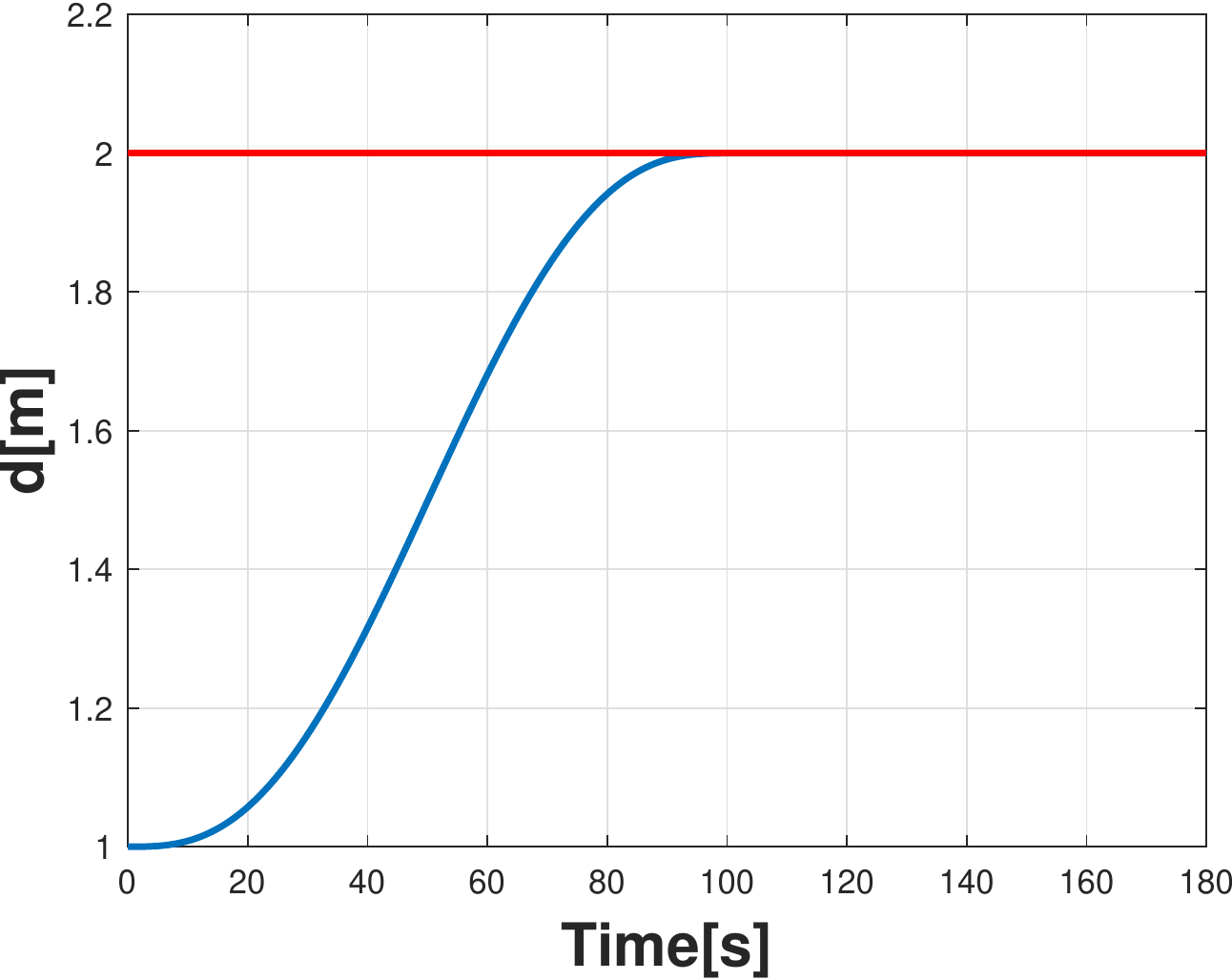}}\label{fig:d}}
\caption{}
\end{figure}

\begin{figure}[ht!]
\centering

\subfloat[Scenario 1. Payload swing angle $\theta_1$.]{%
\resizebox*{5cm}{!}{\includegraphics{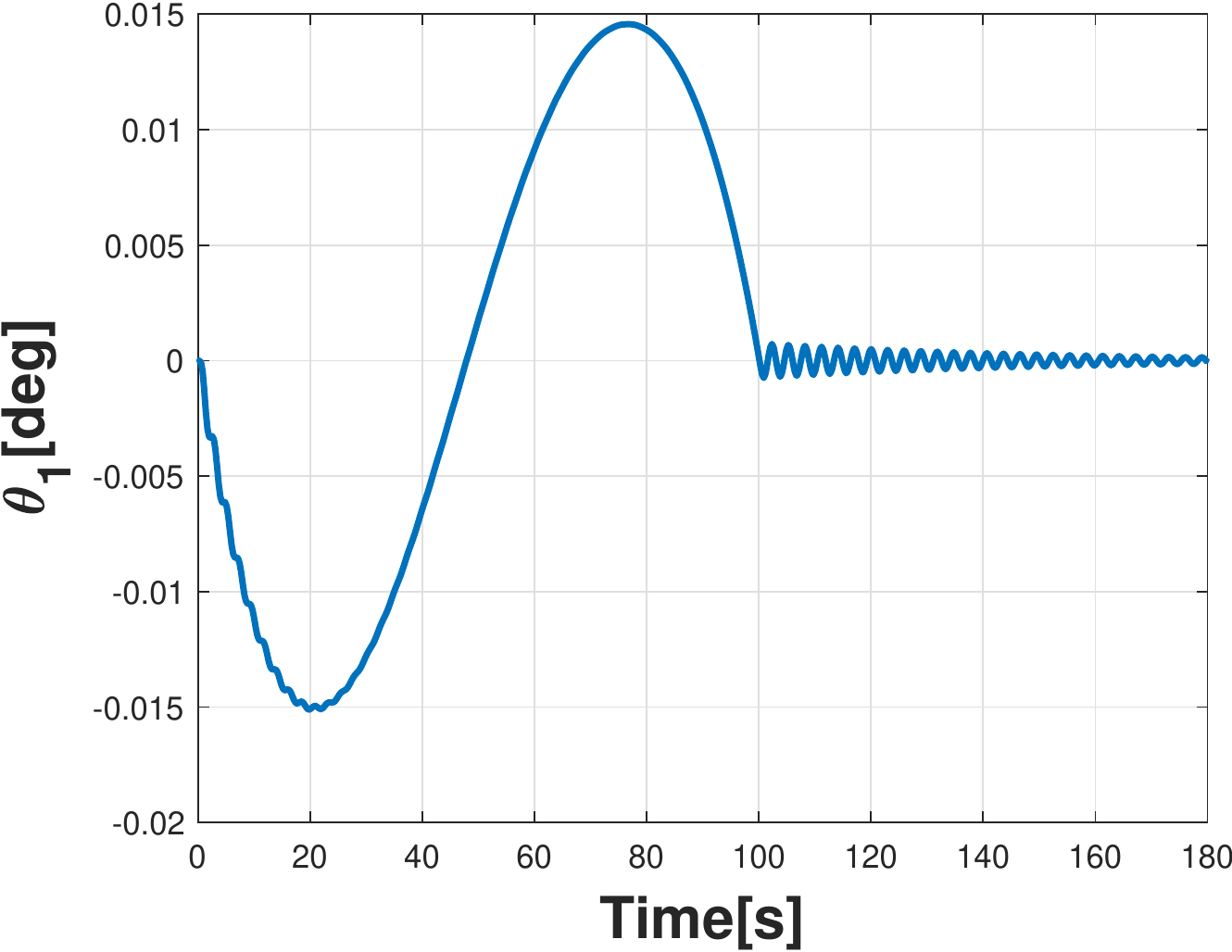}}\label{fig:th1}}\hspace{100pt}
\subfloat[Scenario 1. Payload swing angle $\theta_2$]{%
\resizebox*{5cm}{!}{\includegraphics{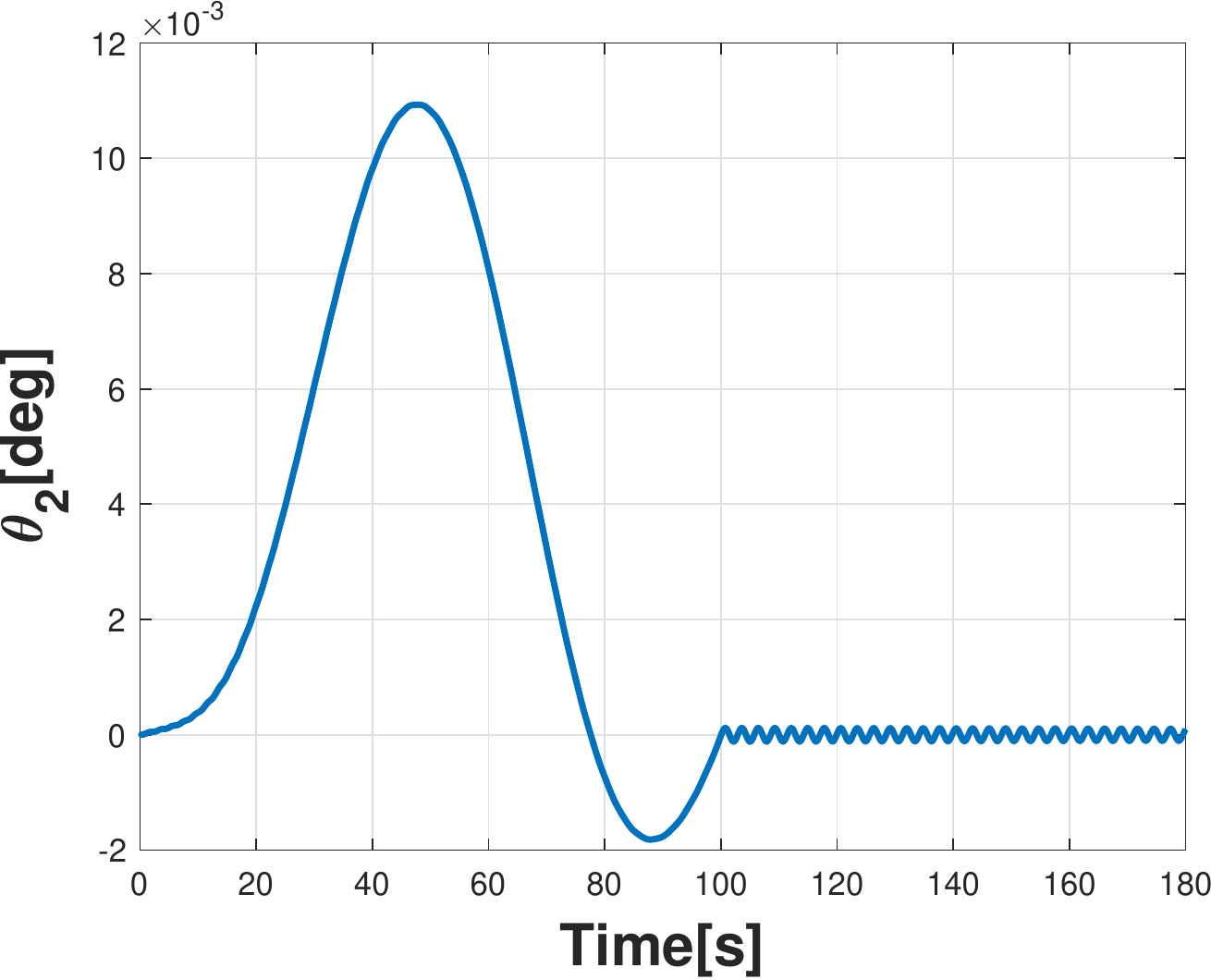}}\label{fig:th2}}
\caption{}
\end{figure}

\begin{figure}[ht!]
\centering
\subfloat[Scenario 1. Control input $u_1, u_2$.]{%
\resizebox*{5cm}{!}{\includegraphics{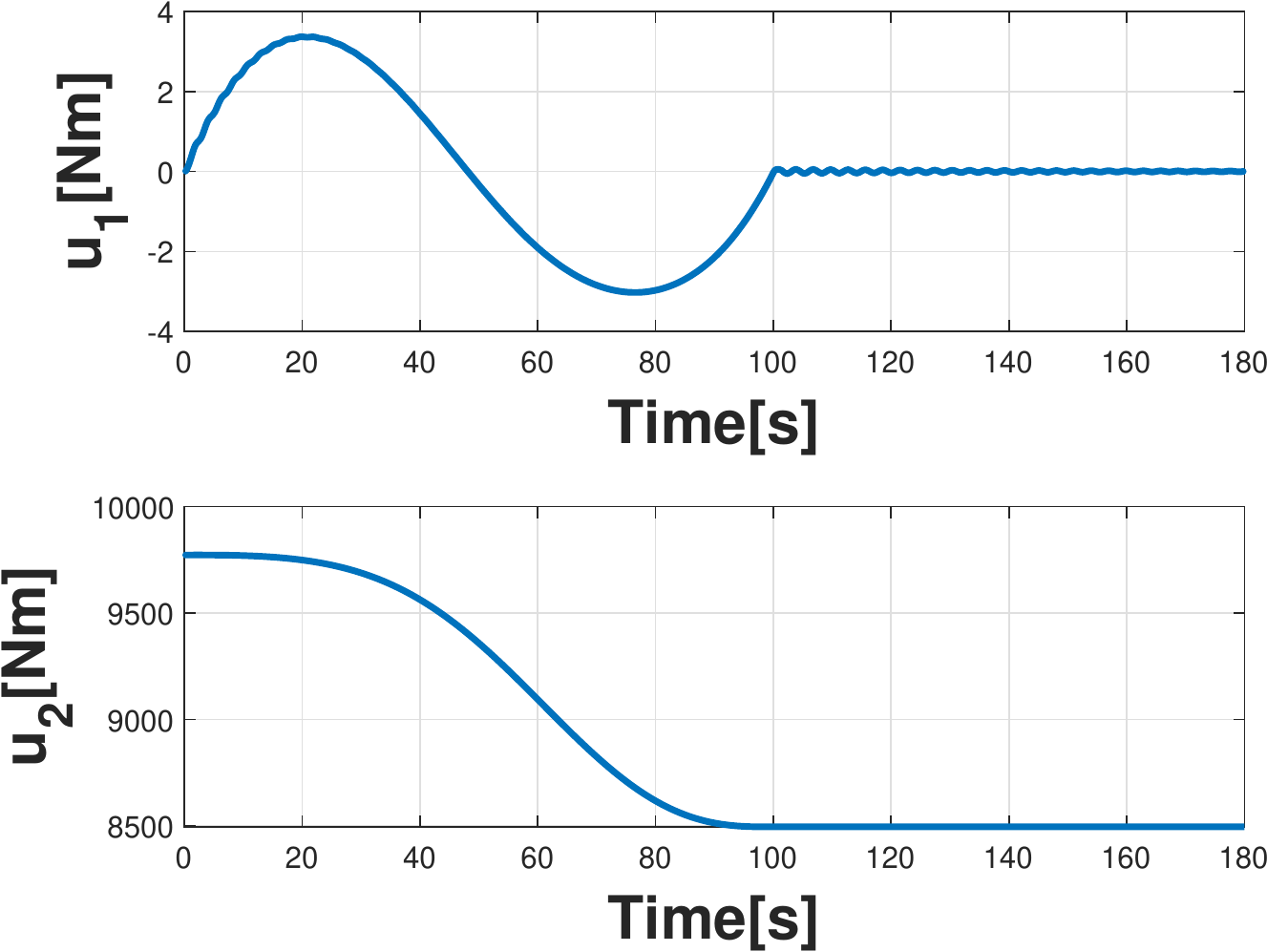}}\label{fig:u1}}\hspace{100pt}
\subfloat[Scenario 1. Control input $u_3, u_4$.]{%
\resizebox*{5cm}{!}{\includegraphics{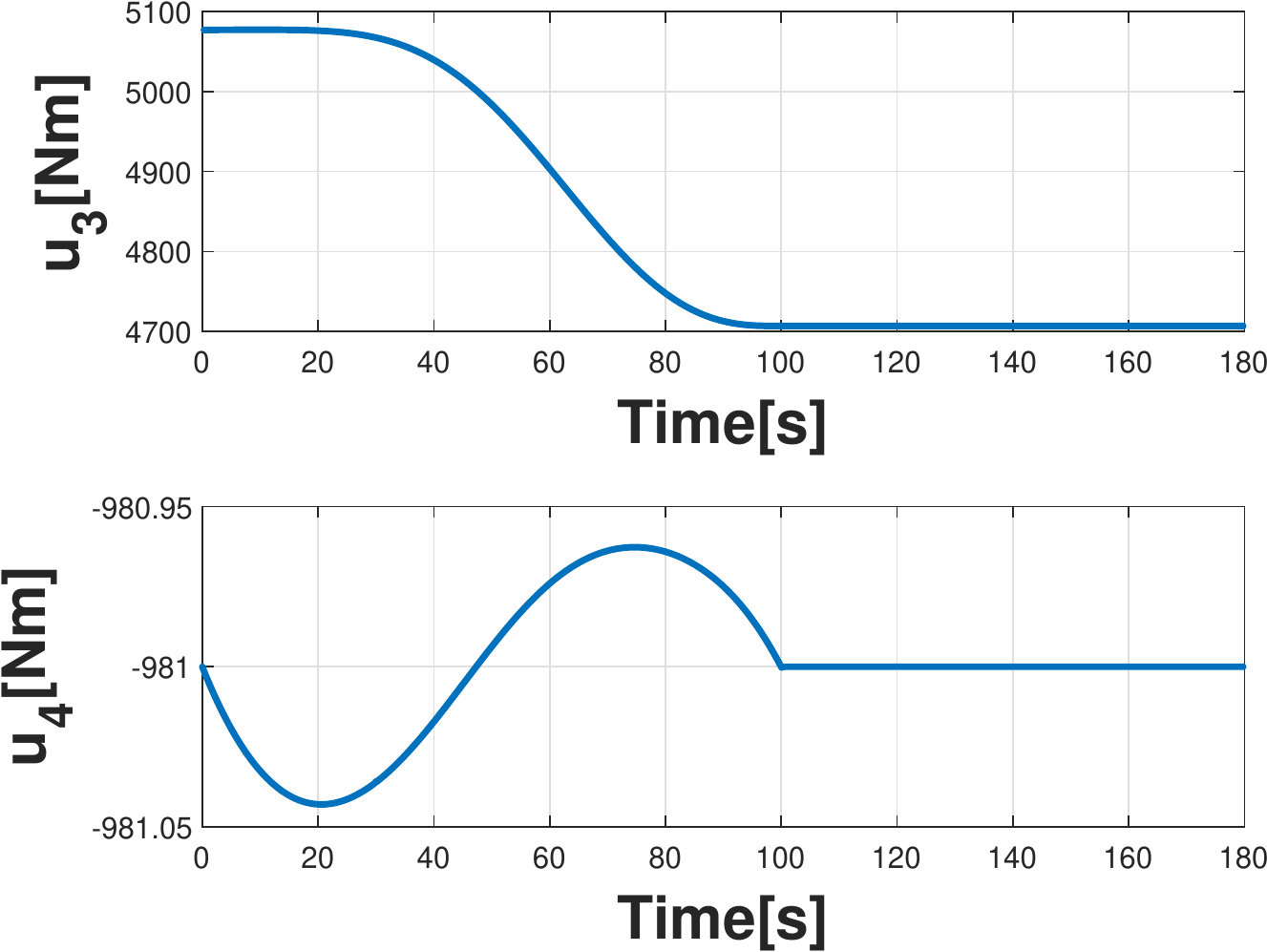}}\label{fig:u2}}
\caption{}
\end{figure}

\textit{Scenario 2.} In this simulation the payload is perturbed by
setting as initial swing angles: $\theta_1(0) \approx 11.5^{\circ}$ and $\theta_2(0) \approx 5.7^{\circ}$ as shown in Figg. \ref{fig:th11}-\ref{fig:th21}. The proposed control scheme is able to steer the knuckle crane to the desired configuration (see Figg.\ref{fig:al1}-\ref{fig:d1}). The behavior of the crane is similar to the previous case. However, the control input have an oscillating trend (see Fig.\ref{fig:u11}-\ref{fig:u21}) due to the residual oscillations of the payload swing angles (see Figg.\ref{fig:th11}-\ref{fig:th21}). Due to the initial perturbations, the residual oscillations of the payload swing angles $\theta_1$ and $\theta_2$ are confined both within $1^{\circ}$ in the time window considered in the simulations.

\medskip

The reader is referred to \url{https://youtu.be/aBA6CoGARvs} for a video containing extra material.

\medskip

\begin{figure}[ht]
\centering

\subfloat[Scenario 2. Tower angle $\alpha$. Blue line: Nonlinear controller. Red line: Desired reference.]{%
\resizebox*{5cm}{!}{\includegraphics{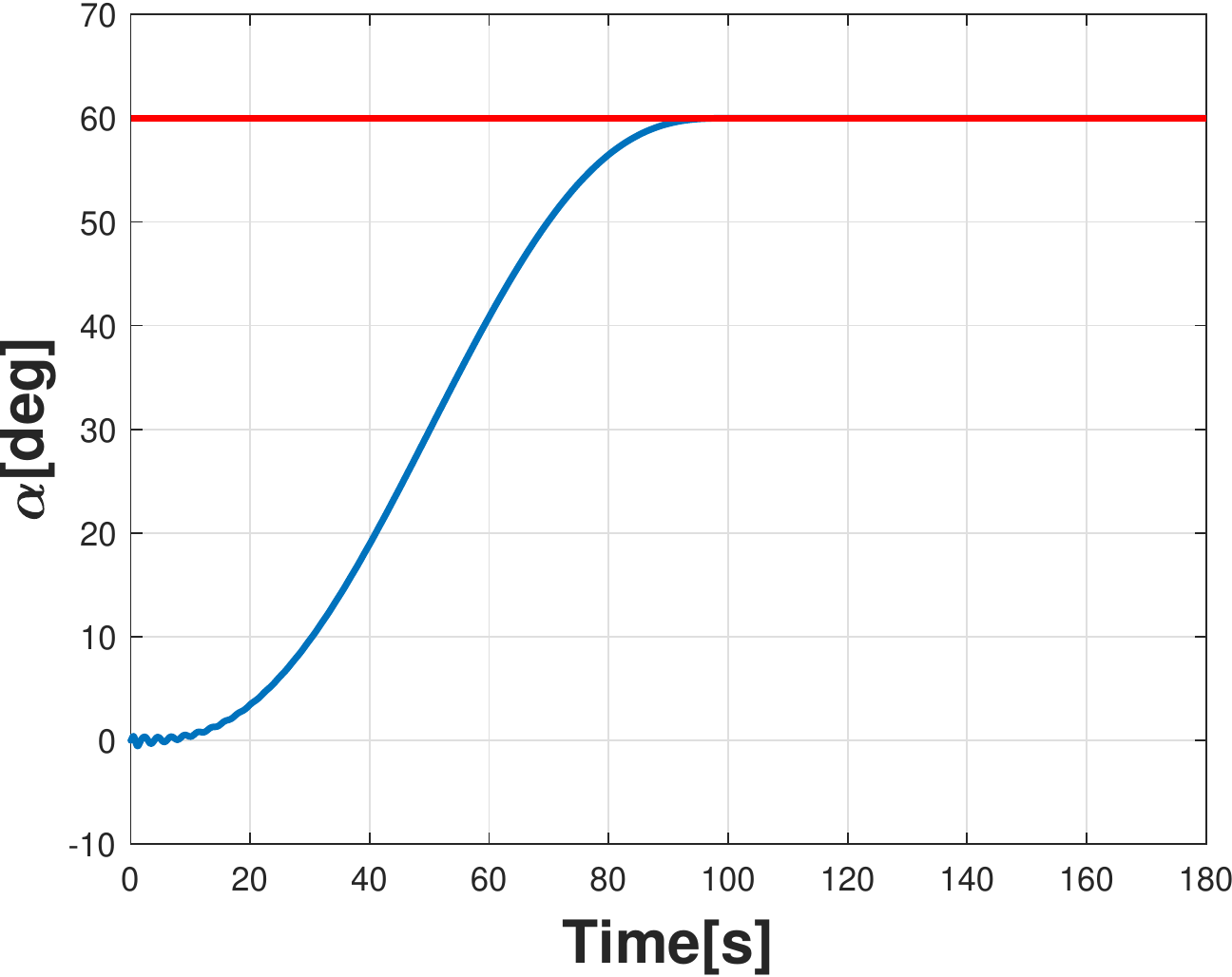}}\label{fig:al1}}\hspace{100pt}
\subfloat[Scenario 2. Boom angle $\beta$. Blue line: Nonlinear controller. Red line: Desired reference.]{%
\resizebox*{5cm}{!}{\includegraphics{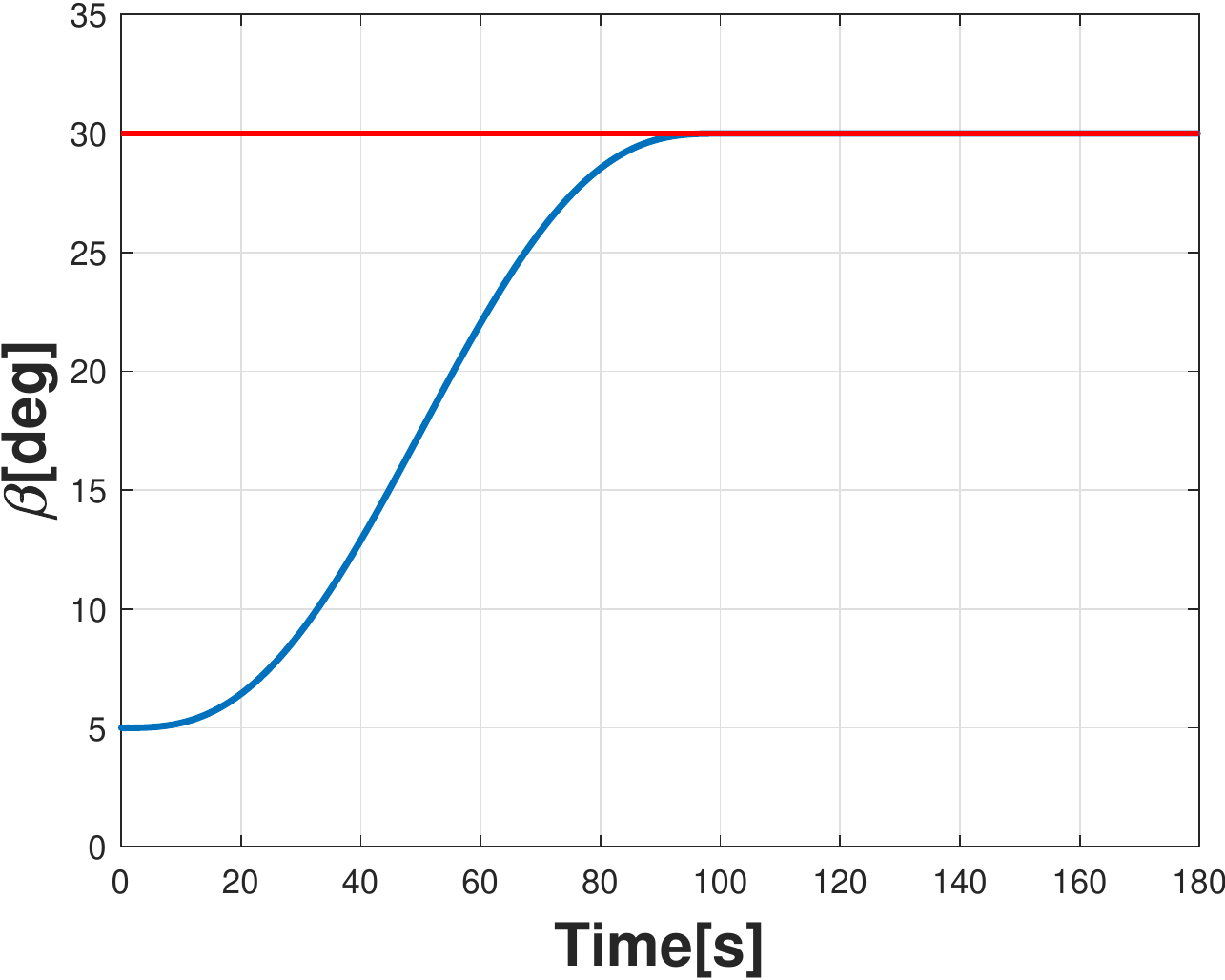}}\label{fig:bt1}}
\caption{}
\end{figure}

\begin{figure}[ht]
\centering

\subfloat[Scenario 2. Jib angle $\gamma$. Blue line: Nonlinear controller. Red line: Desired reference.]{%
\resizebox*{5cm}{!}{\includegraphics{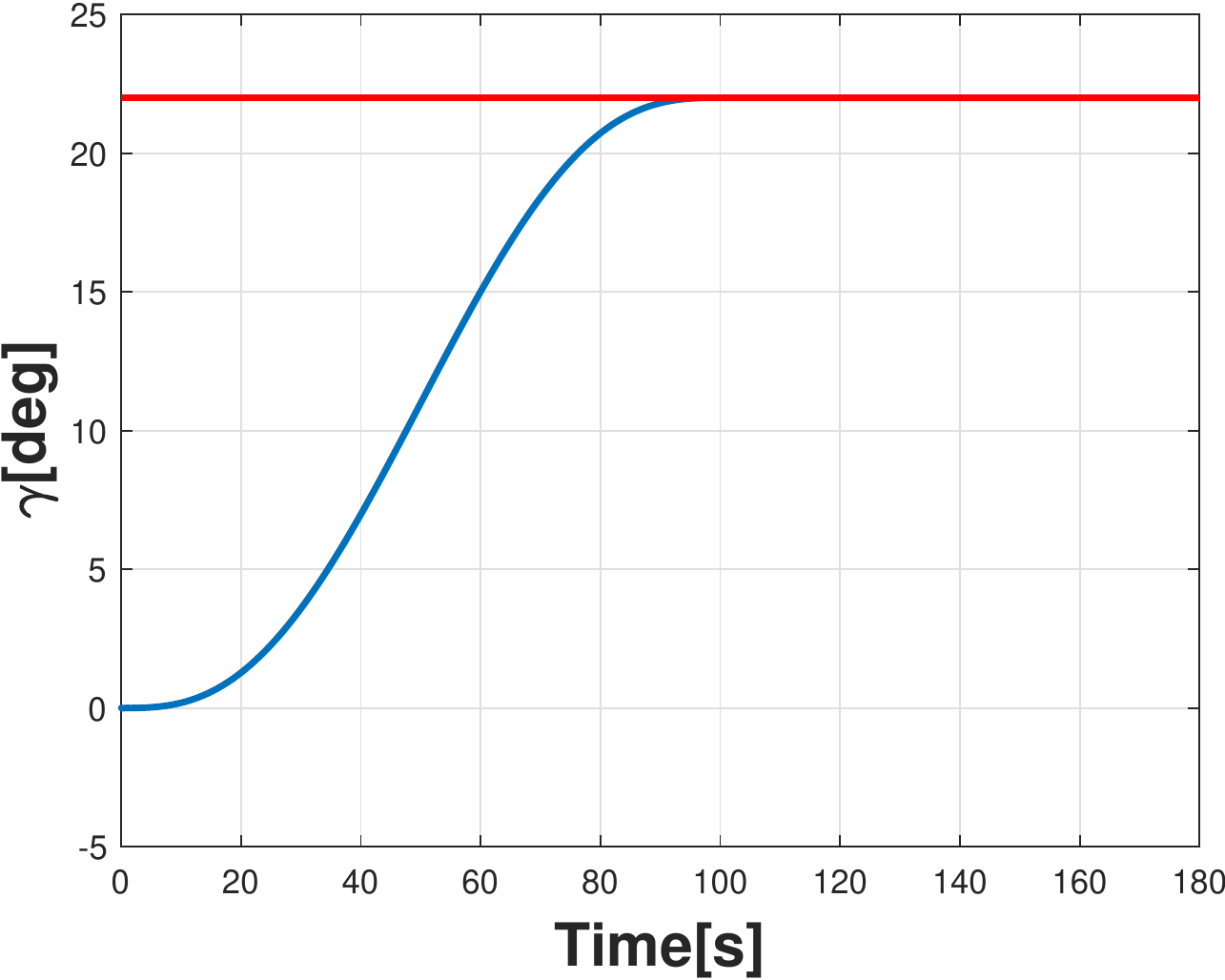}}\label{fig:gm1}}\hspace{100pt}
\subfloat[Scenario 2. Rope length $d$. Blue line: Nonlinear controller. Red line: Desired reference.]{%
\resizebox*{5cm}{!}{\includegraphics{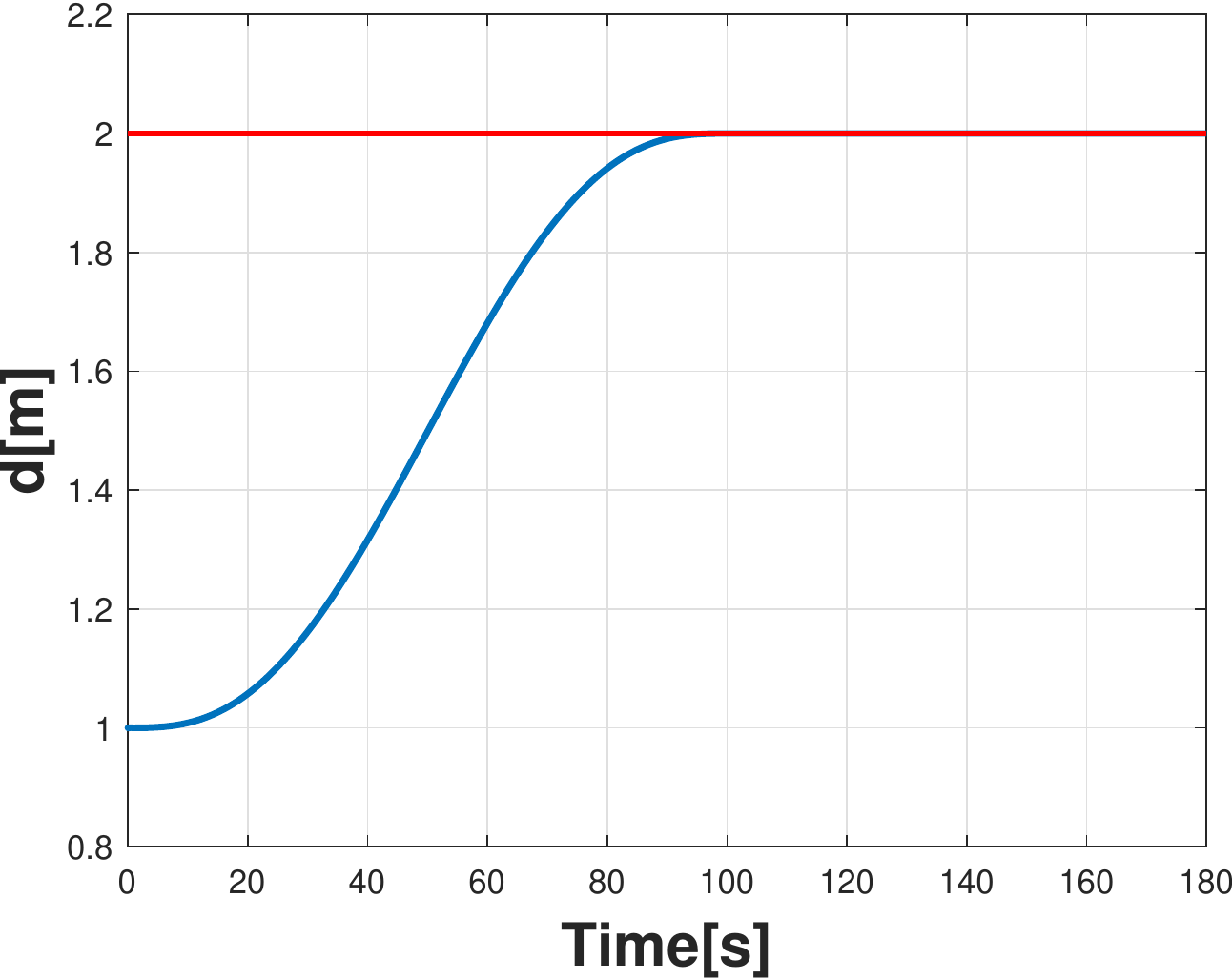}}\label{fig:d1}}
\caption{}
\end{figure}

\begin{figure}[ht]
\centering

\subfloat[Scenario 2. Payload swing angle $\theta_1$.]{%
\resizebox*{5cm}{!}{\includegraphics{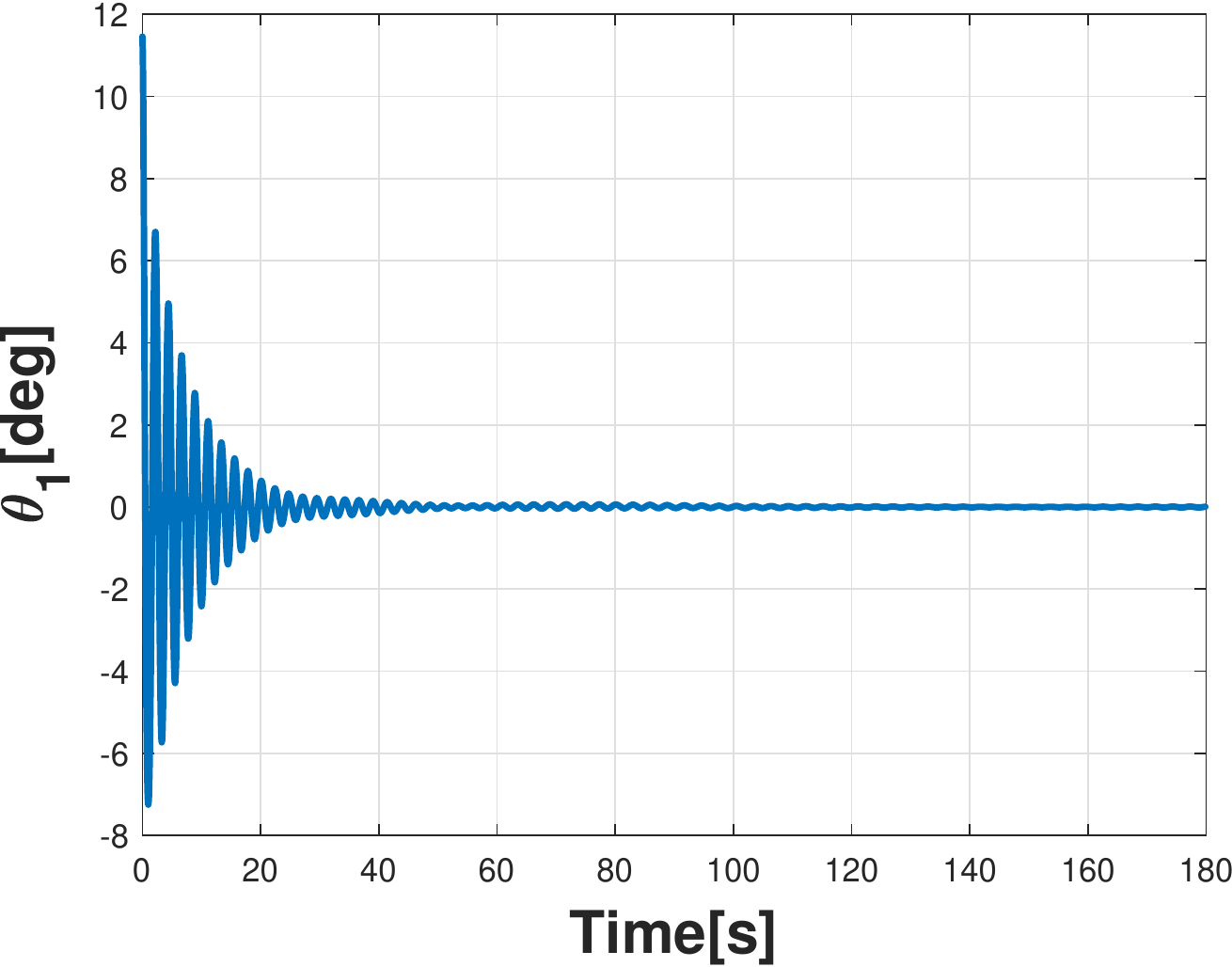}}\label{fig:th11}}\hspace{100pt}
\subfloat[Scenario 2. Payload swing angle $\theta_2$]{%
\resizebox*{5cm}{!}{\includegraphics{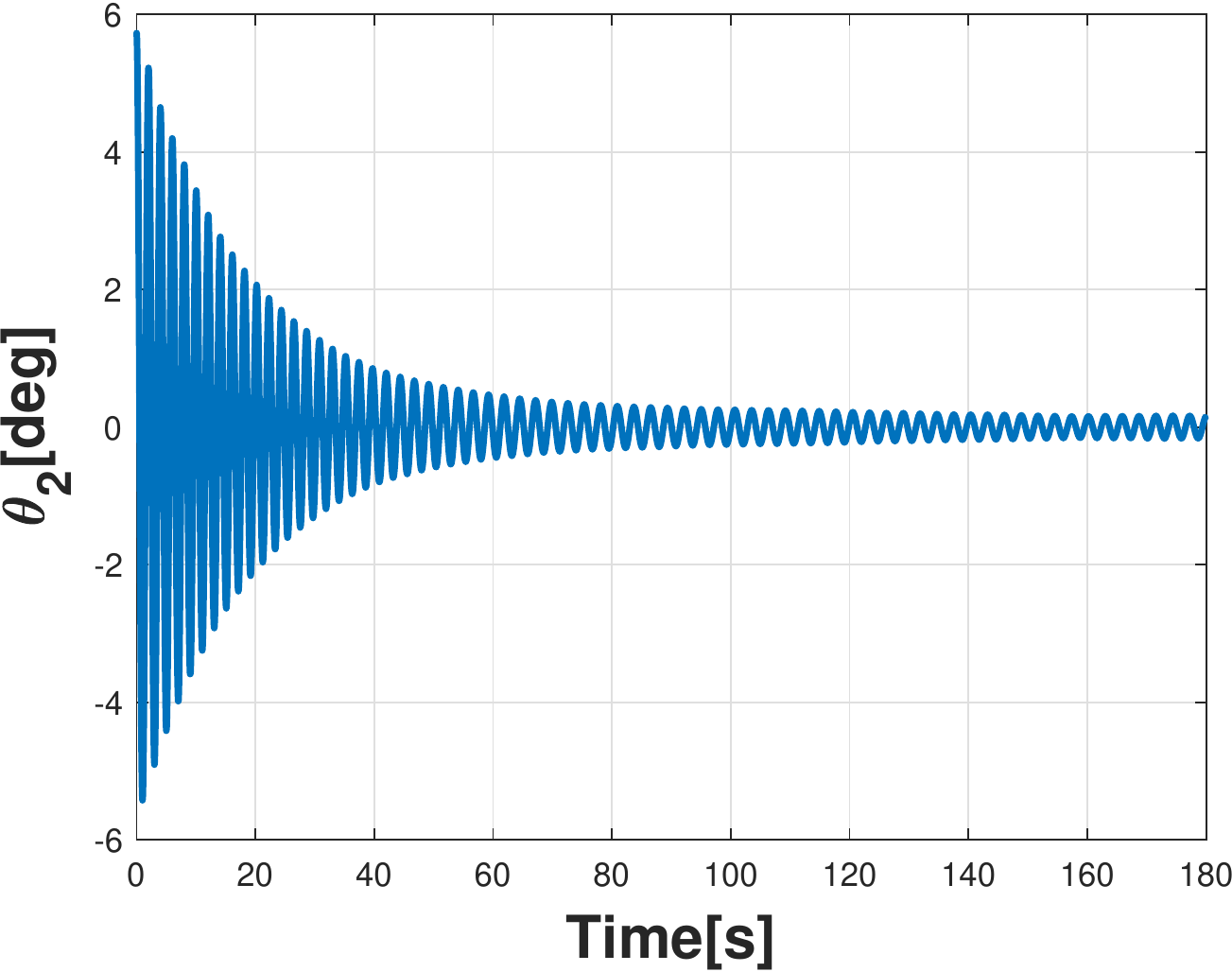}}\label{fig:th21}}
\caption{}
\end{figure}

\begin{figure}[ht]
\centering
\subfloat[Scenario 2. Control input $u_1, u_2$.]{%
\resizebox*{5cm}{!}{\includegraphics{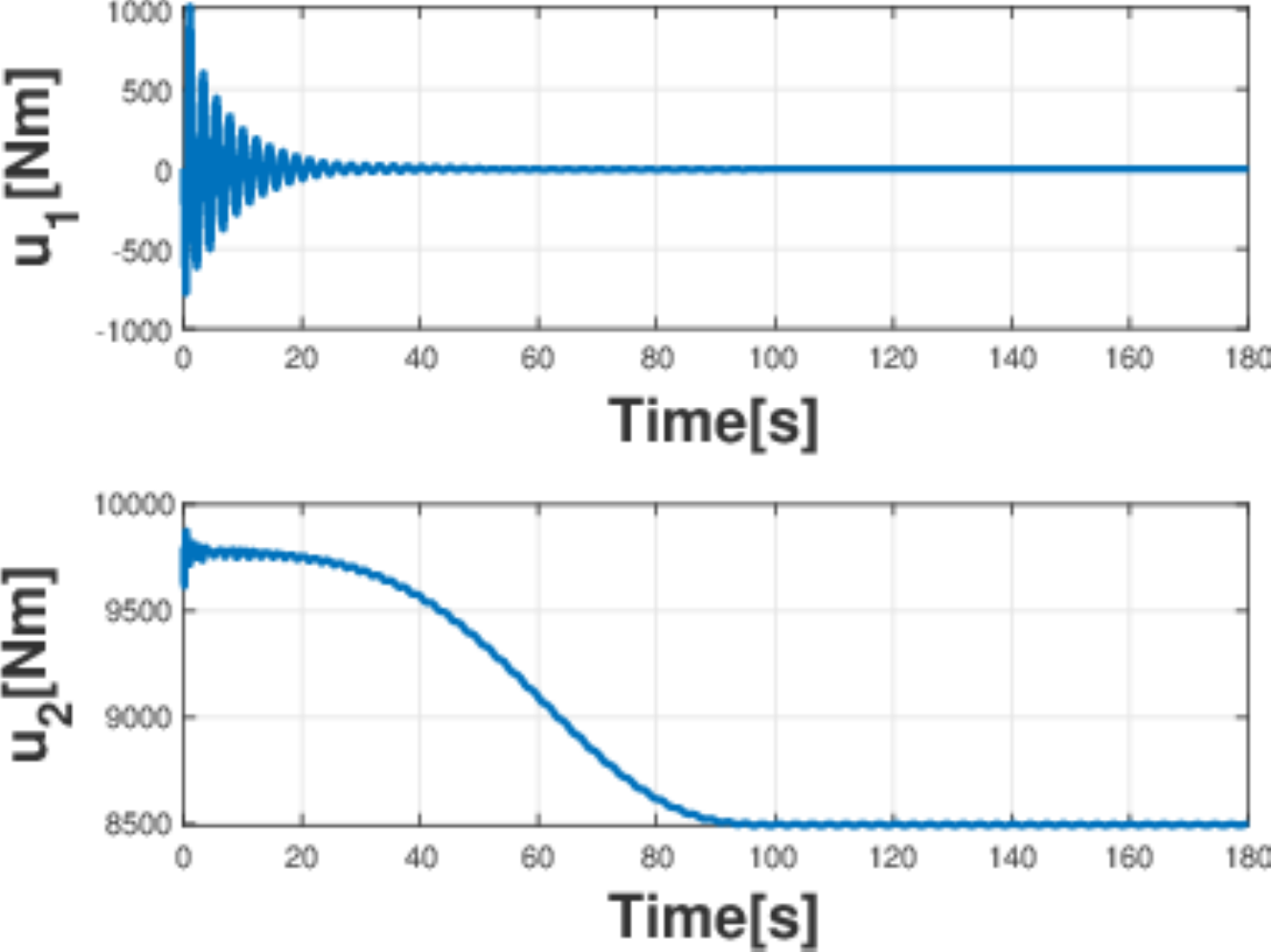}}\label{fig:u11}}\hspace{100pt}
\subfloat[Scenario 2. Control input $u_3, u_4$.]{%
\resizebox*{5cm}{!}{\includegraphics{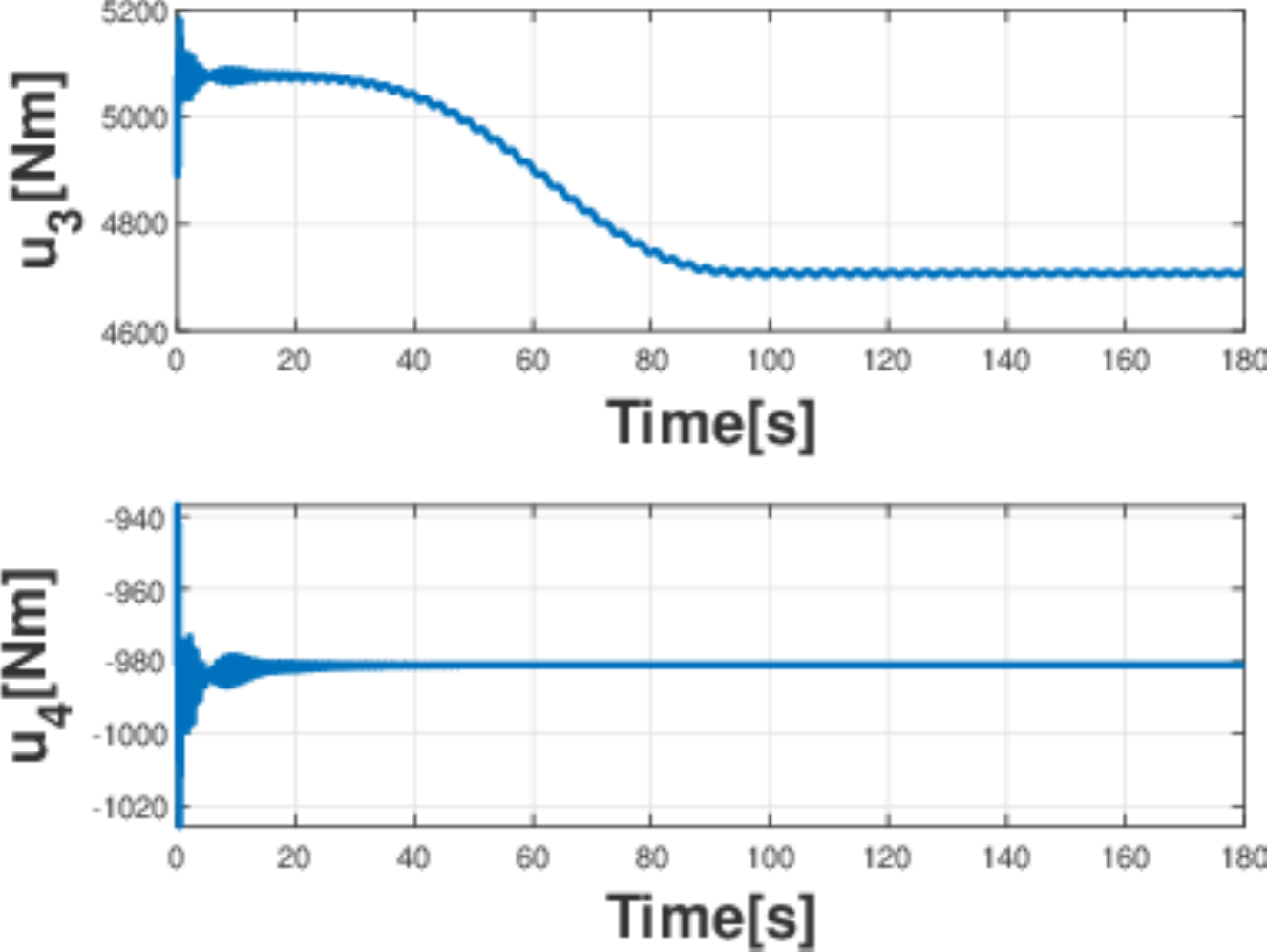}}\label{fig:u21}}
\caption{}
\end{figure}

\textit{Scenario 3.} In this simulation, to evaluate the robustness of our control scheme  w.r.t. non-perfect gravity compensation, we lift a payload with different mass than the one known by the control law. The payload mass has been set to 50 kg. The simulation results are reported in Figg.\ref{fig:al2}-\ref{fig:d2}, where we can see that the crane reaches the desired positions with a negligible error in terms of desired angles configurations and desired cable length. Additionally, the payload swing amplitudes (Figg.\ref{fig:th12}-\ref{fig:th22}) have negligible residual swings. As one can see from Figg.\ref{fig:u12}-\ref{fig:u22}, the input values, due to the different mass of the payload, have lower values than in the previous cases because they strictly depend on the payload mass value, especially for the terms relating to gravity compensation. As expected the input $u_4$ in Fig.\ref{fig:u21} is roughly doubled compared to the one in Fig.\ref{fig:u22}.

\medskip

\begin{figure}[ht!]
\centering

\subfloat[Scenario 3. Tower angle $\alpha$. Blue line: Nonlinear controller. Red line: Desired reference.]{%
\resizebox*{5cm}{!}{\includegraphics{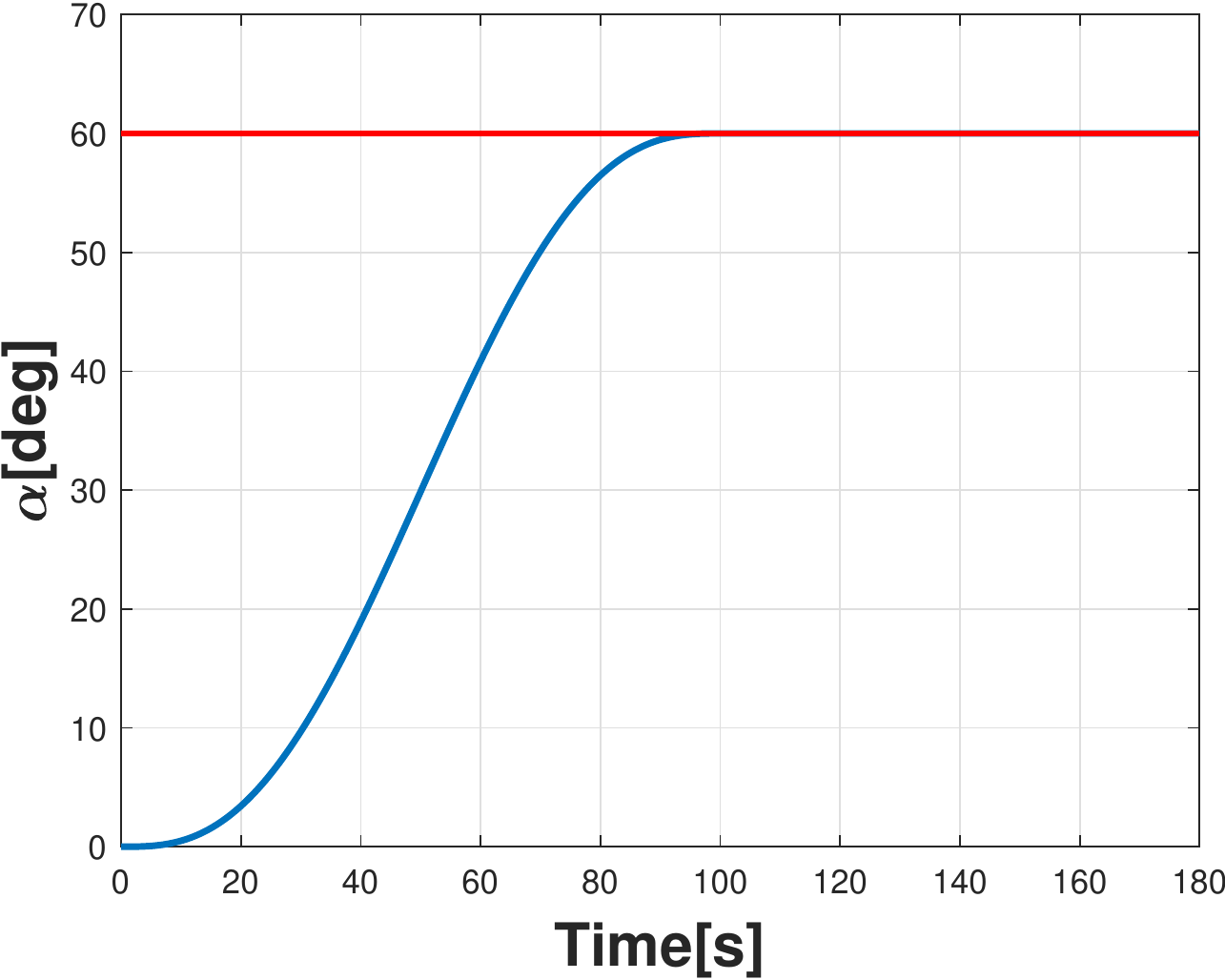}}\label{fig:al2}}\hspace{100pt}
\subfloat[Scenario 3. Boom angle $\beta$. Blue line: Nonlinear controller. Red line: Desired reference.]{%
\resizebox*{5cm}{!}{\includegraphics{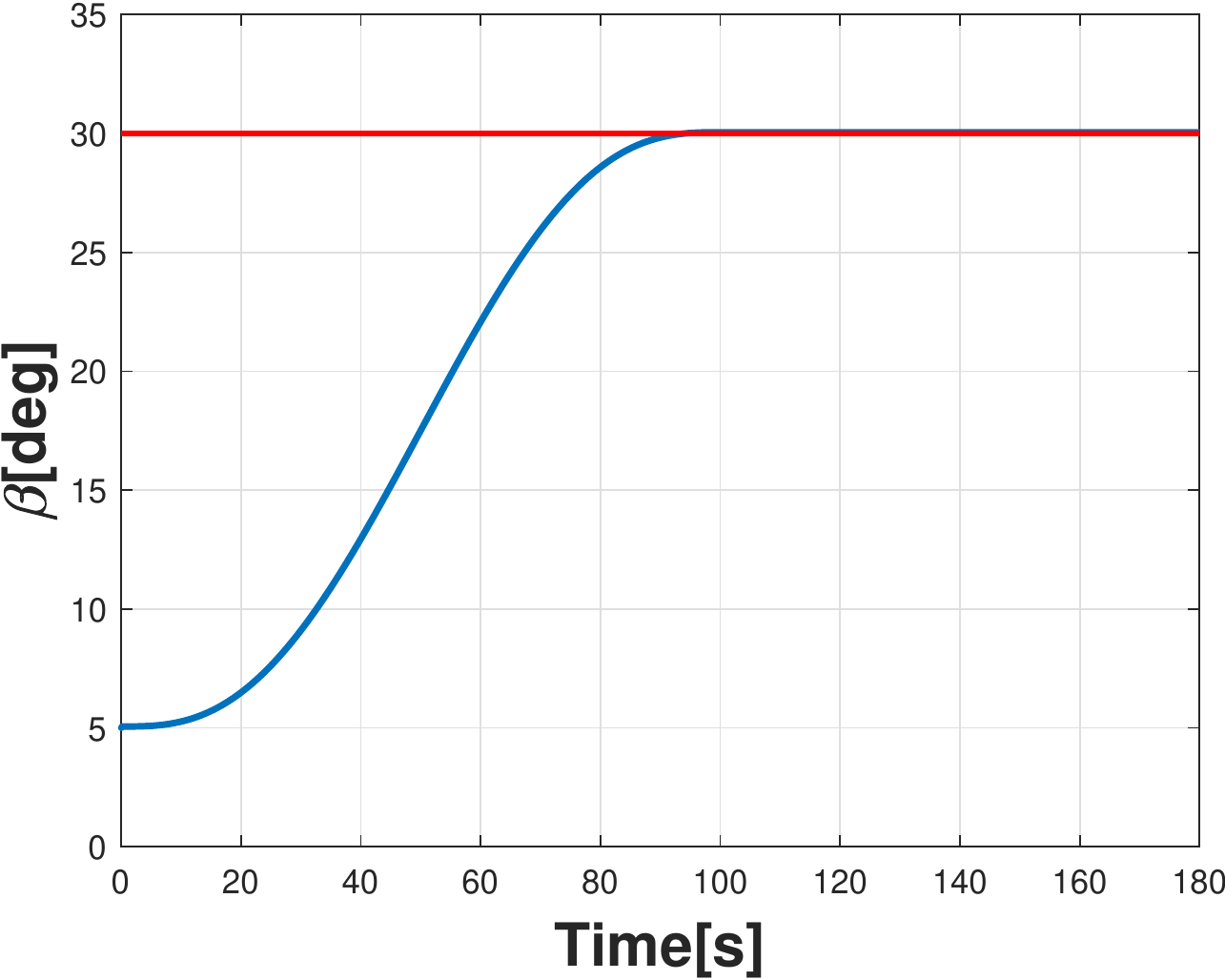}}\label{fig:bt2}}
\caption{}
\end{figure}

\begin{figure}[ht!]
\centering

\subfloat[Scenario 3. Jib angle $\gamma$. Blue line: Nonlinear controller. Red line: Desired reference.]{%
\resizebox*{5cm}{!}{\includegraphics{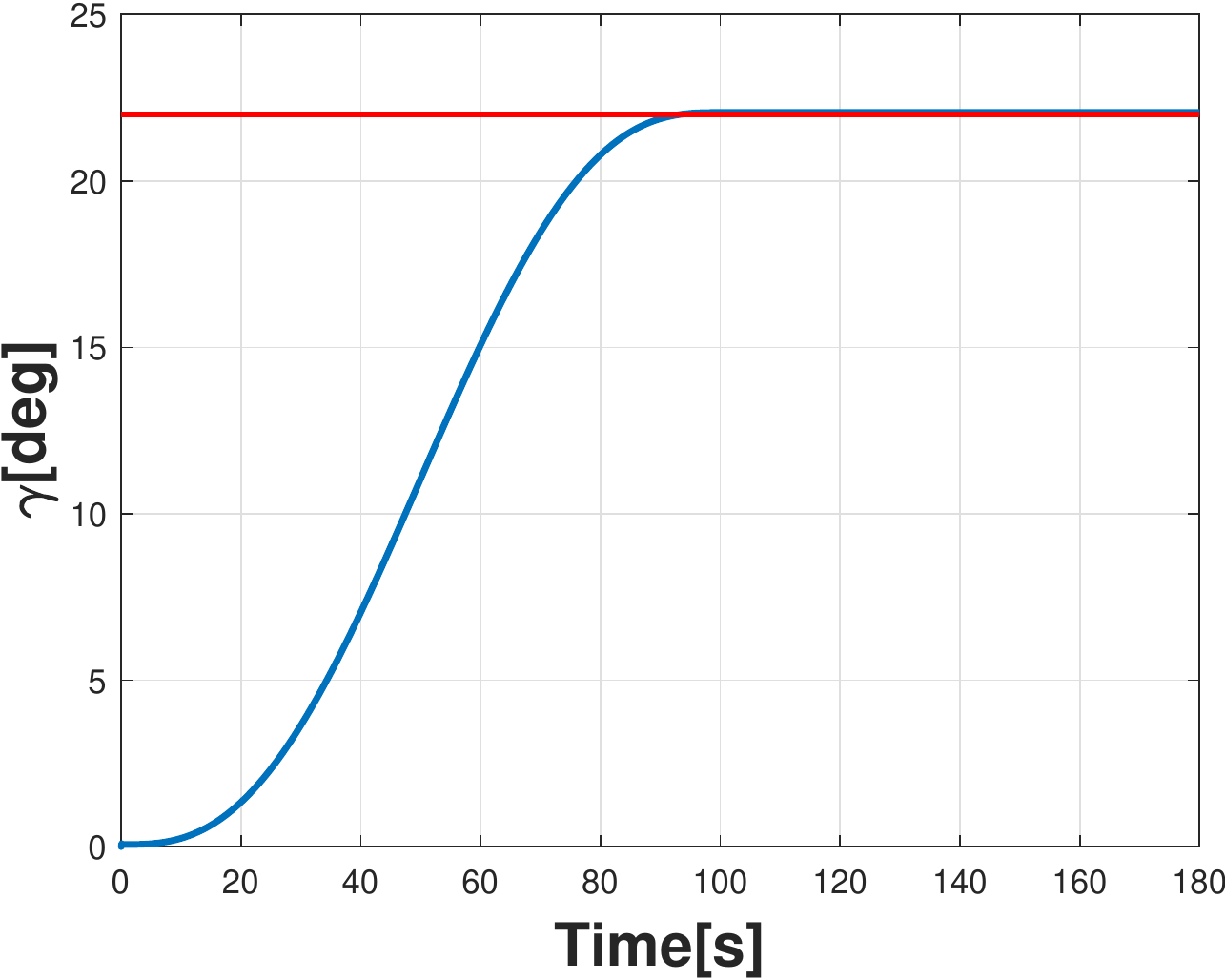}}\label{fig:gm2}}\hspace{100pt}
\subfloat[Scenario 3. Rope length $d$. Blue line: Nonlinear controller. Red line: Desired reference.]{%
\resizebox*{5cm}{!}{\includegraphics{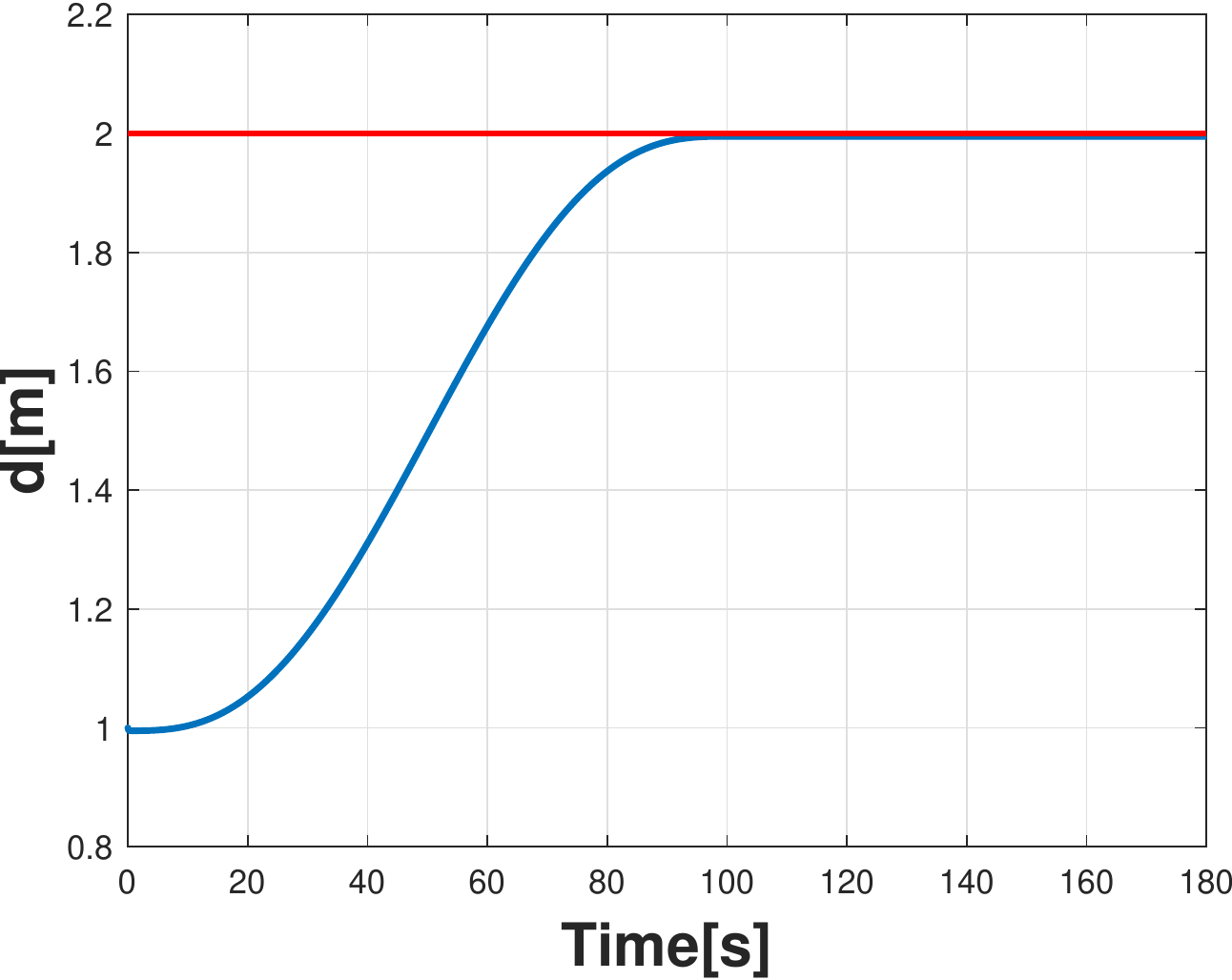}}\label{fig:d2}}
\caption{}
\end{figure}

\begin{figure}[ht!]
\centering

\subfloat[Scenario 3. Payload swing angle $\theta_1$.]{%
\resizebox*{5cm}{!}{\includegraphics{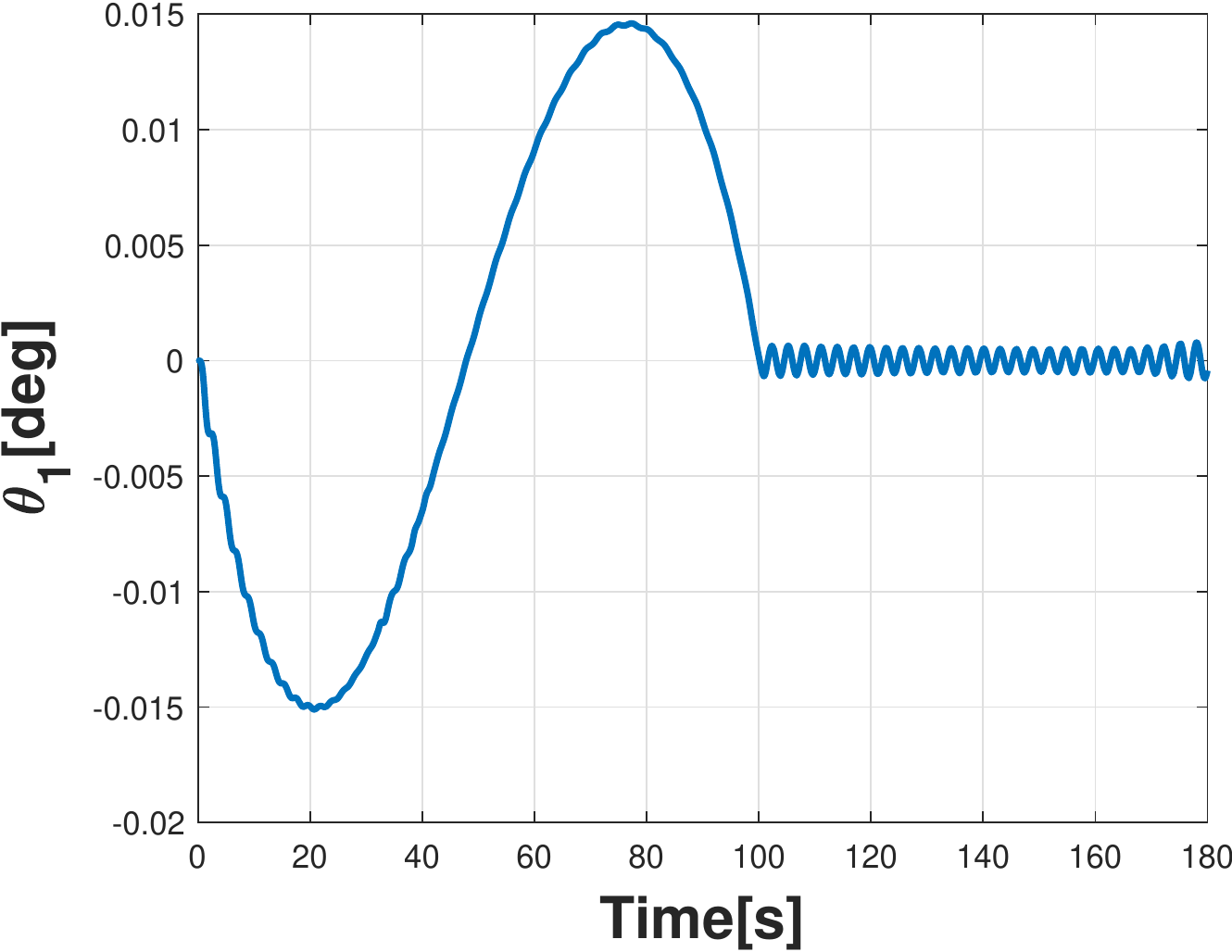} }\label{fig:th12}}\hspace{100pt}
\subfloat[Scenario 3. Payload swing angle $\theta_2$]{%
\resizebox*{5cm}{!}{\includegraphics{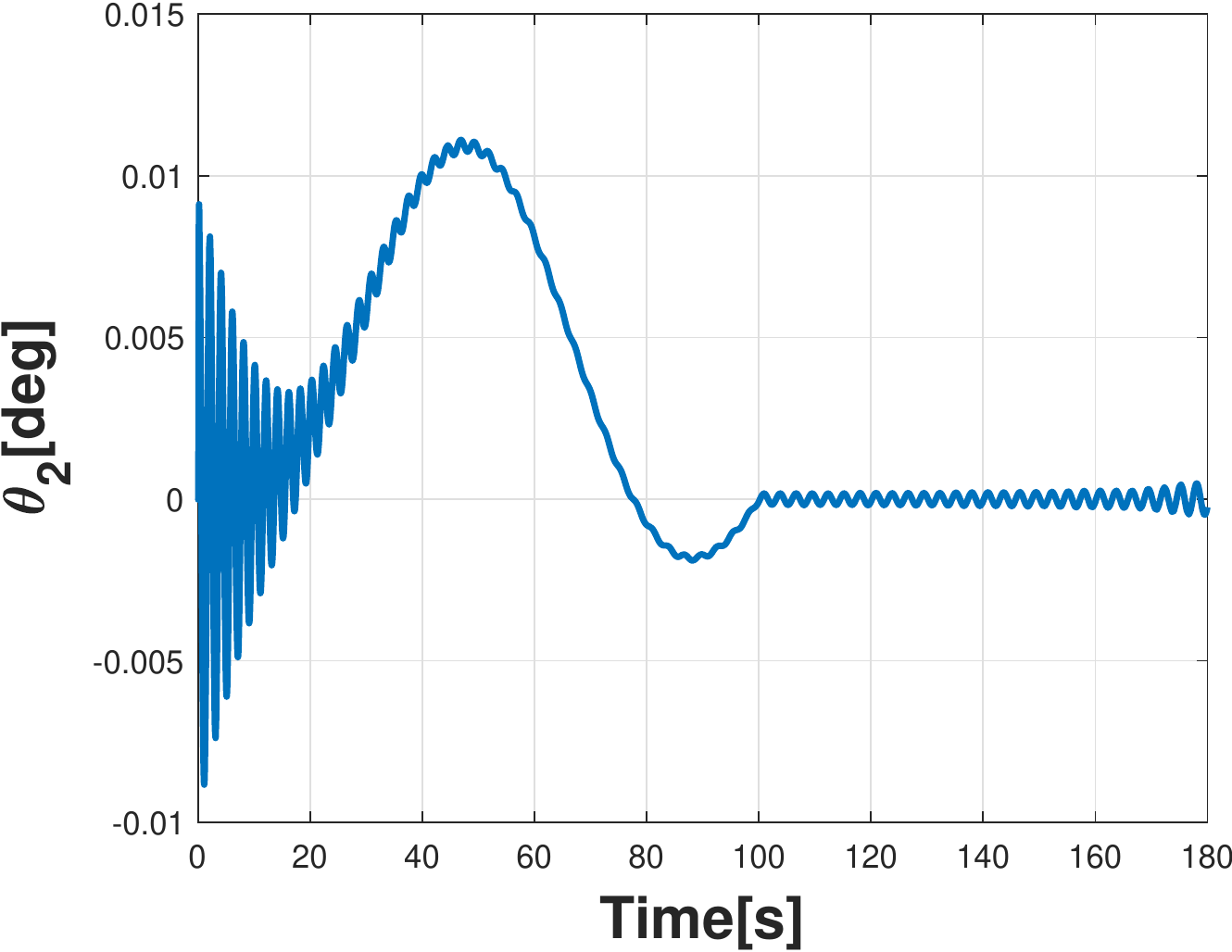}}\label{fig:th22}}
\caption{}
\end{figure}

\begin{figure}[ht!]
\centering
\subfloat[Scenario 3. Control input $u_1, u_2$.]{%
\resizebox*{5cm}{!}{\includegraphics{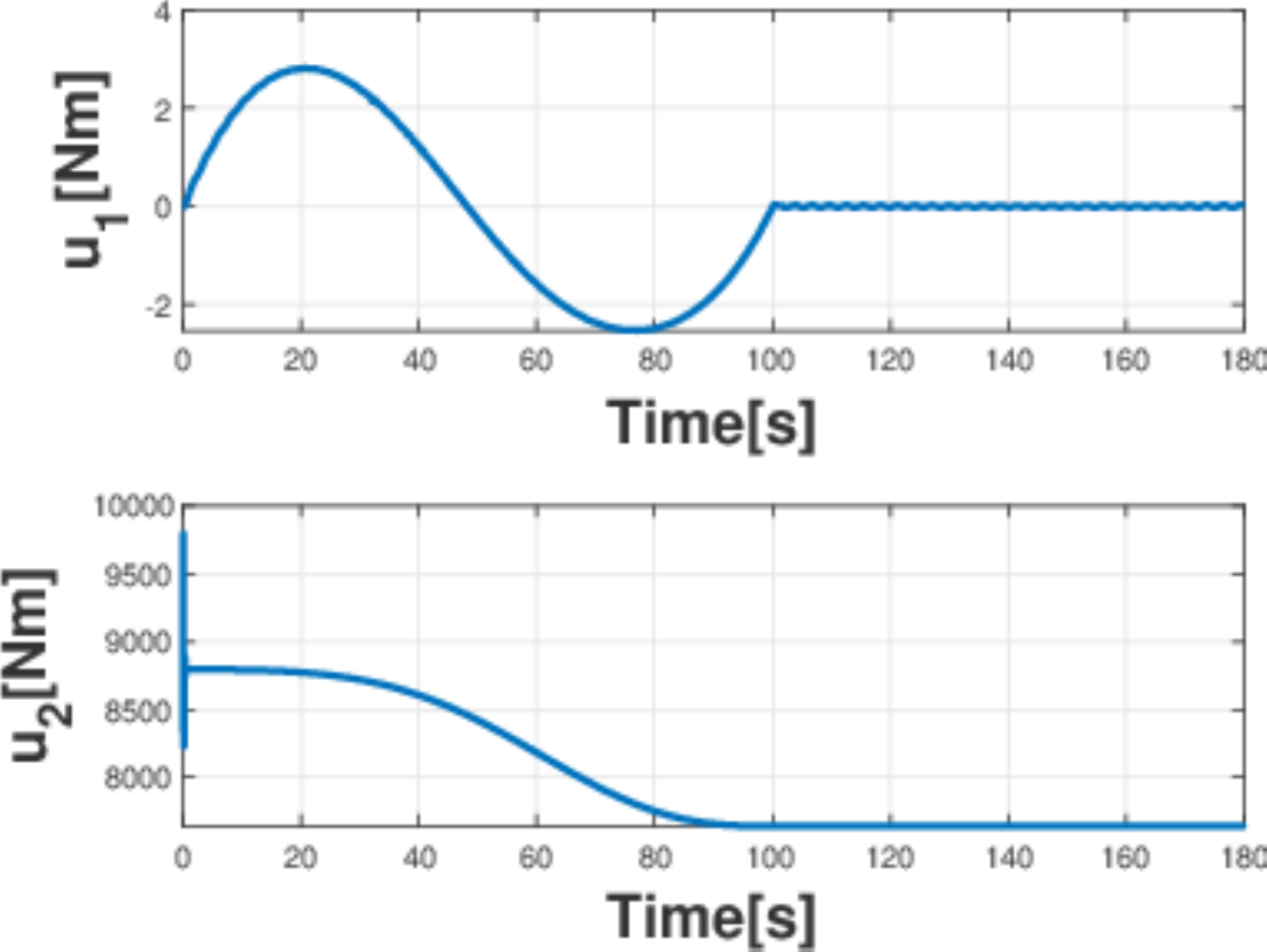}}\label{fig:u12}}\hspace{100pt}
\subfloat[Scenario 3. Control input $u_3, u_4$.]{%
\resizebox*{5cm}{!}{\includegraphics{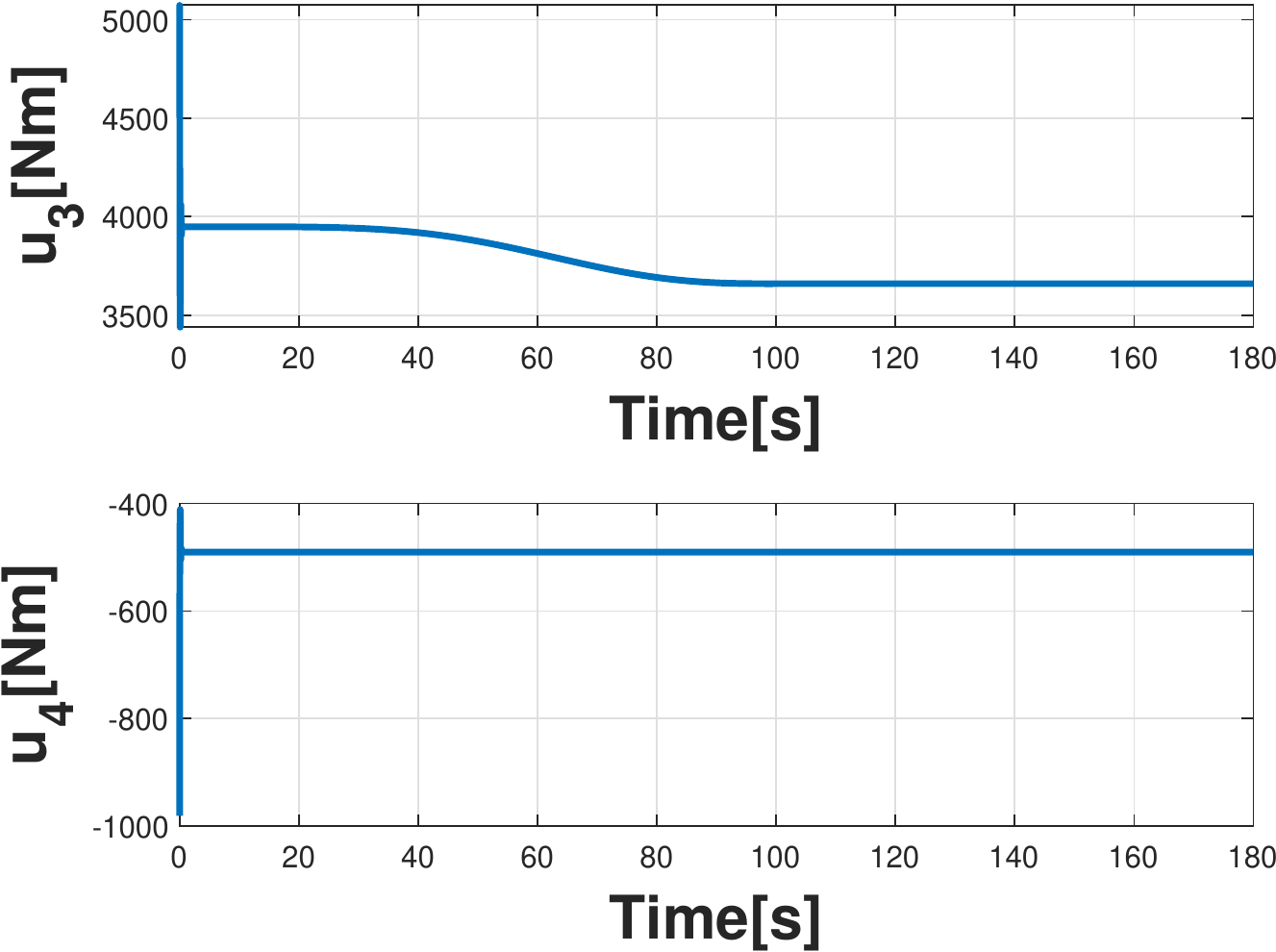}}\label{fig:u22}}
\caption{}
\end{figure}

\textit{Scenario 4.} In this simulation, we consider a gust of wind as external disturbance acting on the crane to demonstrate that the proposed control scheme is robust w.r.t an external unmodeled force. As one can see in Figg.~\ref{fig:th14}-\ref{fig:th24}, when the gust of wind occurs (at around 30 seconds), initially the swing angles $\theta_1$ and $\theta_2$ increase but then the residual oscillations of the payload swing angles $\theta_1$ and $\theta_2$ are confined within $1^{\circ}$ and $2^{\circ}$, respectively, in the time window considered in the simulations. As expected, the controller is able to counteract reasonably well the external disturbance, therefore no major change occurs during the desired trajectory (see Figg.\ref{fig:al4}-\ref{fig:d4}). 

\medskip

It is worth noting that the simulations shown in \textit{Scenario 3} and \textit{Scenario 4} allow us to state that even if the crane system model is not accurate, the proposed control law is able to effectively counteract this type of model mismatch.

\medskip

\begin{figure}[ht!]
\centering

\subfloat[Scenario 4. Tower angle $\alpha$. Blue line: Nonlinear controller. Red line: Desired reference.]{%
\resizebox*{5cm}{!}{\includegraphics{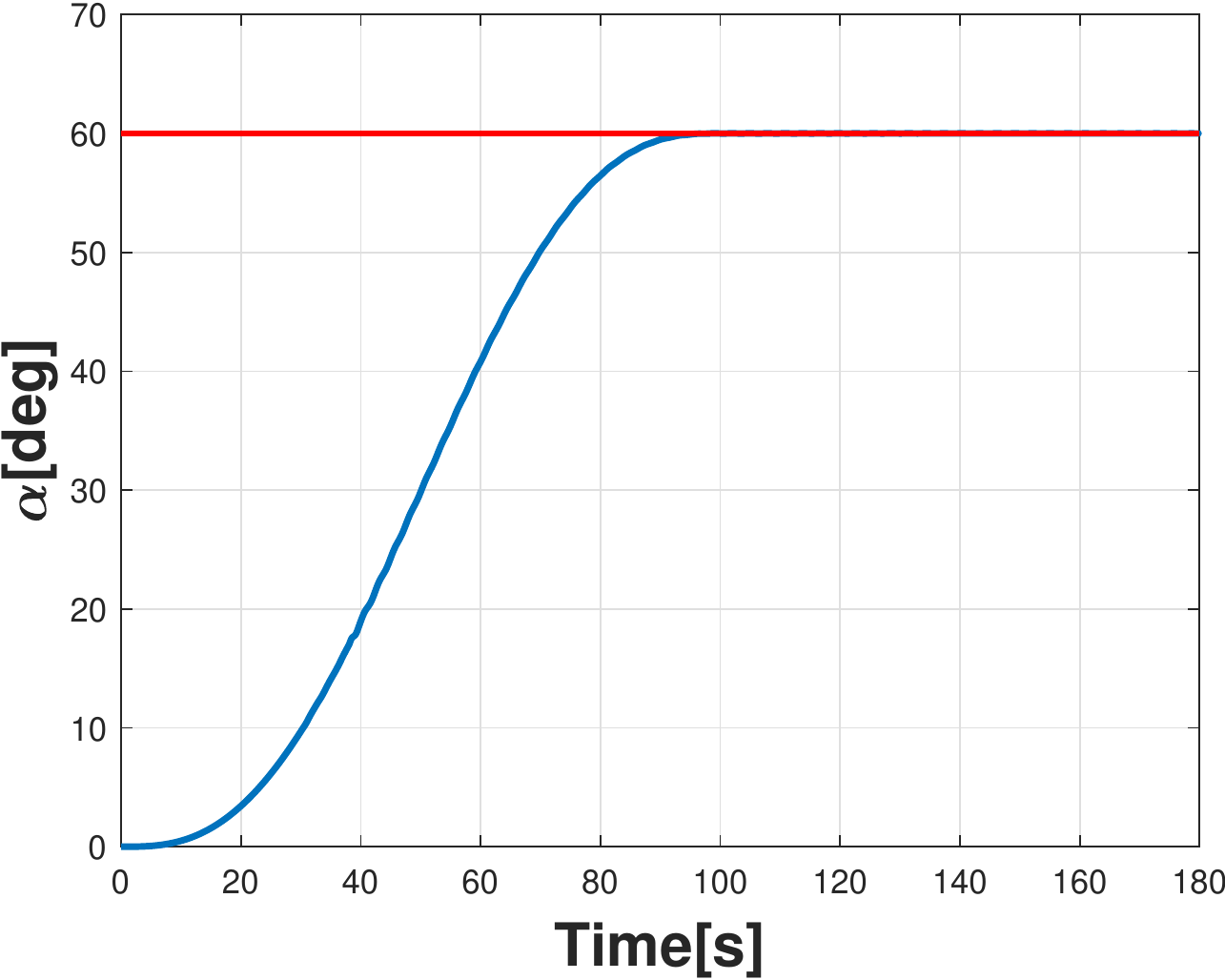}}\label{fig:al4}}\hspace{100pt}
\subfloat[Scenario 4. Boom angle $\beta$. Blue line: Nonlinear controller. Red line: Desired reference.]{%
\resizebox*{5cm}{!}{\includegraphics{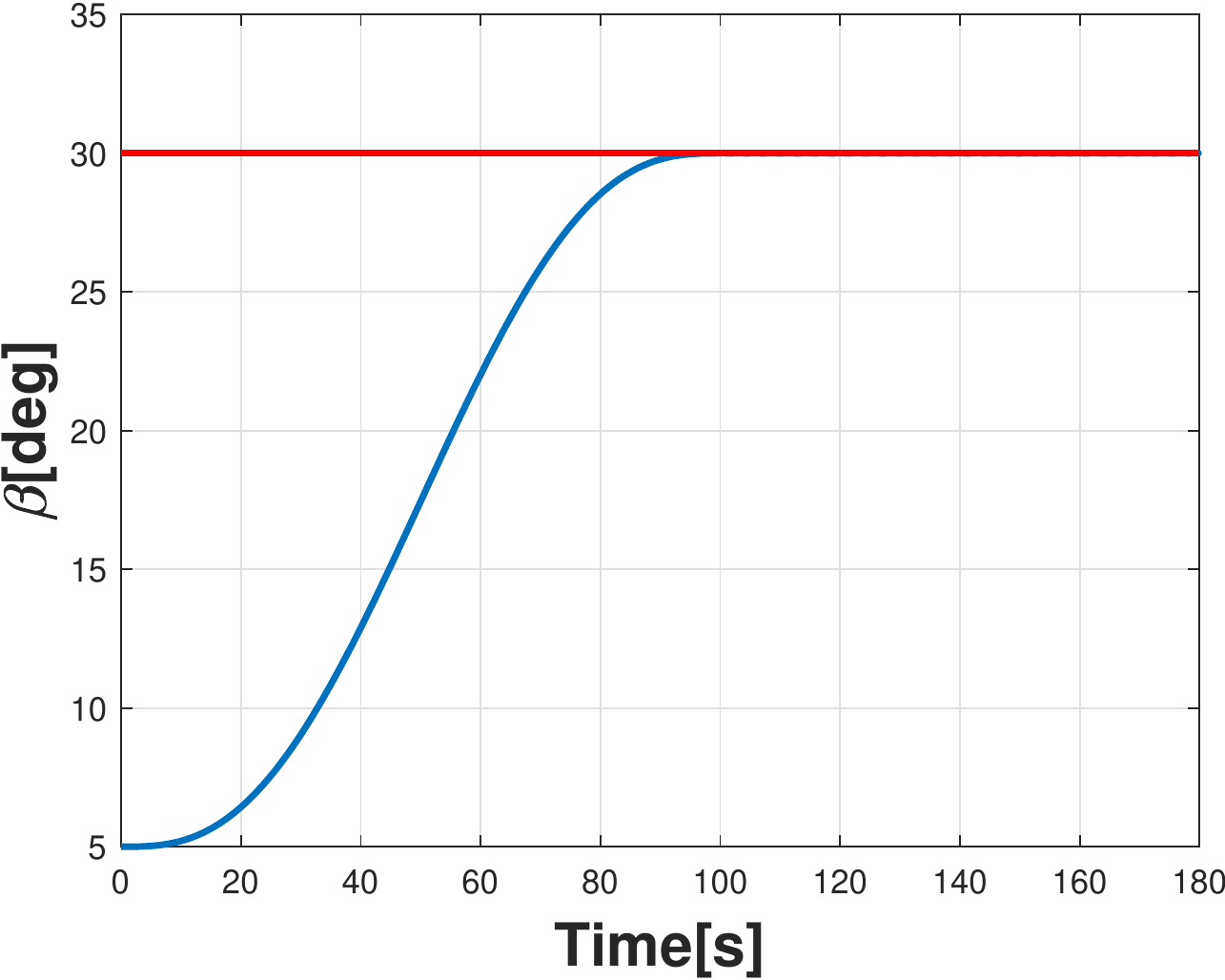}}\label{fig:bt4}}
\caption{}
\end{figure}

\begin{figure}[ht!]
\centering

\subfloat[Scenario 4. Jib angle $\gamma$. Blue line: Nonlinear controller. Red line: Desired reference.]{%
\resizebox*{5cm}{!}{\includegraphics{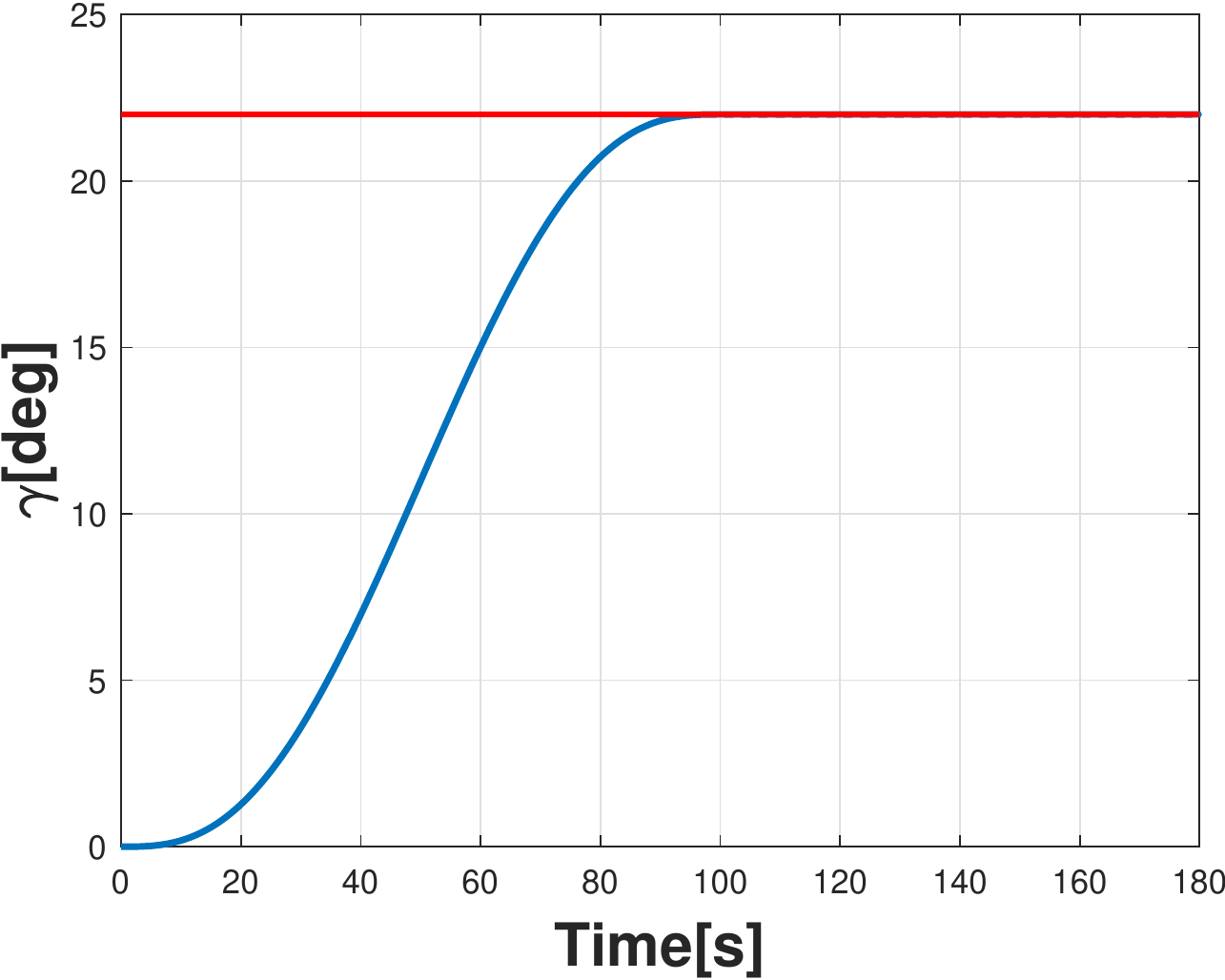}}\label{fig:gm4}}\hspace{100pt}
\subfloat[Scenario 4. Rope length $d$. Blue line: Nonlinear controller. Red line: Desired reference.]{%
\resizebox*{5cm}{!}{\includegraphics{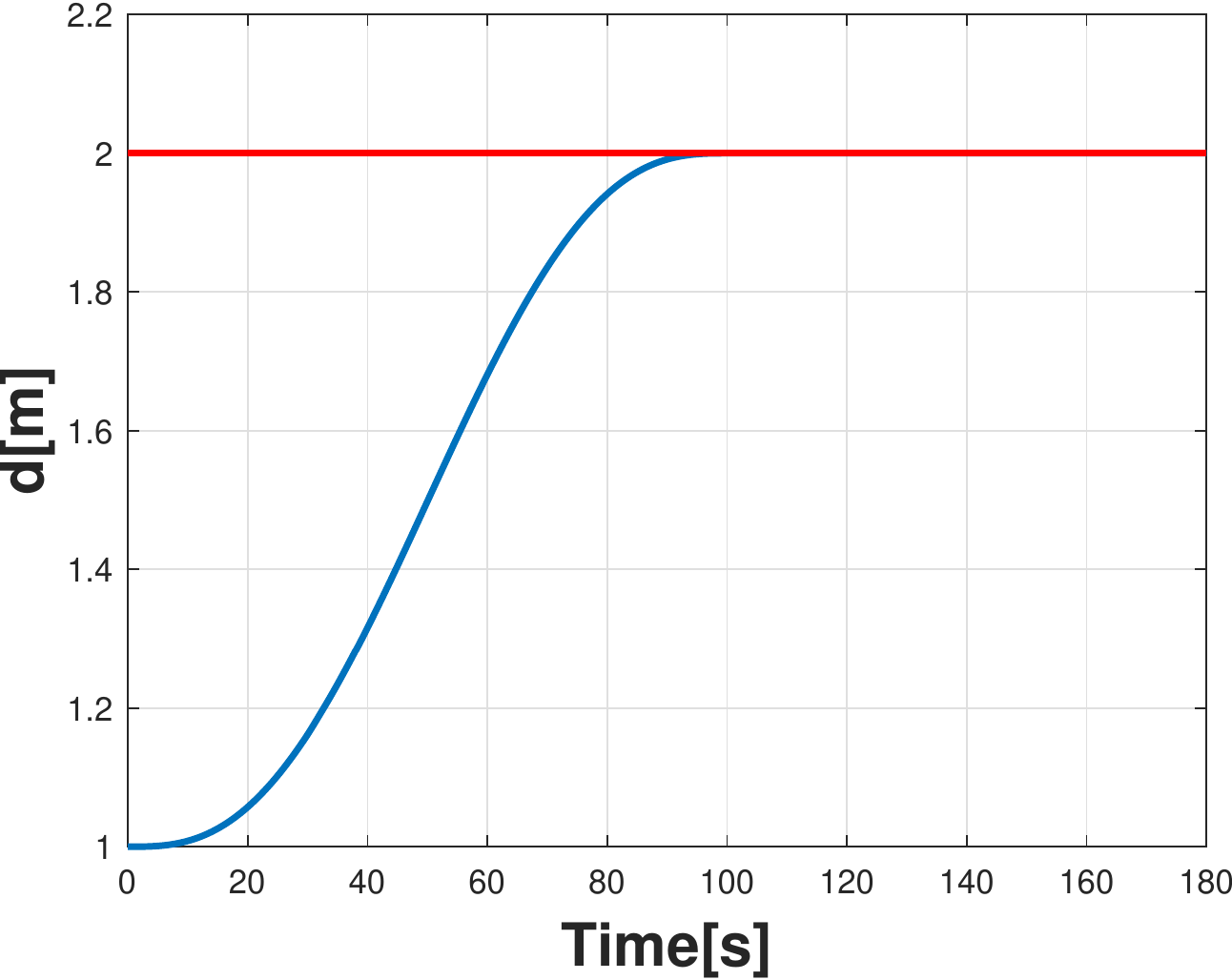}}\label{fig:d4}}
\caption{}
\end{figure}

\begin{figure}[ht!]
\centering

\subfloat[Scenario 4. Payload swing angle $\theta_1$.]{%
\resizebox*{5cm}{!}{\includegraphics{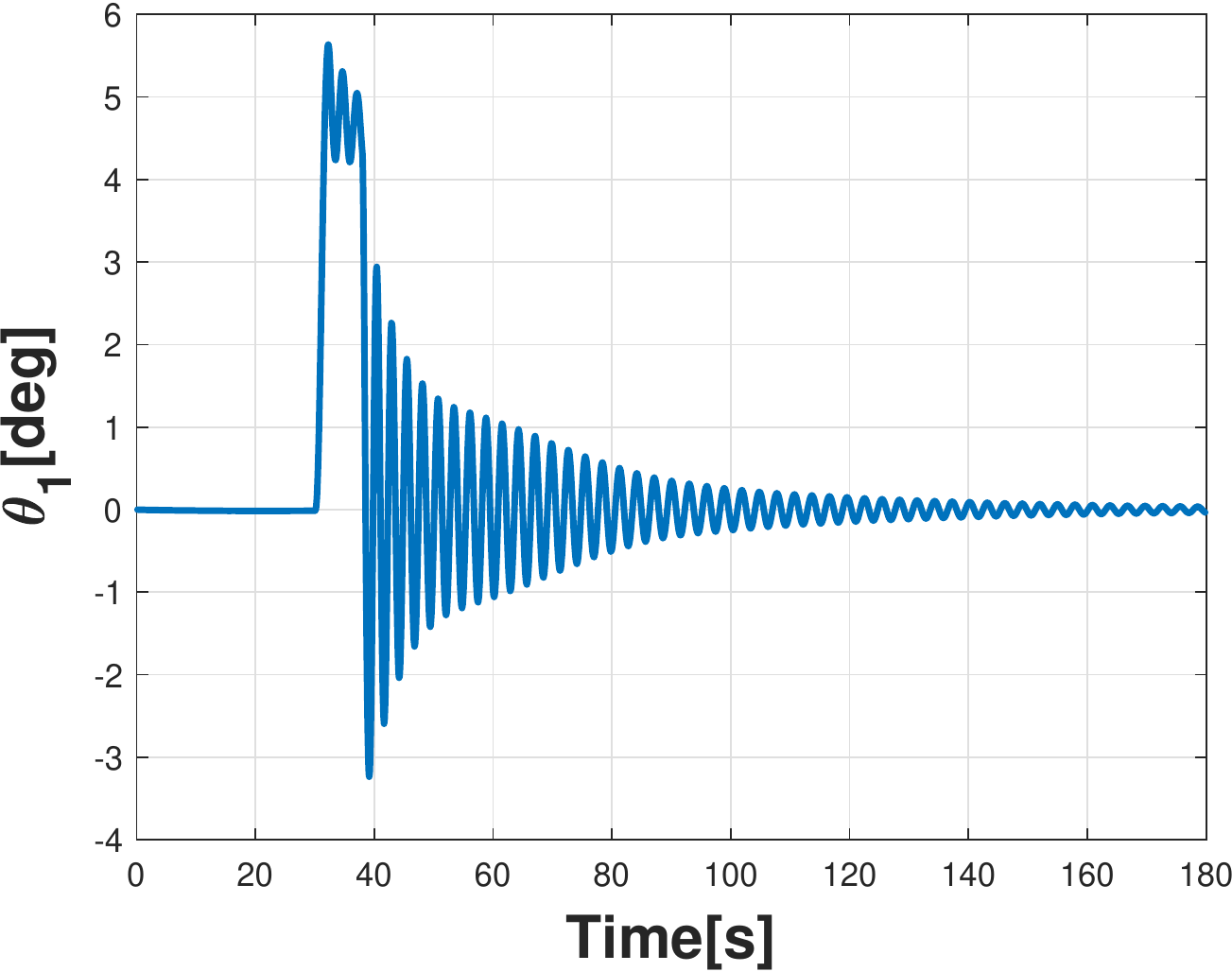} }\label{fig:th14}}\hspace{100pt}
\subfloat[Scenario 4. Payload swing angle $\theta_2$]{%
\resizebox*{5cm}{!}{\includegraphics{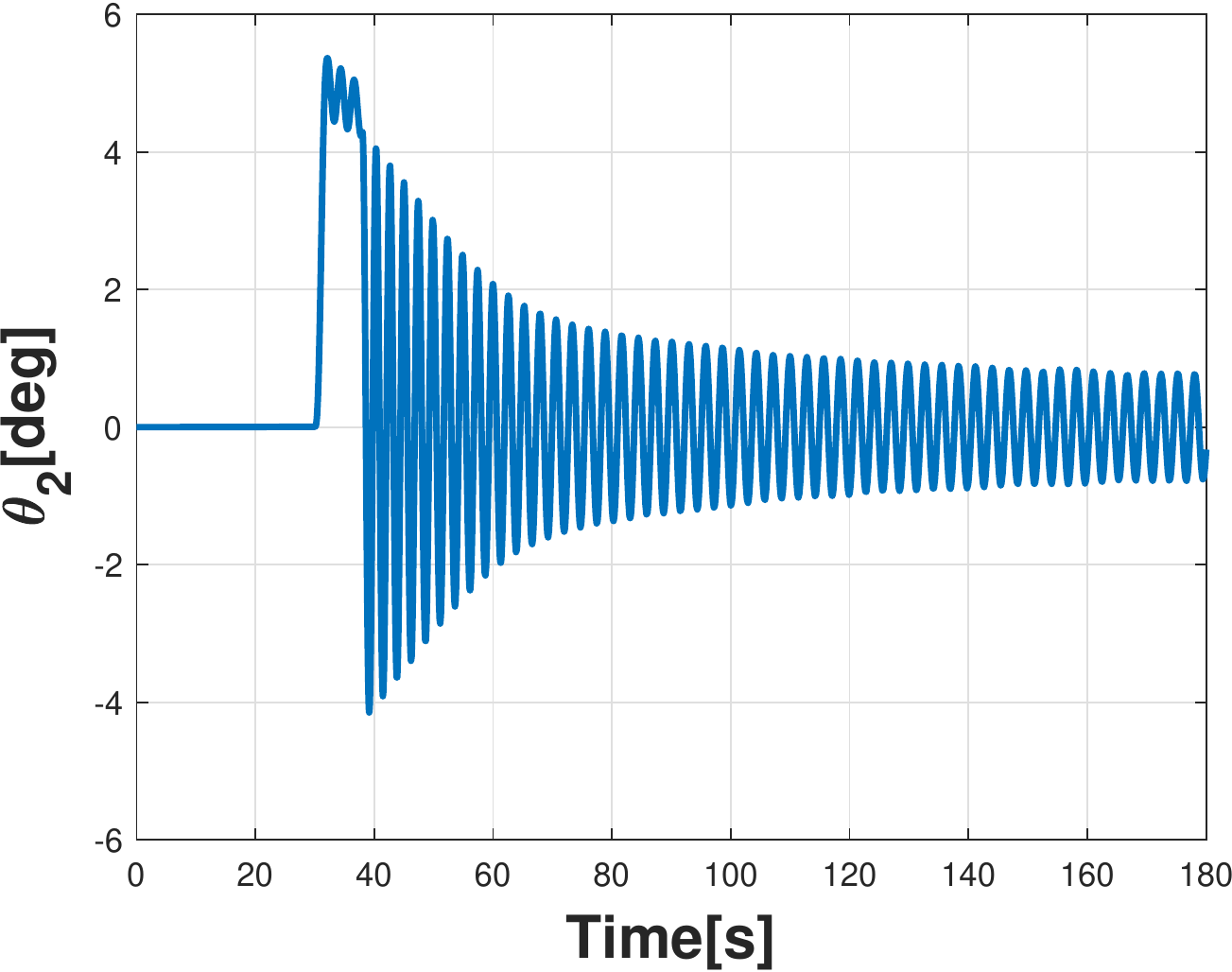}}\label{fig:th24}}
\caption{}
\end{figure}

\begin{figure}[ht!]
\centering
\subfloat[Scenario 4. Control input $u_1, u_2$.]{%
\resizebox*{5cm}{!}{\includegraphics{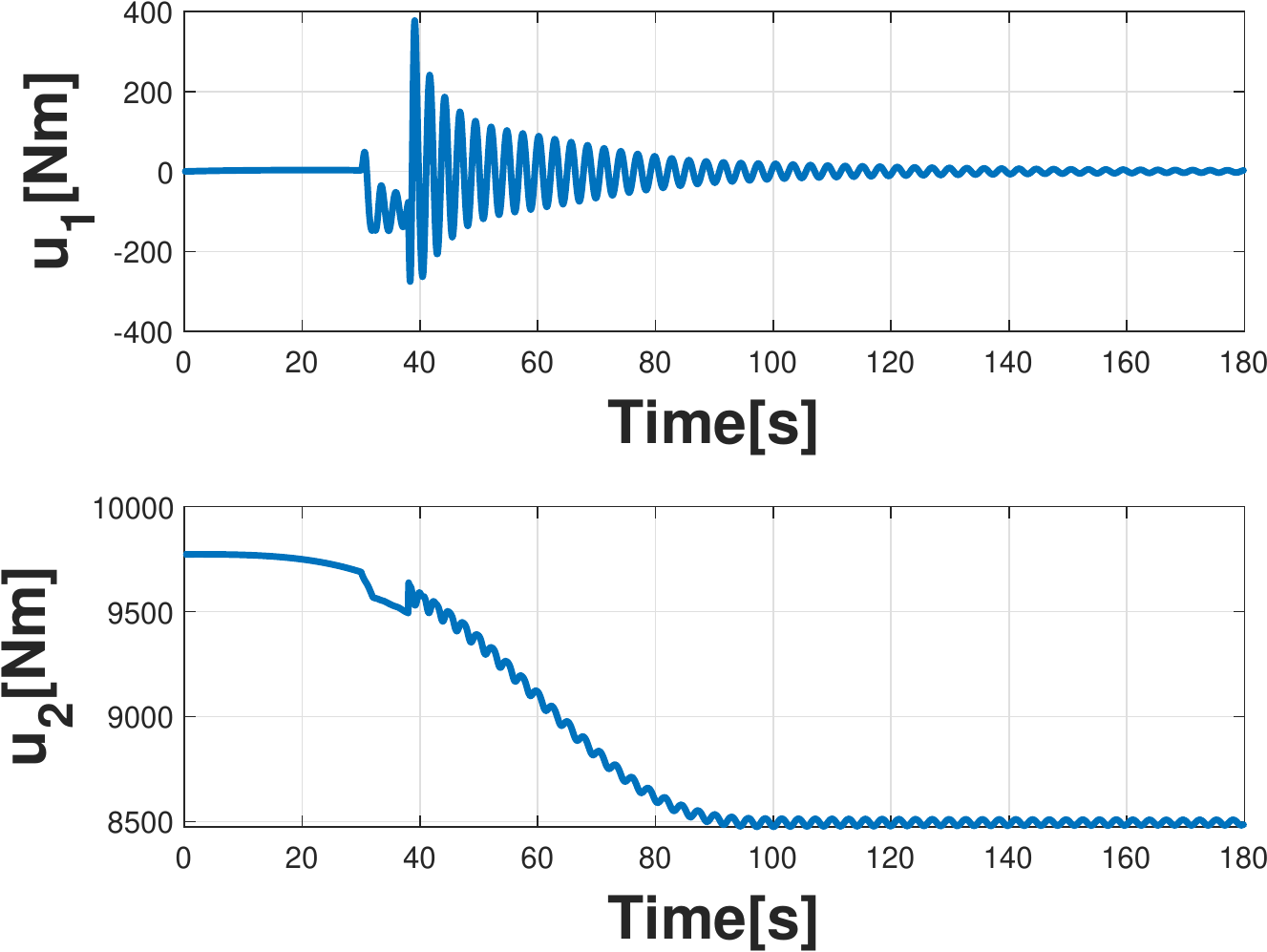}}\label{fig:u14}}\hspace{100pt}
\subfloat[Scenario 4. Control input $u_3, u_4$.]{%
\resizebox*{5cm}{!}{\includegraphics{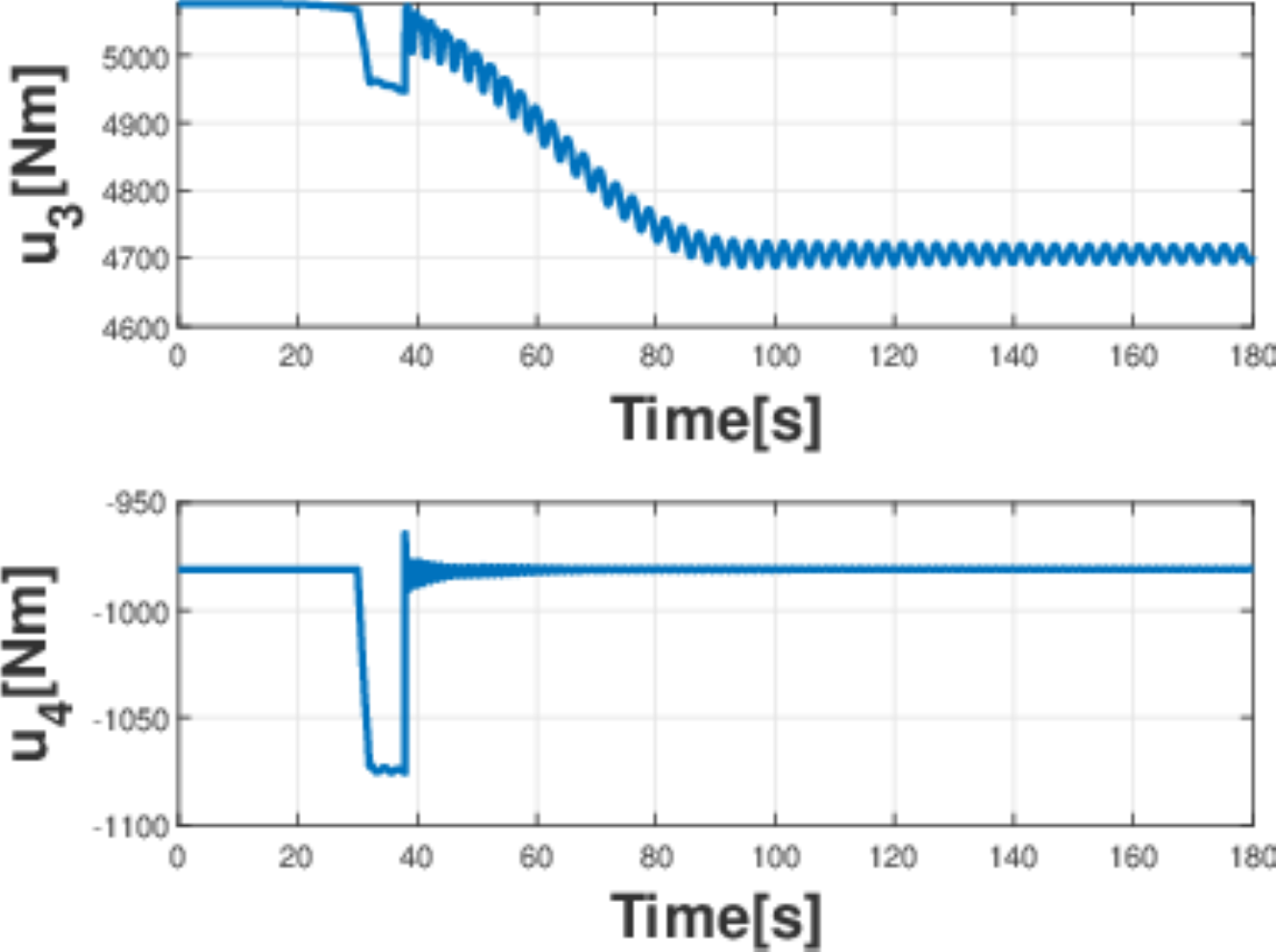}}\label{fig:u24}}
\caption{}
\end{figure}

\textit{Scenario 5.} In this simulation scenario we will consider the presence of measurement noise for the angular position measurements used by the control law. The simulation results are given in Figg.\ref{fig:al5}-\ref{fig:d5} where we can clearly see that the presence of noise has a very limited impact on the performance of the control law. As expected, the three actuated angles (\textit{i.e.} $\alpha,\beta,$ and $\gamma$) reach  the desired angular values in around 100 seconds. Additionally, the cable achieves the desired length. Moreover, at the beginning the payload swing amplitudes (\textit{i.e.} $\theta_1$ and $\theta_2$) present oscillations due to the presence of noise in the measurements.


\medskip

\begin{figure}[ht!]
\centering

\subfloat[Scenario 5. Tower angle $\alpha$. Blue line: Nonlinear controller. Red line: Desired reference.]{%
\resizebox*{5cm}{!}{\includegraphics{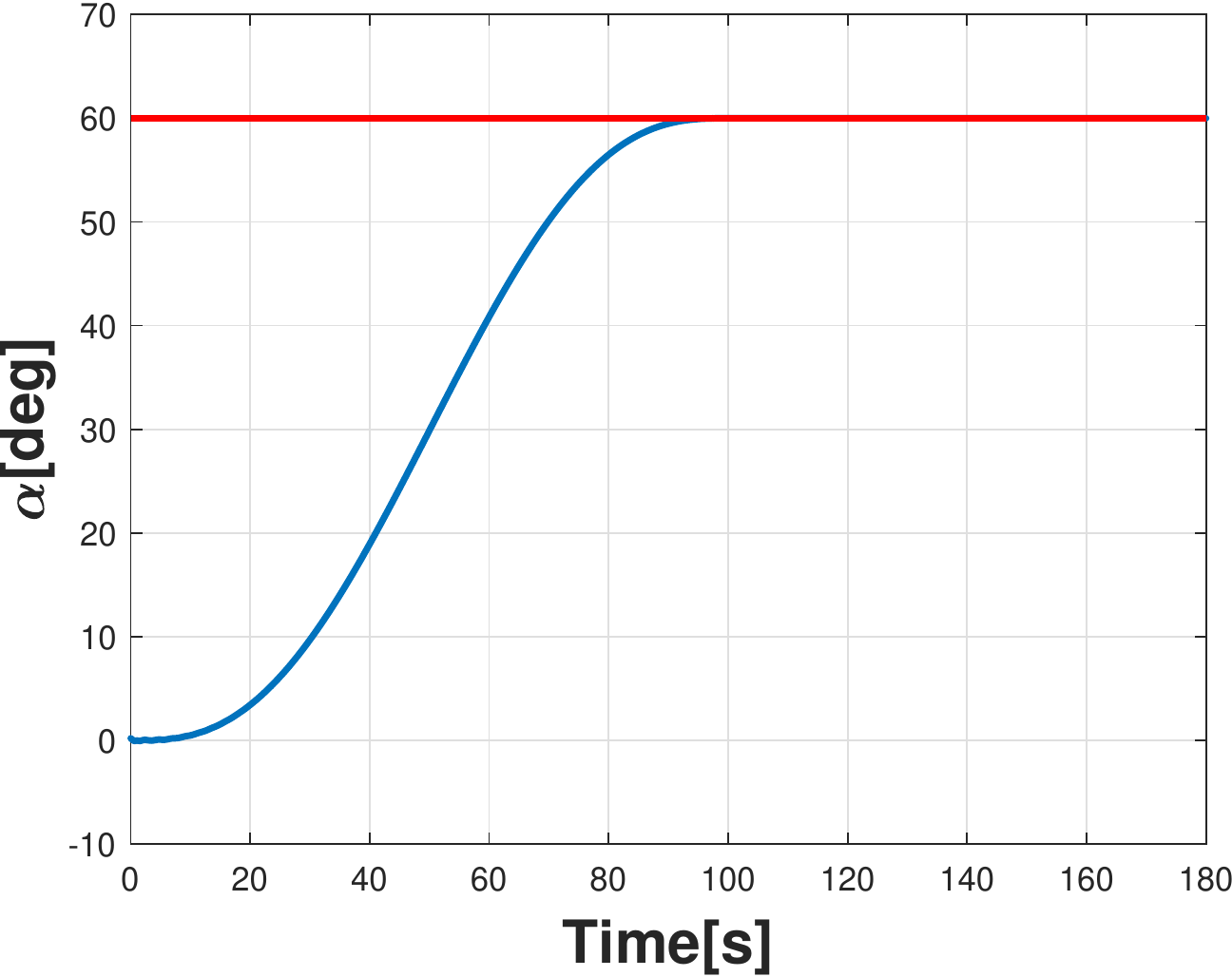}}\label{fig:al5}}\hspace{100pt}
\subfloat[Scenario 5. Boom angle $\beta$. Blue line: Nonlinear controller. Red line: Desired reference.]{%
\resizebox*{5cm}{!}{\includegraphics{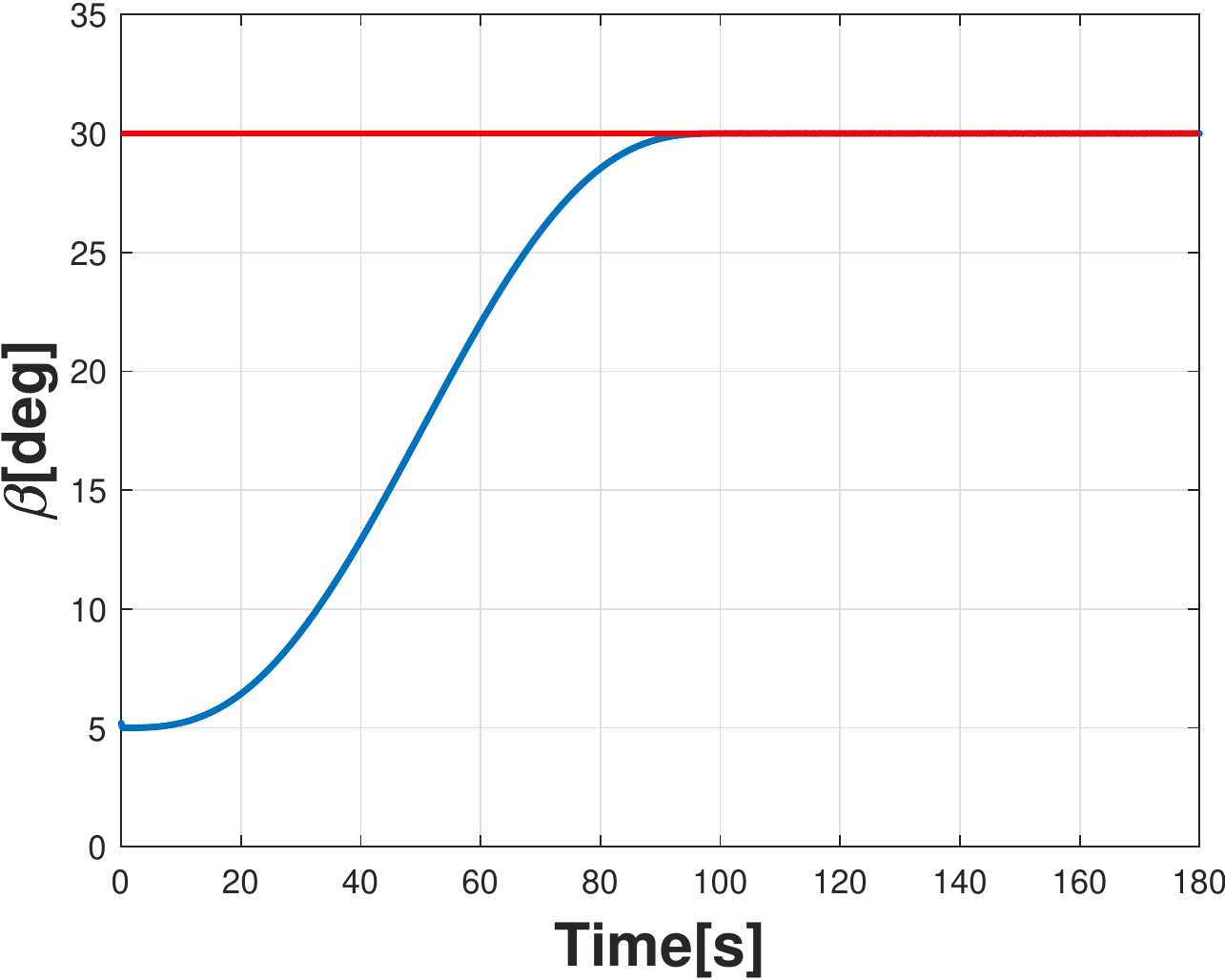}}\label{fig:bt5}}
\caption{}
\end{figure}

\begin{figure}[ht!]
\centering

\subfloat[Scenario 5. Jib angle $\gamma$. Blue line: Nonlinear controller. Red line: Desired reference.]{%
\resizebox*{5cm}{!}{\includegraphics{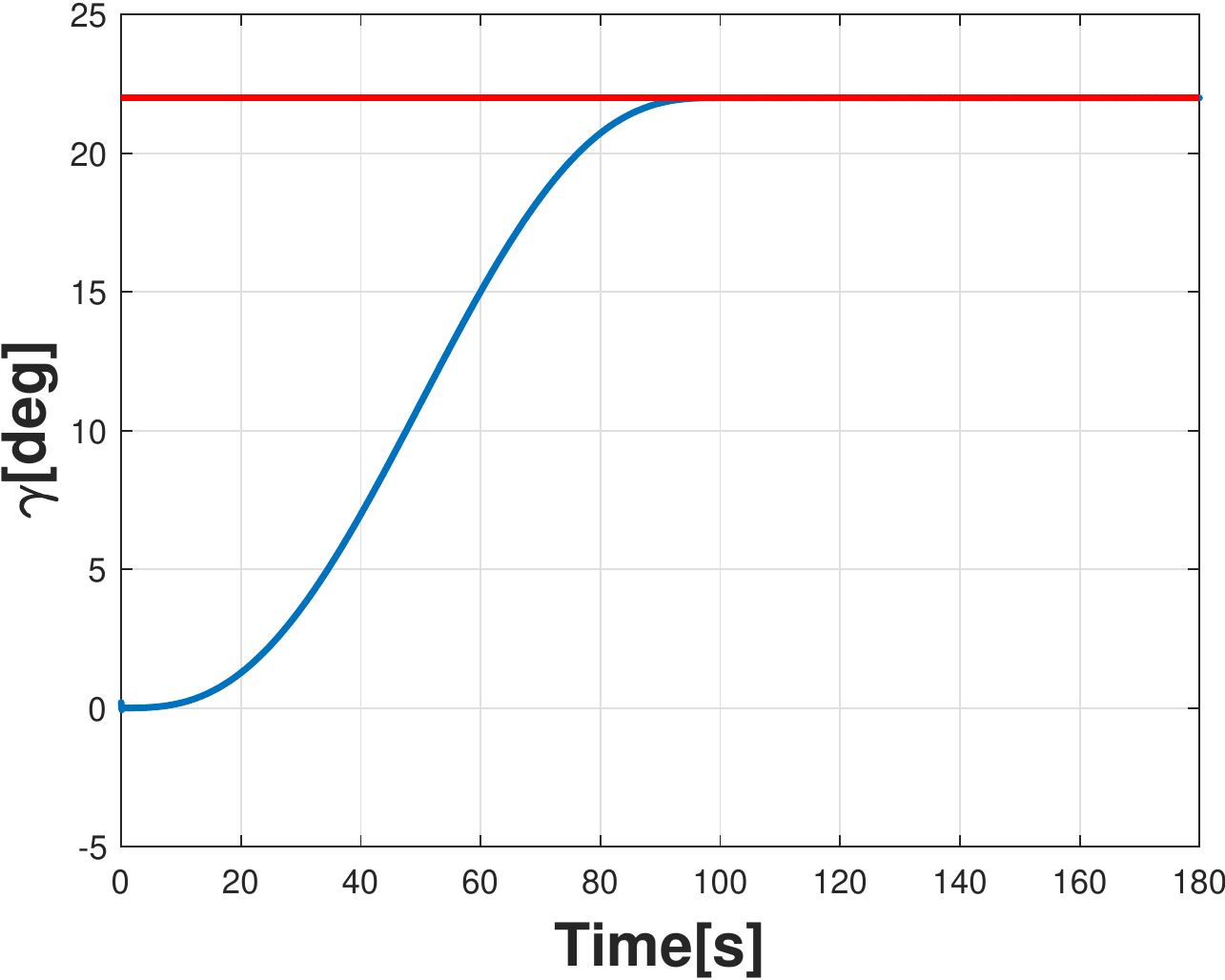}}\label{fig:gm5}}\hspace{100pt}
\subfloat[Scenario 5. Rope length $d$. Blue line: Nonlinear controller. Red line: Desired reference.]{%
\resizebox*{5cm}{!}{\includegraphics{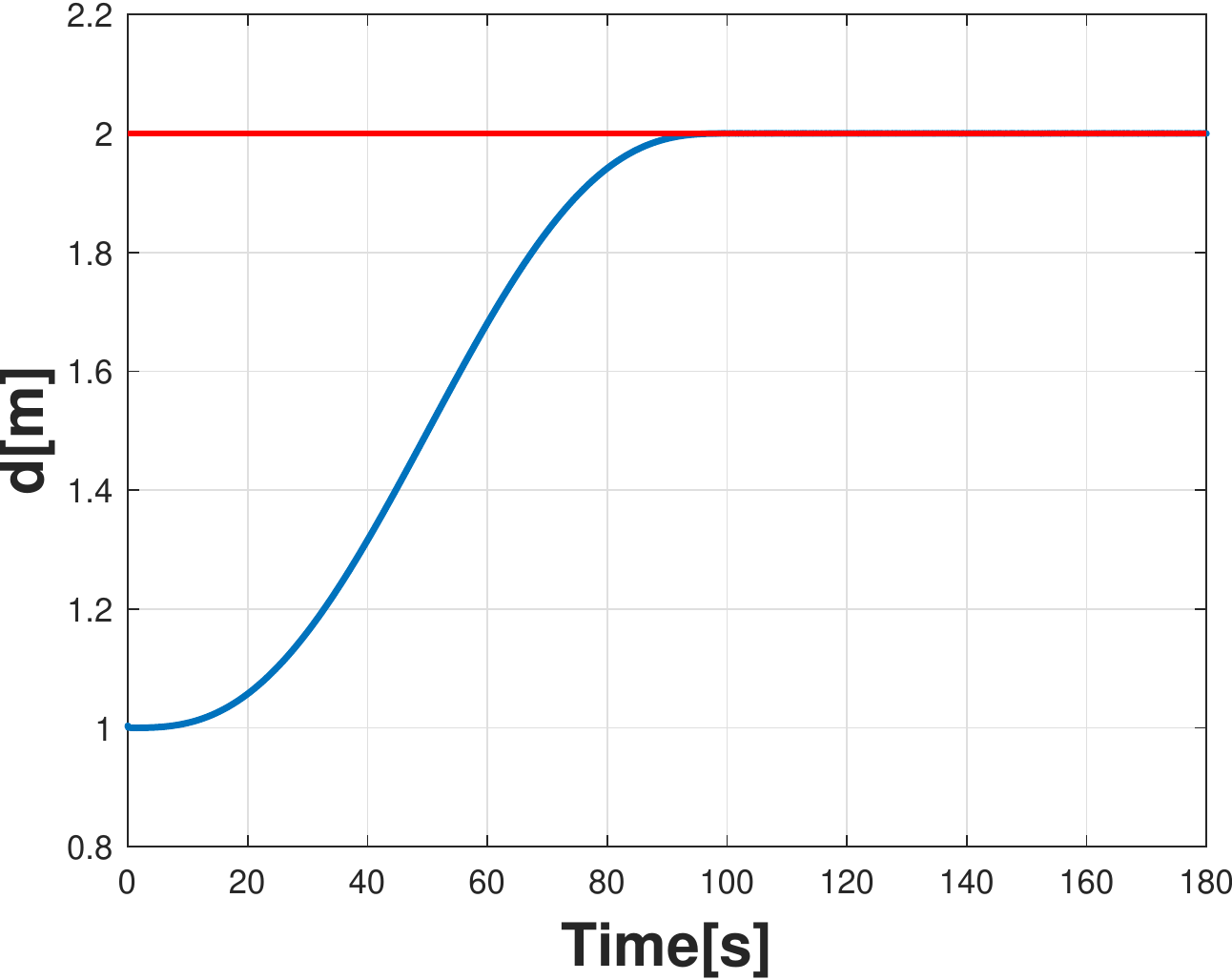}}\label{fig:d5}}
\caption{}
\end{figure}

\begin{figure}[ht!]
\centering

\subfloat[Scenario 5. Payload swing angle $\theta_1$.]{%
\resizebox*{5cm}{!}{\includegraphics{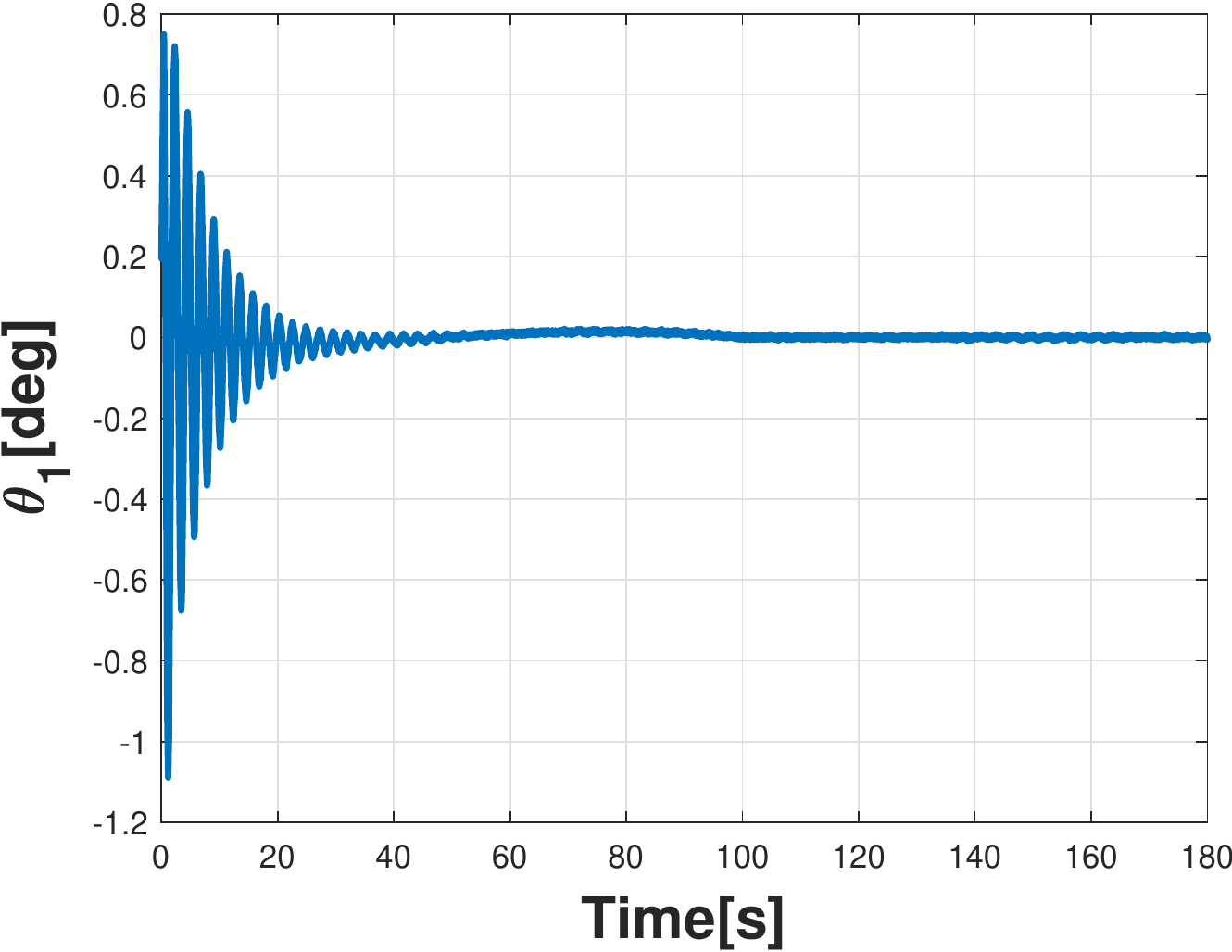} }\label{fig:th15}}\hspace{100pt}
\subfloat[Scenario 5. Payload swing angle $\theta_2$]{%
\resizebox*{5cm}{!}{\includegraphics{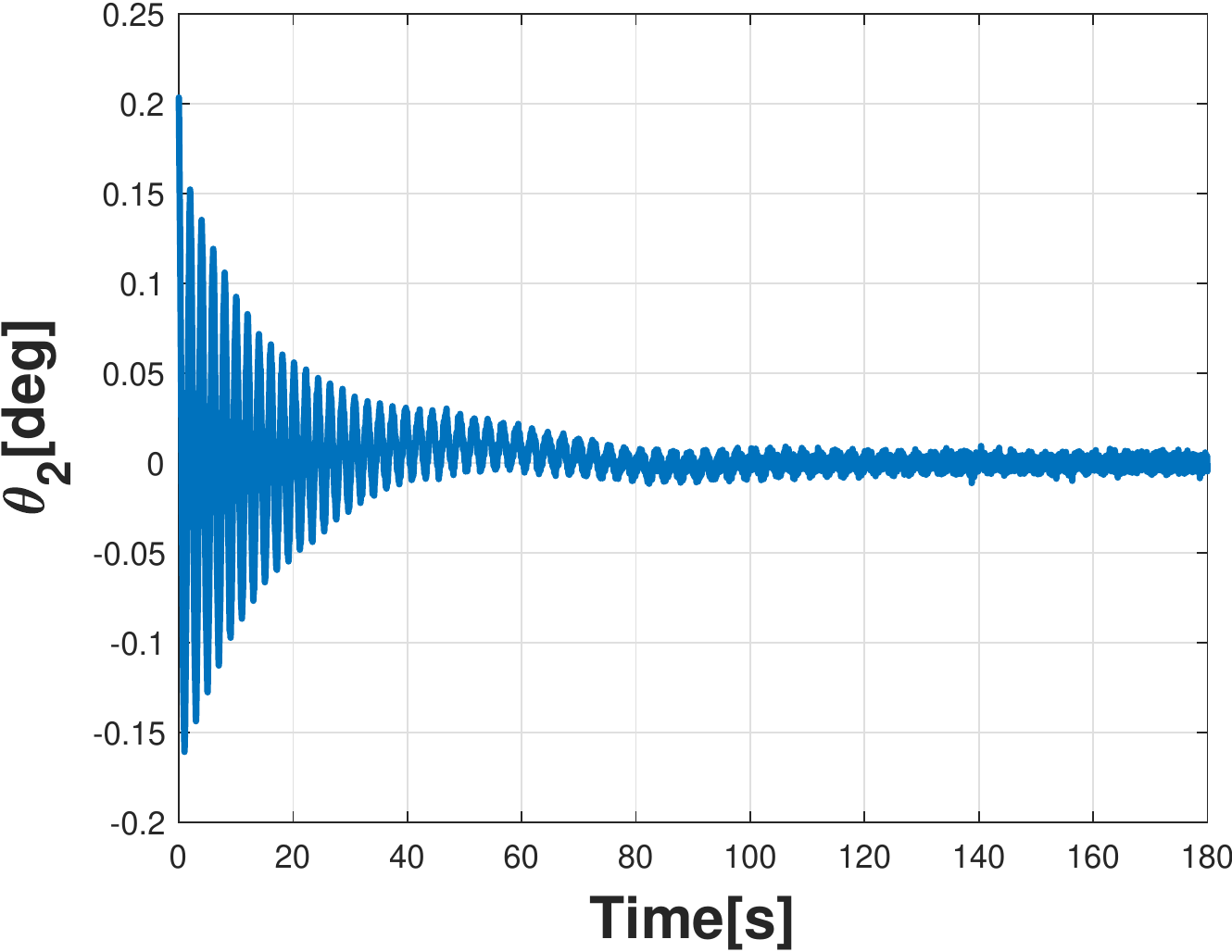}}\label{fig:th25}}
\caption{}
\end{figure}

\begin{figure}[ht!]
\centering
\subfloat[Scenario 5. Control input $u_1, u_2$.]{%
\resizebox*{5cm}{!}{\includegraphics{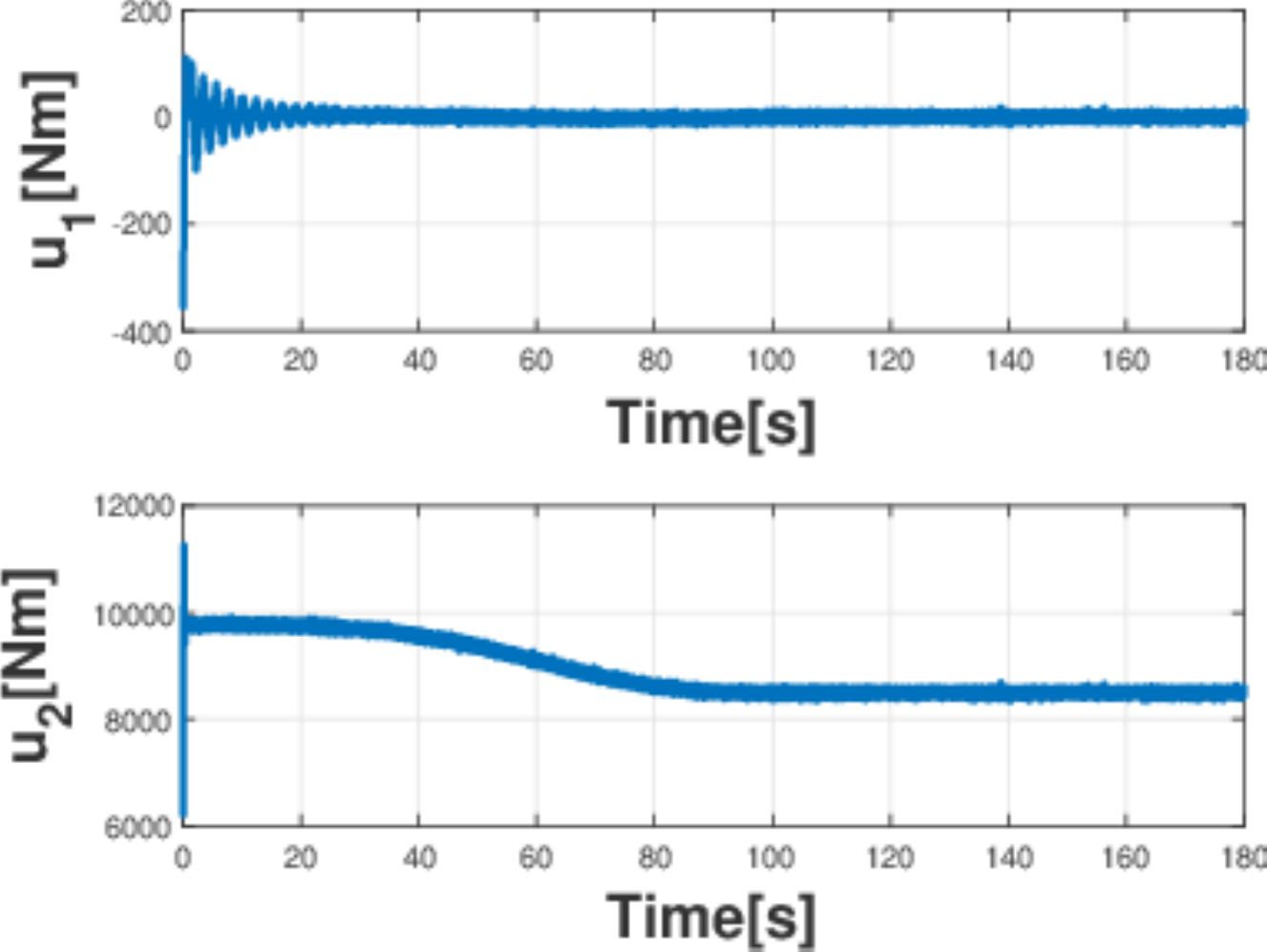}}\label{fig:u15}}\hspace{100pt}
\subfloat[Scenario 5. Control input $u_3, u_4$.]{%
\resizebox*{5cm}{!}{\includegraphics{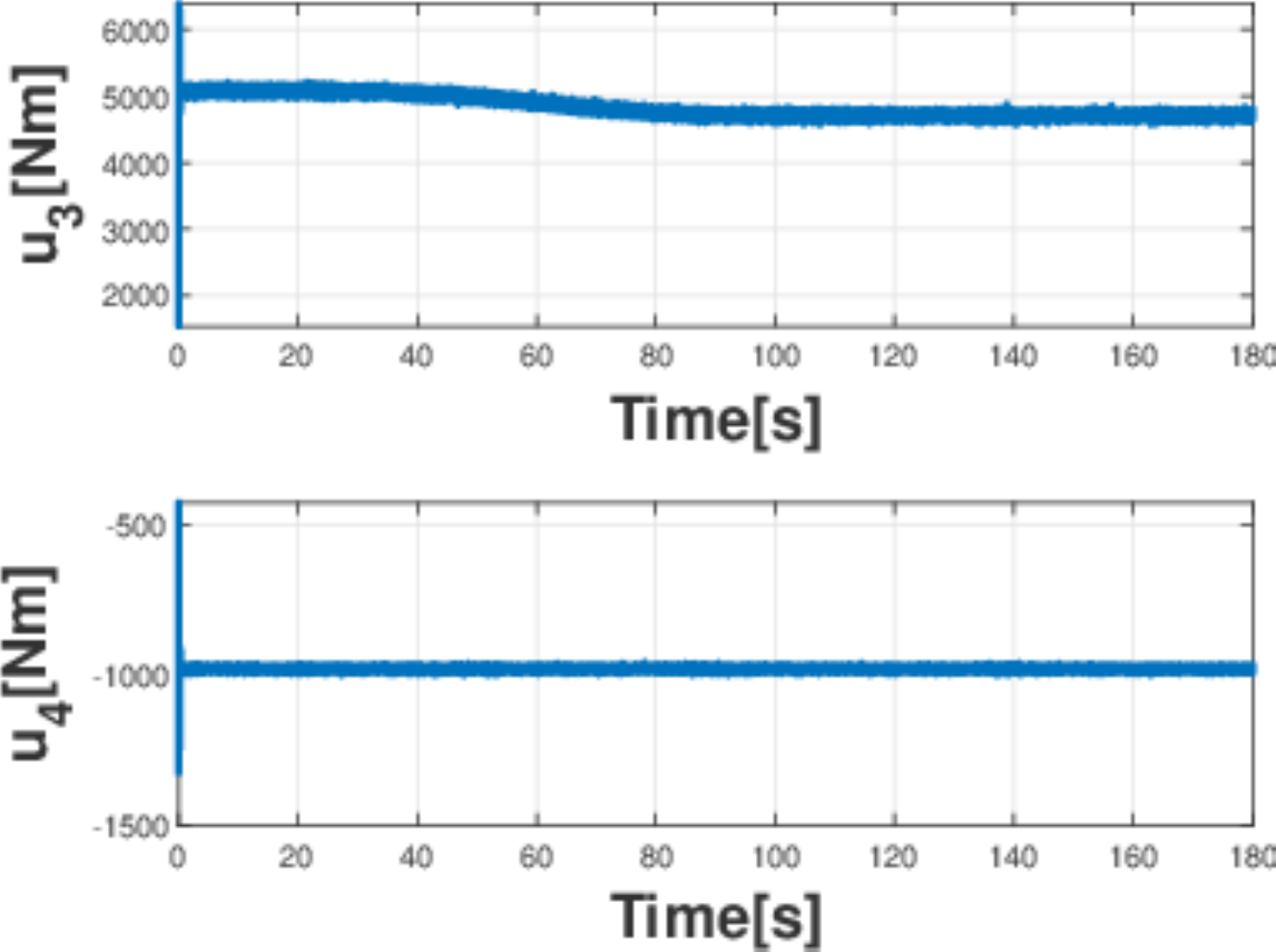}}\label{fig:u25}}
\caption{}
\end{figure}

\textit{Comparison analysis}. To show the efficiency of the proposed nonlinear control scheme, we have performed a comparative analysis considering a linear quadratic regulator (LQR). For the LQR control approach, first, the crane dynamics is linearized around the equilibrium point, and then, the following weight matrices have been chosen to stabilize the plant: $Q = diag\{10^2, 10^3, 10^3, 5^2, 10, 10, 10, 10, 10, 50, 10^2, 10^2\}$ and $R = diag\{50, 50, 50, 10\}$. The weight matrices have been chosen so that in nominal conditions the two controllers (\textit{i.e.} the proposed control scheme and the LQR controller) give similar dynamics during the crane movement (see Fig.\ref{fig:qcomp}) and similar input values for the each joint torques (see Fig.\ref{fig:ucomp}). Although in nominal circumstances, the LQR gives good results, in case of model mismatch, this control approach tends to not behave properly. Repeating the simulations carried out in \textit{Scenario 3}, in Fig.\ref{fig:qcomp2} it is possible to notice that the LQR is not able to follow the desired trajectory when the mass of the payload is different from the nominal one. 

\begin{figure}[ht!]
\centering
\includegraphics[width=0.8\linewidth]{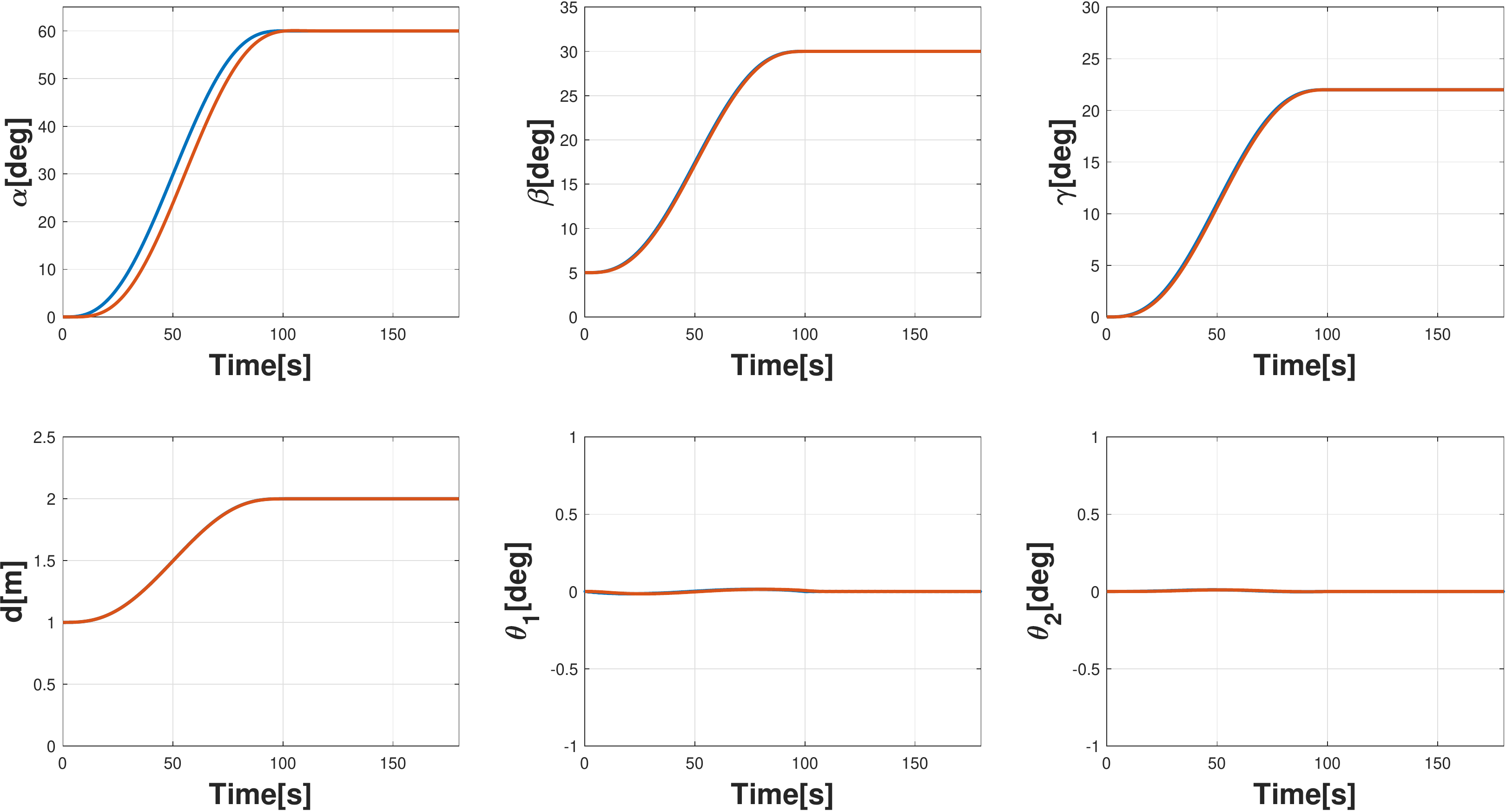}
\caption{\label{fig:qcomp}Blue line: Nonlinear controller. Red line: LQR.}
\end{figure}

\begin{figure}[ht!]
\centering
\includegraphics[width=0.6\linewidth]{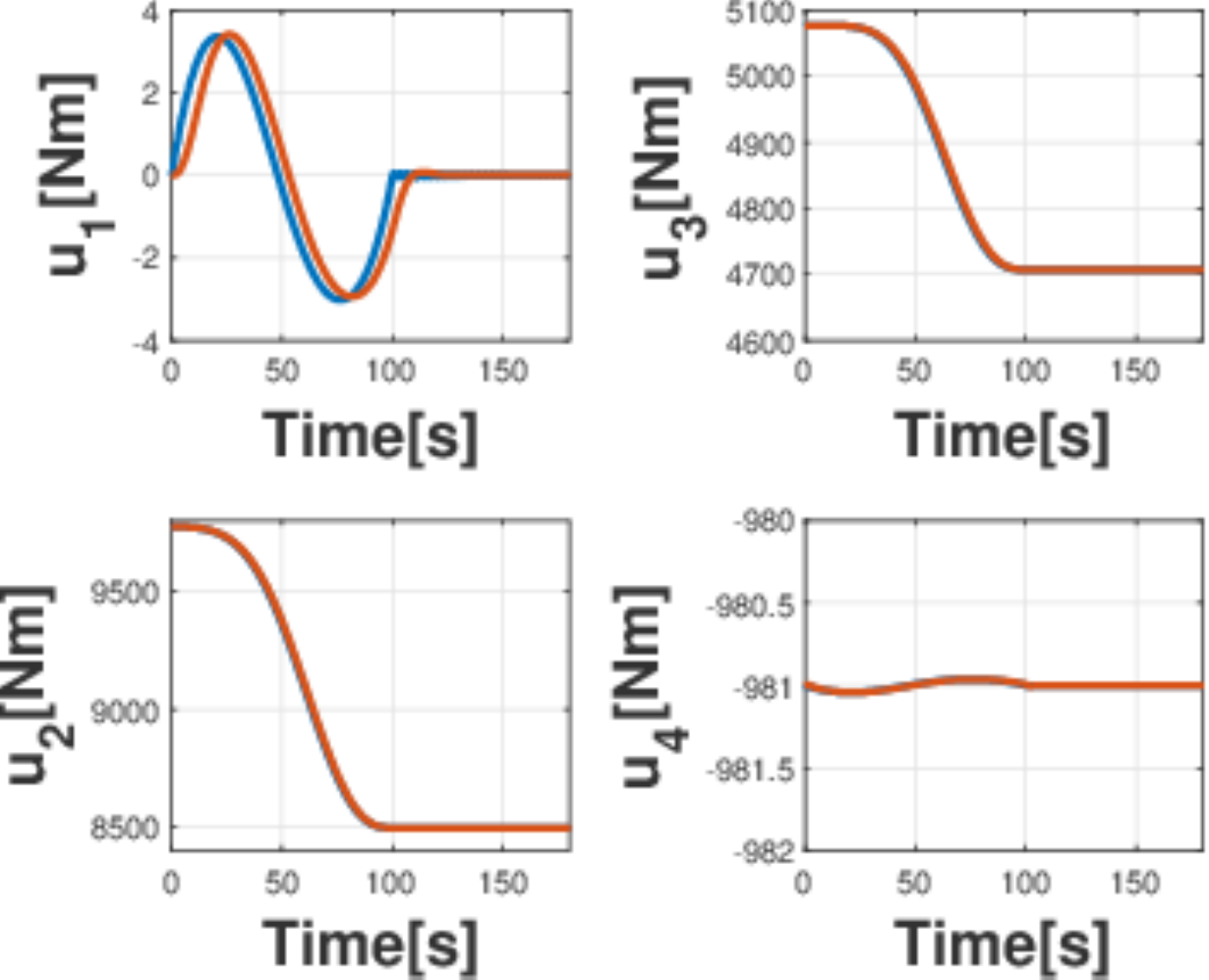}
\caption{\label{fig:ucomp}Blue line: Nonlinear controller. Red line: LQR.}
\end{figure}

\begin{figure}[ht!]
\centering
\includegraphics[width=0.6\linewidth]{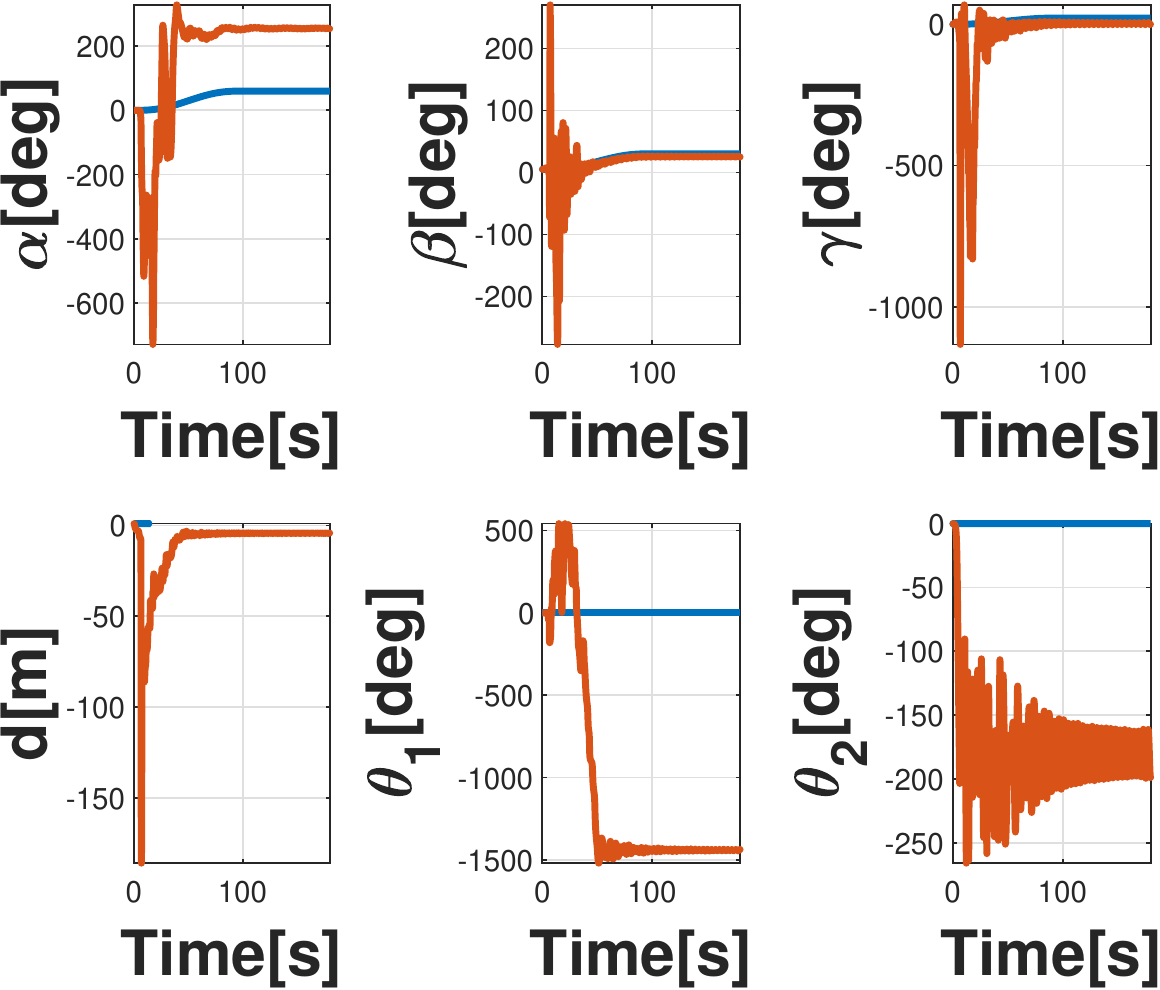}
\caption{\label{fig:qcomp2}Blue line: Nonlinear controller. Red line: LQR.}
\end{figure}

\section{Conclusion}\label{sec:concl}

The paper proposed for the first time a detailed mathematical model of a knuckle crane which takes into account all of the degrees of freedom (DoFs) that characterize this type of system (\textit{i.e.} the three rotations, the length of the rope and the payload swing angles).
On the basis of this model, a nonlinear control law has been developed which is able to perform position control of the crane while actively damping the oscillations of the load.  It is worth noting that, the proposed control law is based directly on the nonlinear model of the knuckle crane avoiding any form of linearization of the mathematical model. The stability properties of the scheme have been proved using the LaSalle's invariance principle. Five different simulation scenarios and a comparison analysis with a LQR control have been included in the paper to show the effectiveness of the proposed control approach. 

\clearpage
\newpage
\appendix

\section{}\label{App}

The nonzero entries of the system matrices $M(q)$, $C(q,\dot q)$, $F(\dot q)$ and $g(q)$ are define as follows:
\begin{align*}
m_{11} = I_{tot} + d^2m + A_1C_{\beta}^2 + A_2C_{\gamma}^2 + A_3C_{\beta}C_{\gamma} 
+ 2A_4dC_{\beta}S_{\theta_2} + 2A_5dC_{\gamma}S_{\theta_2} - d^2mC_{\theta_1}^2C_{\theta_2}^2, \\ 
m_{22} = A_1 + I_B,\quad  m_{33} = A_2 + I_J, m_{44} = m, \quad
m_{55} = dmC_{\theta_2}^2, \quad m_{66} = dm, \\ m_{12} = m_{21} = A_4dC_{\theta_2}S_{\beta}S_{\theta_1},\quad
m_{13} = m_{31} = A_5dC_{\theta_2}S_{\gamma}S_{\theta_1}, \\
m_{14} = m_{41} = C_{\theta_2}S_{\theta_1}(A_4C_{\beta} + A_5C_{\gamma}),\quad
m_{15} = m_{51} = dC_{\theta_1}C_{\theta_2}(A_4C_{\beta} + A_5C_{\gamma} + dmS_{\theta_2}),\\
m_{16} = m_{61} =-dS_{\theta_1}(dm + A_4C_{\beta}S_{\theta_2} + A_5C_{\gamma}S_{\theta_2}),\quad
m_{23} = m_{32} = {1\over{2}}(A_3C_{\beta-\gamma}), \\
m_{24} = m_{42} = -A_4(S_{\beta}S_{\theta_2} + C_{\beta}C_{\theta_1}C_{\theta_2})\\
m_{25} = m_{52} = A_4dC_{\beta}C_{\theta_2}S_{\theta_1},\quad
m_{26} = m_{62} = -A_4d(C_{\theta_2}S_{\beta} - C_{\beta}C_{\theta_1}S_{\theta_2}), \\ 
m_{34} = m_{43} = -A_5(S_{\gamma}S_{\theta_2} + C_{\gamma}C_{\theta_1}C_{\theta_2}) \\
m_{35} = m_{53} = A_5dC_{\gamma}C_{\theta_2}S_{\theta_1}\quad
m_{36} = m_{63} = -A_5d(C_{\theta_2}S_{\gamma} - C_{\gamma}C_{\theta_1}S_{\theta_2})\\
c_{11} = 2d\dot{d}m - A_2\dot{\gamma}S_{2\gamma} - A_1\dot{\beta}S_{2\beta} + 2A_4\dot{d}C_{\beta}S_{\theta_2} 
+ 2A_5\dot{d}C_{\gamma}S_{\theta_2} - A_3\dot{\beta}C_{\gamma}S_{\beta}  - A_3\dot{\gamma}C_{\beta}S_{\gamma} \\
- 2d\dot{d}mC_{\theta_1}^2C_{\theta_2}^2 + 2A_4d\dot{\theta}_2C_{\beta}C_{\theta_2} + 2A_5d\dot{\theta}_2C_{\gamma}C_{\theta_2} 
- 2A_4d\dot{\beta}S_{\beta}S_{\theta_2} \\ - 2A_5d\dot{\gamma}S_{\gamma}S_{\theta_2} + 2d^2\dot{\theta}_1mC_{\theta_1}C_{\theta_2}^2S_{\theta_1}
+ 2d^2\dot{\theta}_2mC_{\theta_1}^2C_{\theta_2}S_{\theta_2}\\
c_{12} = A_4\dot{d}C_{\theta_2}S_{\beta}S_{\theta_1} + A_d\dot{\beta}C_{\beta}C_{\theta_2}S_{\theta_1} +  A_4d\dot{\theta}_1C_{\theta_1}C_{\theta_2}S_{\beta} - A_4d\dot{\theta}_2S_{\beta}S_{\theta_1}S_{\theta_2} \\
c_{13} = A_5\dot{d}C_{\theta_2}S_{\gamma}S_{\theta_1} + A_5d\dot{\gamma}C_{\gamma}C_{\theta_2}S_{\theta_1} +  A_5d\dot{\theta}_1C_{\theta_1}C_{\theta_2}S_{\gamma} - A_5d\dot{\theta}_2S_{\gamma}S_{\theta_1}S_{\theta_2} \\
c_{14} = \dot{\theta}_1C_{\theta_1}C_{\theta_2}(A_4C_{\beta} + A_5C_{\gamma}) - C_{\theta_2}S_{\theta_1}(A_4\dot{\beta}S_{\beta} + A_5\dot{\gamma}S_{\gamma})  - \dot{\theta}_2S_{\theta_1}S_{\theta_2}(A_4C_{\beta} + A_5C_{\gamma}) \\
c_{15} = dC_{\theta_1}C_{\theta_2}(\dot{d}mS_{\theta_2} - A_4\dot{\beta}S_{\beta} - A_5d\dot{\gamma}S_{\gamma} +  d\dot{\theta}_2mC_{\theta_2})  + \dot{d}C_{\theta_1}C_{\theta_2}(A_4C_{\beta} + A_5C_{\gamma} + dmS_{\theta_2}) \\- d\dot{\theta}_1C_{\theta_2}S_{\theta_1}(A_4C_{\beta}  + A_5C_{\gamma}  + dmS_{\theta_2}) - d\dot{\theta}_2C_{\theta_1}S_{\theta_2}(A_4C_{\beta} + A_5C_{\gamma} + dmS_{\theta_2}) \\
c_{16} = - dS_{\theta_1}(\dot{d}m + A_4\dot{\theta}_2C_{\beta}C_{\theta_2} + A_5\dot{\theta}_2C_{\gamma}C_{\theta_2} -  A_4d\dot{\beta}S_{\beta}S_{\theta_2} - A_5\dot{\gamma}S_{\gamma}S_{\theta_2}) \\ - \dot{d}S_{\theta_1}(dm + A_4C_{\beta}S_{\theta_2} +  A_5C_{\gamma}S_{\theta_2}) - d\dot{\theta}_1C_{\theta_1}(dm + A_4C_{\beta}S_{\theta_2} + A_5C_{\gamma}S_{\theta_2}) \\
c_{21} = {1\over{2}}(\dot{\alpha}S_{\beta}(2A_1C_{\beta} + A_3C_{\gamma} + 2A_4dS_{\theta_2})) +  {1\over{2}}(3A_4\dot{d}C_{\theta_2}S_{\beta}S_{\theta_1}) +  {1\over{2}}(A_4d\dot{\beta}C_{\beta}C_{\theta_2}S_{\theta_1}) \\ +  {1\over{2}}(3A_4d\dot{\theta}_1C_{\theta_1}C_{\theta_2}S_{\beta}) - {1\over{2}}(3A_4d\dot{\theta}_2S_{\beta}S_{\theta_1}S_{\theta_2}) \\
c_{22} = {1\over{4}}(A_3\dot{\gamma}S_{\beta-\gamma}) + {1\over{2}}(A_4\dot{d}(C_{\beta}S_{\theta_2} - C_{\theta_1}C_{\theta_2}S_{\beta})) +  {1\over{2}}(A_4d\dot{\theta}_2(C_{\beta}C_{\theta_2} \\ + C_{\theta_1}S_{\beta}S_{\theta_2})) - {1\over{2}}(A_4d\dot{\alpha}C_{\beta}C_{\theta_2}S_{\beta}) +
{1\over{2}}(A_4d\dot{\theta}_1C_{\theta_2}S_{\beta}S_{\beta}) \\
c_{23} = -{1\over{4}}(A_3S_{\beta-\gamma})(\dot{\beta} - 2\dot{\gamma})) \\
c_{24} = {1\over{2}}(A_4\dot{\beta}C_{\theta_1}C_{\theta_2}S_{\beta}) - A_4\dot{\theta}_2C_{\theta_2}S_{\beta} - {1\over{2}}(A_4\dot{\beta}C_{\beta}S_{\theta_2}) + A_4\dot{\theta}_1C_{\beta}C_{\theta_2}S_{\beta} \\ + A_4\dot{\theta}_2C_{\beta}C_{\theta_1}S_{\theta_2} +  {1\over{2}}(A_4\dot{\alpha}C_{\theta_2}S_{\beta}S_{\beta}) \\
\end{align*}
\begin{align*}
c_{25} =A_4\dot{d}C_{\beta}C_{\theta_2}S_{\beta} + {1\over{2}}(A_4d\dot{\alpha}C_{\theta_1}C_{\theta_2}S_{\beta}) + 
A_4d\dot{\theta}_1C_{\beta}C_{\theta_1}C_{\theta_2}  -
{1\over{2}}(A_4d\dot{\beta}C_{\theta_2}S_{\beta}S_{\beta}) - A_4d\dot{\theta}_2C_{\beta}S_{\beta}S_{\theta_2} \\
c_{26} = A_4d\dot{\theta}_2S_{\beta}S_{\theta_2} - {1\over{2}}(A_4d\dot{\beta}C_{\beta}C_{\theta_2}) - A_4\dot{d}C_{\theta_2}S_{\beta} +  A_4\dot{d}C_{\beta}C_{\theta_1}S_{\theta_2} -  {1\over{2}}(A_4d\dot{\beta}C_{\theta_1}S_{\beta}S_{\theta_2}) \\- A_4d\dot{\theta}_1C_{\beta}S_{\beta}S_{\theta_2} - {1\over{2}}(A_4d\dot{\alpha}S_{\beta}S_{\beta}S_{\theta_2}) + A_4d\dot{\theta}_2C_{\beta}C_{\theta_1}C_{\theta_2} \\
c_{31} = {1\over{2}}(\dot{\alpha}S_{\gamma}(2A_2C_{\gamma} + A_3C_{\beta} + 2A_5dS_{\theta_2}))  + {1\over{2}}(3A_5\dot{d}C_{\theta_2}S_{\gamma}S_{\theta_1}) +  {1\over{2}}(A_5d\dot{\gamma}C_{\gamma}C_{\theta_2}S_{\theta_1})  \\ + {1\over{2}}(3A_5d\dot{\theta}_1C_{\theta_1}C_{\theta_2}S_{\gamma})  -{1\over{2}}(3A_5d\dot{\theta}_2S_{\gamma}S_{\theta_1}S_{\theta_2})\\
c_{32} = -{1\over{4}}(A_3S_{\beta-\gamma})(2\dot{\beta} - \dot{\gamma}))\\
c_{33} = {1\over{2}}(A_5\dot{d}(C_{\gamma}S_{\theta_2} - C_{\theta_1}C_{\theta_2}S_{\gamma}))  -{1\over{4}} (A_3\dot{\beta}S_{\beta-\gamma})  + {1\over{2}}(A_5d\dot{\theta}_2(C_{\gamma}C_{\theta_2} + C_{\theta_1}S_{\gamma}S_{\theta_2})) \\ - {1\over{2}}(A_5d\dot{\alpha}C_{\gamma}C_{\theta_2}S_{\theta_1}) + {1\over{2}}(A_5d\dot{\theta}_1C_{\theta_2}S_{\gamma}S_{\theta_1})
\\
c_{34} = {1\over{2}}(A_5\dot{\gamma}C_{\theta_1}C_{\theta_2}S_{\gamma}) - A_5\dot{\theta}_2C_{\theta_2}S_{\gamma} - {1\over{2}}(A_5\dot{\gamma}C_{\gamma}S_{\theta_2}) + A_5\dot{\theta}_1C_{\gamma}C_{\theta_2}S_{\theta_1} \\+ A_5\dot{\theta}_2C_{\gamma}C_{\theta_1}S_{\theta_2} + {1\over{2}}(A_5\dot{\alpha}C_{\theta_2}S_{\gamma}S_{\theta_1})\\
c_{35} = A_5ddC_{\gamma}C_{\theta_2}S_{\theta_1} + {1\over{2}}(A_5d\dot{\alpha}C_{\theta_1}C_{\theta_2}S_{\gamma})  \\- {1\over{2}}(A_5d\dot{\gamma}C_{\theta_2}S_{\gamma}S_{\theta_1}) - A_5d\dot{\theta}_2C_{\gamma}S_{\theta_1}S_{\theta_2} + A_5d\dot{\theta}_1C_{\gamma}C_{\theta_1}C_{\theta_2}
\\
c_{36} = A_5d\dot{\theta}_2S_{\gamma}S_{\theta_2} - {1\over{2}}(A_5d\dot{\gamma}C_{\gamma}C_{\theta_2}) - A_5\dot{d}C_{\theta_2}S_{\gamma} + A_5\dot{d}C_{\gamma}C_{\theta_1}S_{\theta_2} -{1\over{2}} (A_5d\dot{\gamma}C_{\theta_1}S_{\gamma}S_{\theta_2}) \\ - A_5d\dot{\theta}_1C_{\gamma}S_{\theta_1}S_{\theta_2} -{1\over{2}} (A_5d\dot{\alpha}S_{\gamma}S_{\theta_1}S_{\theta_2}) + A_5d\dot{\theta}_2C_{\gamma}C_{\theta_1}C_{\theta_2}\\
c_{41} = d\dot{\theta}_2mS_{\theta_1} - d\dot{\alpha}m - A_4\dot{\alpha}C_{\beta}S_{\theta_2} - A_5\dot{\alpha}C_{\gamma}S_{\theta_2} + d\dot{\alpha}mC_{\theta_1}^2C_{\theta_2}^2 + {1\over{2}}(A_4\dot{\theta}_1C_{\beta}C_{\theta_1}C_{\theta_2}) \\ + {1\over{2}}(A_5\dot{\theta}_1C_{\gamma}C_{\theta_1}C_{\theta_2}) -{1\over{2}}(3A_4\dot{\beta}C_{\theta_2}S_{\beta}S_{\theta_1}) - {1\over{2}}(A_4\dot{\theta}_2C_{\beta}S_{\theta_1}S_{\theta_2})  - {1\over{2}}(3A_5\dot{\gamma}C_{\theta_2}S_{\gamma}S_{\theta_1})\\ -{1\over{2}}(A_5\dot{\theta}_2C_{\gamma}S_{\theta_1}S_{\theta_2})-d\dot{\theta}_1mC_{\theta_1}C_{\theta_2}S_{\theta_2}\\
c_{42} = A_4\dot{\beta}C_{\theta_1}C_{\theta_2}S_{\beta} - (A_4\dot{\theta}_2C_{\theta_2}S_{\beta})/2 - A_4\dot{\beta}C_{\beta}S_{\theta_2} + {1\over{2}}(A_4\dot{\theta}_1C_{\beta}C_{\theta_2}S_{\theta_1}) \\ + {1\over{2}}(A_4\dot{\theta}_2C_{\beta}C_{\theta_1}S_{\theta_2}) - {1\over{2}}(A_4\dot{\alpha}C_{\theta_2}S_{\beta}S_{\theta_1}) \\
c_{43} = A_5\dot{\gamma}C_{\theta_1}C_{\theta_2}S_{\gamma} - {1\over{2}}(A_5\dot{\theta}_2C_{\theta_2}S_{\gamma})  - A_5\dot{\gamma}C_{\gamma}S_{\theta_2}  + {1\over{2}}(A_5\dot{\theta}_1C_{\gamma}C_{\theta_2}S_{\theta_1})  \\ + {1\over{2}}(A_5\dot{\theta}_2C_{\gamma}C_{\theta_1}S_{\theta_2}) -
{1\over{2}}(A_5\dot{\alpha}C_{\theta_2}S_{\gamma}S_{\theta_1}) \\
c_{45} = d\dot{\theta}_1m(S_{\theta_2}^2 - 1) - {1\over{2}}(\dot{\alpha}C_{\theta_1}C_{\theta_2}(A_4C_{\beta}  + A_5C_{\gamma} + 2dmS_{\theta_2})) \\ - {1\over{2}}(A_4\dot{\beta}C_{\beta}C_{\theta_2}S_{\theta_1})  - {1\over{2}}(A_5\dot{\gamma}C_{\gamma}C_{\theta_2}S_{\theta_1})\\
\end{align*}
\begin{align*}
c_{46} = {1\over{2}}(A_4\dot{\beta}(C_{\theta_2}S_{\beta} - C_{\beta}C_{\theta_1}S_{\theta_2})) + {1\over{2}}(A_5\dot{\gamma}(C_{\theta_2}S_{\gamma} - C_{\gamma}C_{\theta_1}S_{\theta_2})) \\ - d\dot{\theta}_2m + {1\over{2}}(\dot{\alpha}S_{\theta_1}(2dm + A_4C_{\beta}S_{\theta_2} + A_5C_{\gamma}S_{\theta_2}))  \\
c_{51} = 2d^2\dot{\theta}_2mC_{\theta_1}C_{\theta_2}^2 - {1\over{2}}(d^2\dot{\theta}_2mC_{\theta_1}) +  {1\over{2}}(A_4\dot{d}C_{\beta}C_{\theta_1}C_{\theta_2})  \\ + {1\over{2}}(A_5\dot{d}C_{\gamma}C_{\theta_1}C_{\theta_2}) - {1\over{2}}(3A_4d\dot{\beta}C_{\theta_1}C_{\theta_2}S_{\beta})  - {1\over{2}}(A_4d\dot{\theta}_1C_{\beta}C_{\theta_2}S_{\theta_1}) \\ - {1\over{2}}(A_4d\dot{\theta}_2C_{\beta}C_{\theta_1}S_{\theta_2}) - {1\over{2}}(3A_5d\dot{\gamma}C_{\theta_1}C_{\theta_2}S_{\gamma}) - {1\over{2}}(A_5d\dot{\theta}_1C_{\gamma}C_{\theta_2}S_{\theta_1}) \\ - {1\over{2}}(A_5d\dot{\theta}_2C_{\gamma}C_{\theta_1}S_{\theta_2}) + 2d\dot{d}mC_{\theta_1}C_{\theta_2}S_{\theta_2} \\ - {1\over{2}}(d^2\dot{\theta}_1mC_{\theta_2}S_{\theta_1}S_{\theta_2}) - d^2\dot{\alpha}mC_{\theta_1}C_{\theta_2}^2S_{\theta_1} \\
c_{52} = {1\over{2}}(A_4\dot{d}C_{\beta}C_{\theta_2}S_{\theta_1}) - {1\over{2}}(A_4d\dot{\alpha}C_{\theta_1}C_{\theta_2}S_{\beta}) \\- A_4d\dot{\beta}C_{\theta_2}S_{\beta}S_{\theta_1} - {1\over{2}}(A_4d\dot{\theta}_2C_{\beta}S_{\theta_1}S_{\theta_2}) + {1\over{2}}(A_4d\dot{\theta}_1C_{\beta}C_{\theta_1}C_{\theta_2}) \\
c_{53} = {1\over{2}}(A_5\dot{d}C_{\gamma}C_{\theta_2}S_{\theta_1}) - {1\over{2}}(A_5d\dot{\alpha}C_{\theta_1}C_{\theta_2}S_{\gamma}) \\ - A_5d\dot{\gamma}C_{\theta_2}S_{\gamma}S_{\theta_1}  - {1\over{2}}(A_5d\dot{\theta}_2C_{\gamma}S_{\theta_1}S_{\theta_2}) + {1\over{2}}(A_5d\dot{\theta}_1C_{\gamma}C_{\theta_1}C_{\theta_2}) \\
c_{54} = - {1\over{2}}(\dot{\alpha}C_{\theta_1}C_{\theta_2}(A_4C_{\beta} + A_5C_{\gamma})) - {1\over{2}}(A_4\dot{\beta}C_{\beta}C_{\theta_2}S_{\theta_1}) - {1\over{2}}(A_5\dot{\gamma}C_{\gamma}C_{\theta_2}S_{\theta_1})\\
c_{55} = {1\over{2}}(d\dot{\alpha}C_{\theta_2}S_{\theta_1}(A_4C_{\beta} + A_5C_{\gamma} + dmS_{\theta_2})) - {1\over{2}}(A_4d\dot{\beta}C_{\beta}C_{\theta_1}C_{\theta_2}) - {1\over{2}}(A_5d\dot{\gamma}C_{\gamma}C_{\theta_1}C_{\theta_2}) \\
c_{56} = {1\over{2}}(d\dot{\alpha}C_{\theta_1}(dm + A_4C_{\beta}S_{\theta_2} + A_5C_{\gamma}S_{\theta_2})) + {1\over{2}}(A_4d\dot{\beta}C_{\beta}S_{\theta_1}S_{\theta_2}) + {1\over{2}}(A_5d\dot{\gamma}C_{\gamma}S_{\theta_1}S_{\theta_2}) \\
c_{61} =  {1\over{2}}(3A_4d\dot{\beta}S_{\beta}S_{\theta_1}S_{\theta_2}) - {1\over{2}}(d^2\dot{\theta}_1mC_{\theta_1})  - d^2\dot{\theta}_1mC_{\theta_1}C_{\theta_2}^2  - A_4d\dot{\alpha}C_{\beta}C_{\theta_2} - A_5d\dot{\alpha}C_{\gamma}C_{\theta_2} \\- {1\over{2}}(A_4d\dot{d}C_{\beta}S_{\theta_1}S_{\theta_2})  - {1\over{2}}(A_5d\dot{d}C_{\gamma}S_{\theta_1}S_{\theta_2})  - {1\over{2}}(A_4d\dot{\theta}_1C_{\beta}C_{\theta_1}S_{\theta_2})  - {1\over{2}}(A_4d\dot{\theta}_2C_{\beta}C_{\theta_2}S_{\theta_1}) \\ - {1\over{2}}(A_5d\dot{\theta}_1C_{\gamma}C_{\theta_1}S_{\theta_2})  - {1\over{2}}(A_5d\dot{\theta}_2C_{\gamma}C_{\theta_2}S_{\theta_1}) - 2d\dot{d}mS_{\theta_1} + {1\over{2}}(3A_5d\dot{\gamma}S_{\gamma}S_{\theta_1}S_{\theta_2}) - d^2\dot{\alpha}mC_{\theta_1}^2C_{\theta_2}S_{\theta_2} \\
c_{62} = {1\over{2}}(A_4d\dot{\theta}_2S_{\beta}S_{\theta_2}) - A_4d\dot{\beta}C_{\beta}C_{\theta_2} -  {1\over{2}}(A_4d\dot{d}C_{\theta_2}S_{\beta}) + {1\over{2}}(A_4d\dot{d}C_{\beta}C_{\theta_1}S_{\theta_2})\\ - A_4d\dot{\beta}C_{\theta_1}S_{\beta}S_{\theta_2} - {1\over{2}}(A_4d\dot{\theta}_1C_{\beta}S_{\theta_1}S_{\theta_2})  + {1\over{2}}(A_4d\dot{\alpha}S_{\beta}S_{\theta_1}S_{\theta_2}) + {1\over{2}}(A_4d\dot{\theta}_2C_{\beta}C_{\theta_1}C_{\theta_2})\\
c_{63} = {1\over{2}}(A_5d\dot{\theta}_2S_{\gamma}S_{\theta_2}) - A_5d\dot{\gamma}C_{\gamma}C_{\theta_2} - {1\over{2}}(A_5d\dot{d}C_{\theta_2}S_{\gamma})  +{1\over{2}}(A_5d\dot{d}C_{\gamma}C_{\theta_1}S_{\theta_2}) \\ - A_5d\dot{\gamma}C_{\theta_1}S_{\gamma}S_{\theta_2} - {1\over{2}}(A_5d\dot{\theta}_1C_{\gamma}S_{\theta_1}S_{\theta_2})  + {1\over{2}}(A_5d\dot{\alpha}S_{\gamma}S_{\theta_1}S_{\theta_2}) + {1\over{2}}(A_5d\dot{\theta}_2C_{\gamma}C_{\theta_1}C_{\theta_2})\\
c_{64} = {1\over{2}}(A_4\dot{\beta}(C_{\theta_2}S_{\beta} - C_{\beta}C_{\theta_1}S_{\theta_2})) + {1\over{2}}(A_5\dot{\gamma}(C_{\theta_2}S_{\gamma}  - C_{\gamma}C_{\theta_1}S_{\theta_2})) + {1\over{2}}(\dot{\alpha}S_{\theta_1}S_{\theta_2}(A_4C_{\beta} + A_5C_{\gamma})) \\
c_{65} = {1\over{2}}(d^2\dot{\alpha}mC_{\theta_1}) - d^2\dot{\alpha}mC_{\theta_1}C_{\theta_2}^2  +  {1\over{2}}(A_4d\dot{\alpha}C_{\beta}C_{\theta_1}S_{\theta_2}) + {1\over{2}}(A_5d\dot{\alpha}C_{\gamma}C_{\theta_1}S_{\theta_2})  \\ + {1\over{2}}(A_4d\dot{\beta}C_{\beta}S_{\theta_1}S_{\theta_2}) + {1\over{2}}(A_5d\dot{\gamma}C_{\gamma}S_{\theta_1}S_{\theta_2}) \\
c_{66} = {1\over{2}}(d\dot{\alpha}C_{\theta_2}S_{\theta_1}(A_4C_{\beta} + A_5C_{\gamma})) -  {1\over{2}}(A_5d\dot{\gamma}(S_{\gamma}S_{\theta_2} + C_{\gamma}C_{\theta_1}C_{\theta_2})) - {1\over{2}}(A_4d\dot{\beta}(S_{\beta}S_{\theta_2}  + C_{\beta}C_{\theta_1}C_{\theta_2}))\\
\end{align*}
\begin{align*}
g_2 = {1\over{2}}gl_Bcos\beta(2m+m_B+2m_J),\quad 
g_3 = {1\over{2}}gl_Jcos\gamma(2m+m_J), \\
g_4 = -gmcos\theta_1cos\theta_2,\quad
g_5 = gmdcos\theta_2sin\theta_1,\quad 
g_6 = gmdcos\theta_1sin\theta_2
\end{align*}

\bibliographystyle{apacite}
\bibliography{biblio}

\end{document}